\let\@authorsaddresses\@empty
\pgfplotsset{compat=1.17}
\DeclareMathOperator{\poly}{polylog}
\newcommand{\E}{\mathbf{E}}
\newcommand{\mc}[1]{\mathcal{#1}}
\newcommand{\mrkd}{\mathrm{Mrkd}}
\newcommand{\upmrkd}{\mathrm{Mrkd}_U}
\newcommand{\lmrkd}{\mathrm{Mrkd}_L}
\newcommand{\umrkd}{\mathrm{Unmrkd}}
\newcommand{\blank}{\mathrm{Blank}}
\newcommand{\nbr}[2]{\mathsf{NbrHT}[#1,#2]}
\newcommand{\belowHT}[1]{\mathsf{BelowHT}[#1]}
\numberwithin{equation}{section}
\newtheorem{theorem}{Theorem}[section]
\newtheorem{lemma}{Lemma}[section]
\newtheorem{remark}{Remark}[section]
\newtheorem{claim}{Claim}[lemma]
\newtheorem{condition}{Condition}[section]
\renewcommand{\E}{\mathbf{E}}
\newcommand{\whp}{\emph{whp}\xspace}
\newcommand{\ind}{\mathbf{1}}
\newcommand{\Ind}[1]{\ind\left[#1\right]}
\newlist{caselist}{enumerate}{1}
\setlist[caselist]{
  label=\textbf{Case~\arabic*:},
}
\begin{document}

\title{Parallel Batch Dynamic Vertex Coloring in $O(\log \Delta)$ Amortized Update Time}

\author{Chase Hutton}
\affiliation{
    \institution{University of Maryland}
    \country{United States}
}

\author{Adam Melrod}
\affiliation{
    \institution{Harvard University}
    \country{United States}
}

\begin{abstract}
We present the first parallel batch-dynamic algorithm for maintaining a proper $(\Delta + 1)$-vertex coloring. Our approach builds on a new sequential dynamic algorithm inspired by the work of Bhattacharya et al.\ (SODA'18). The resulting randomized algorithm achieves $O(\log \Delta)$ expected amortized update time and, for any batch of $b$ updates, has parallel span $O(\operatorname{polylog} b + \operatorname{polylog} n)$ with high probability.
\end{abstract}

\maketitle

\section{Introduction}\label{sec:introduction}
Vertex coloring is a fundamental problem in computer science with many applications.
For an undirected graph $G = (V,E)$ and an integer parameter $c > 0$, a proper $c$-vertex coloring of $G$ assigns to every vertex $u \in V$ a color from a palette $\mc C = \{0,\dots,c-1\}$ such that no edge has both endpoints with the same color.
Typically, the goal is to find a proper coloring using as few colors as possible.
Unfortunately, even approximating the smallest such $c$ is hard: for any constant $\epsilon > 0$, there is no polynomial-time algorithm with approximation factor $n^{1-\epsilon}$ unless $P \neq NP$~\cite{hardness-coloring-1}.
On the positive side, a textbook greedy algorithm computes a $(\Delta+1)$-coloring in $O(|E|)$ time, where $\Delta$ is the maximum degree of $G$.
The algorithm maintains a palette $\mc C_u$ for each vertex $u$, initially $\mc C_u = \{0,\dots,\Delta\}$.
While there exists an uncolored vertex $u$, the algorithm scans $\mc C_u$ to find a color not used by any neighbor of $u$, assigns this color to $u$, and then removes that color from the palettes of $u$’s neighbors.
Such a color always exists because $u$ has at most $\Delta$ neighbors and $|\mc C_u| = \Delta+1$.

In this paper, we study the problem of maintaining a proper $(\Delta+1)$-coloring in the \emph{dynamic} setting.
In the classical (sequential) dynamic model, the graph undergoes a sequence of \emph{updates}, where each update inserts or deletes a single edge.
The goal is to design a dynamic algorithm that maintains a proper $(\Delta+1)$-coloring $\chi$ as the graph changes.
The time taken by the algorithm to process an update is called its \emph{update time}.
It is straightforward to design a dynamic algorithm with $O(\Delta)$ worst-case update time: the greedy $(\Delta+1)$-coloring algorithm described above can be adapted to recolor locally after each update.
A more interesting question is whether one can achieve \emph{sublinear} update time, i.e., $o(\Delta)$.
A sequence of works~\cite{Dynamic-Coloring-18,Bhattacharya-Coloring-Constant,Henzinger2019ConstantTimeD} answered this question in the affirmative, with the best bound due to Bhattacharya et al.~\cite{Bhattacharya-Coloring-Constant}, stated below.
We call a dynamic graph \emph{$\Delta$-bounded} if at all times its maximum degree is at most $\Delta$.

\begin{theorem}[\cite{Bhattacharya-Coloring-Constant}]
    There is a randomized data structure for maintaining a $(\Delta+1)$-coloring in a $\Delta$-bounded graph that, given any sequence of $t$ updates, takes total time $O(n \log n + n\Delta + t)$ in expectation and with high probability.
    The space usage is $O(n \log n + n\Delta + m)$, where $m$ is the maximum number of edges present at any time.
\end{theorem}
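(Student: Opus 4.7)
The plan is to augment each vertex $v$ with three pieces of state: its current color $\chi(v)$, a hash-table-based adjacency list (for expected $O(1)$ edge lookup), and a \emph{blocked-counts} array $b_v[0..\Delta]$ where $b_v[c]$ records how many current neighbors of $v$ have color $c$. These structures support two primitives in constant expected time: test whether a proposed color $c$ is free at $v$ (check $b_v[c]=0$), and propagate a color change of $v$ to its neighbors (one increment and one decrement per neighbor). Initialization runs the textbook greedy $(\Delta+1)$-coloring and populates the arrays in total time and space $O(n\Delta)$, plus an $O(n\log n)$ overhead for the hash tables, matching the preprocessing terms in the bound.

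The update rule is minimal: a deletion simply detaches the edge and decrements two counts. An insertion $(u,v)$ increments the two counts and, if $\chi(u)=\chi(v)$, triggers a single recoloring of one endpoint, say $u$. To recolor $u$, I would repeatedly sample $c$ uniformly at random from $\{0,\dots,\Delta\}$ and accept the first $c$ with $b_u[c]=0$---a free color must exist since $u$ has at most $\Delta$ neighbors and the palette has size $\Delta+1$. Once $\chi(u)$ is updated, each of the $d_u$ neighbors receives two constant-time count updates. Crucially, no cascade is possible: the accepted color is unblocked by construction.

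The crux of the analysis---and the expected main obstacle---is showing that the amortized expected work per update is $O(1)$ despite a single recoloring costing $\Theta(\Delta)$ in the worst case. The plan is to establish a \emph{uniformity invariant}: at every time step and for every vertex $v$, the conditional distribution of $\chi(v)$ places mass at most $O(1/\Delta)$ on any particular color. Given the invariant, the probability that a freshly inserted edge $(u,v)$ is monochromatic is $O(1/\Delta)$, so recolorings fire with probability $O(1/\Delta)$ per insertion and contribute $O(\Delta)\cdot O(1/\Delta)=O(1)$ expected work. The invariant itself is preserved because rejection sampling returns a color uniform over the currently free palette, and $d_v\le\Delta$ guarantees the free palette is non-empty. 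High-probability bounds then follow by a Chernoff bound on the number of recolorings across the $t$ updates, whose mean is $O(t/\Delta)$.

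Formalizing the uniformity invariant against the oblivious update sequence is where the care is needed. The marginal distribution of $\chi(v)$ is conditioned on the entire history of random choices the algorithm has made, and those choices dictated which other vertices were recolored. The approach is a coupling argument: show that the algorithm's colors can be simulated by a reference process in which each vertex independently holds a uniform color on $\{0,\dots,\Delta\}$, up to small corrections arising when $d_v$ is close to $\Delta$ and the free palette has size $o(\Delta)$. Those boundary contributions are bounded globally by $O(n\Delta)$---the total degree ever present---and absorbed into the preprocessing term.
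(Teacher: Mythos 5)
There is a genuine gap, and it is precisely the obstacle that makes $(\Delta+1)$-coloring hard compared to $2\Delta$-coloring. Your \emph{uniformity invariant} — that the conditional distribution of $\chi(v)$ places mass at most $O(1/\Delta)$ on any particular color — is false when the palette has only $\Delta+1$ colors. A vertex $u$ with $\deg(u)=\Delta$ whose neighbors use $\Delta$ distinct colors has a free palette of size exactly $1$, so after your rejection-sampling step its new color is \emph{deterministic} given the neighborhood, not spread over $\Omega(\Delta)$ colors. Consequently the probability that a subsequently inserted edge $(u,v)$ is monochromatic can be constant rather than $O(1/\Delta)$, and the expected cost of an insertion is $\Theta(\Delta)$ rather than $O(1)$. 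Your proposed fix — absorbing the ``boundary contributions'' where the free palette has size $o(\Delta)$ into the $O(n\Delta)$ preprocessing term — does not work, because these events are not charged against a one-time budget per vertex: an oblivious adversary can repeatedly insert and delete edges incident to the same high-degree vertex over the $t$ updates, triggering an $\Omega(\Delta)$-cost recoloring with constant probability each time, for total expected cost $\Omega(t\Delta)$. (This is exactly why the folklore argument works with $2\Delta$ colors, where the free palette always has size at least $\Delta$, but breaks at $\Delta+1$.)

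The cited result cannot be obtained by recoloring only with \emph{free} (blank) colors. The known proofs enlarge the sampling space to include colors that are blank \emph{or unique} among a carefully chosen subset of neighbors (so the sampling palette has size $\Omega(\deg)$ relative to a level structure), accept that choosing a unique color creates one new conflict, and control the resulting recoloring chain via a hierarchical partition of the vertices into $O(\log\Delta)$ levels with an amortized token/potential argument; the $O(1)$-amortized version layers further machinery on top of that. None of this structure is present in your proposal, and the coupling argument you sketch cannot substitute for it, since the failure is in the marginal distribution of a single vertex's color, not in correlations across vertices.
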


Motivated by these sequential results, we ask whether similar guarantees can be obtained in the \emph{parallel batch-dynamic} setting.
Here, updates arrive in \emph{batches}, and the goal is to maintain a proper $(\Delta+1)$-coloring while minimizing both the total \emph{work} (the number of operations performed) and the \emph{span} (the length of the longest chain of dependencies).
Batching updates removes the serial bottleneck of classical dynamic algorithms, and has therefore attracted considerable recent interest, with progress on batch-dynamic algorithms for several fundamental graph problems~\cite{GT24, acar2019batchconnect, yesantharao2021parallel,acar2020changeprop, dhulipala2019parallel, ghaffari2025, BB25a, ZKG+25, wang2020closest}.
In this paper, we extend this line of work by giving the first parallel batch-dynamic algorithm that maintains a proper $(\Delta+1)$-vertex coloring.

We state our main result below.

\begin{theorem}\label{thm:main-thm}
There exists a randomized algorithm that maintains a proper $(\Delta+1)$-vertex coloring in a dynamic $\Delta$-bounded graph using $O(\log \Delta)$ expected amortized work per update and, for any batch of $b$ updates, has parallel span $O(\poly b + \poly n)$ with high probability.
\end{theorem}

\subsection{Related Work} 

\subsubsection{Sequential setting.}
Above we mentioned three primary works in the sequential dynamic setting for $(\Delta+1)$-coloring and stated the best-known bounds achieved by Bhattacharya et al.~\cite{Bhattacharya-Coloring-Constant}.
We briefly review the results of the other two papers.

In~\cite{Dynamic-Coloring-18}, Bhattacharya et al.\ give the first dynamic algorithm for $(\Delta+1)$-coloring with $o(\Delta)$ update time.
In particular, they present a randomized algorithm that maintains a $(\Delta+1)$-vertex coloring in a $\Delta$-bounded graph with $O(\log \Delta)$ expected-amortized update time.
Since our main algorithm builds heavily on their approach, we provide a detailed overview of this result in Section~\ref{sec:BCHN-alg}.

In~\cite{Henzinger2019ConstantTimeD}, Henzinger and Peng give a dynamic algorithm for maintaining a proper $(\Delta+1)$-vertex coloring in a $\Delta$-bounded graph with $O(1)$ expected-amortized update time.
Whereas the $O(1)$ solution of Bhattacharya et al.~\cite{Bhattacharya-Coloring-Constant} builds on the $O(\log \Delta)$ solution of Bhattacharya et al.~\cite{Dynamic-Coloring-18}, Henzinger and Peng employ a quite different approach based on assigning random ranks to vertices.

\subsubsection{Parallel batch-dynamic setting.}
In the parallel batch-dynamic setting, existing dynamic coloring results are primarily arboricity-based.
In~\cite{LiuEtAl21BatchDynamicKCoreArxiv}, Liu, Shi, Yu, Dhulipala, and Shun introduce a parallel level data structure (PLDS).
Combined with the dynamic coloring framework of Henzinger et al.~\cite{Henzinger-arboricity-coloring}, they obtain a parallel batch-dynamic vertex-coloring algorithm that maintains an $O(\alpha \log n)$-coloring, where $\alpha$ is the arboricity of the graph, against an oblivious adversary using $O(\log^2 n)$ expected-amortized work per update and $O(\log^2 n \log\log n)$ span with high probability.

\begin{remark}
In this paper we focus on parallelizing the $O(\log \Delta)$ algorithm of Bhattacharya et al.~\cite{Dynamic-Coloring-18}.
At present, we do not know how to parallelize either of the $O(1)$-update-time algorithms~\cite{Bhattacharya-Coloring-Constant,Henzinger2019ConstantTimeD}.
Obtaining an $O(1)$ expected-amortized update time in the parallel batch-dynamic setting---either via a parallelization of one of these algorithms or via a new approach—is an interesting  problem which we leave open.
\end{remark}

\section{Technical Overview}
Here we outline the main technical ideas behind our parallel dynamic algorithm for $(\Delta + 1)$-coloring that has $O(\log \Delta)$ expected-amortized update time. The full details are given in Section \ref{sec:algorithm}. To build intuition for our algorithm, we start with a couple of warmups. Throughout this overview, we assume a $\Delta$-bounded input graph $G = (V,E)$ that is dynamically changing through a sequences of updates. Additionally, for ease of exposition, we forgo the details about the data structures maintained by these algorithms.

\subsection{Warmup I: a parallel dynamic algorithm for $2\Delta$-coloring.} To start, let us consider how we might maintain a $2\Delta$-coloring in the parallel batch-dynamic setting. We will see later that many of the ideas used here carry over to the $(\Delta + 1)$-coloring algorithms discussed in Warmup II.
 
\subsubsection{A sequential algorithm with $O(1)$ expected update time.}
It turns out that there is a rather simple folklore algorithm to maintain a $2\Delta$ coloring in the sequential setting using only $O(1)$ expected time. Fix a color palette $\mc C$ of $2\Delta$ colors and let $\chi$ denote the dynamic coloring we aim to maintain. Each vertex $u$ stores the last time $\tau_u$ at which it was recolored. Now suppose that $\chi$ is proper at time $\tau$ and consider an edge update $e = (u,v)$. If $e$ is a deletion or $\chi(u) \not= \chi(v)$, then we do nothing. Otherwise, when $\chi(u) = \chi(v)$, we recolor the endpoint of $e$ that was most recently recolored. Without loss of generality, assume this endpoint is $u$, so $\tau_u > \tau_v$. To recolor $u$, we first compute the set of \emph{blank} colors for $u$, that is the colors in $\mc C$ which are not used by any neighbor of $u$. Denote this set by $B_u$. The new color for $u$ is then choose uniformly at random from $B_u$. By construction, none of $u$'s neighbors (including $v$) have the same color as $u$ and thus $\chi$ is now proper.

To analyze the expected time required for update $e$, we use the principle of deferred decision. The only non-trivial work is done in that case that $e$ is monochromatic, in which case computing $B_u$ requires $O(\Delta)$ work. Observe that $e$ is monochromatic at time $\tau$ only if $u$ was colored with $\chi(v)$ at time $\tau_u$. To compute the probability of this event we condition on all the random choices made by the algorithm up until just before time $\tau_u$. Since $\tau_u > \tau_v$ this fixes $\chi(v)$. At time $\tau_u$, $u$ chooses its color uniformly at random from $B_u$ and thus the probability that $\chi(u) = \chi(v)$ is at most $1/|B_u| \leq 1/\Delta$ (since $|\mc C| = 2\Delta$ and $\deg(u) \leq \Delta$). Therefore, the expected time spent on the addition of edge $e$ is $O(\Delta) \cdot 1/\Delta = O(1)$. Note that this analysis crucially relies on the fact that the set of edge updates is oblivious to the choices of the algorithm, that is the adversary is assumed to be oblivious. 

\subsubsection{Parallelizing the sequential algorithm.} Now let us consider how to parallelize this algorithm. As before, assume that at time $\tau$ our coloring $\chi$ is proper. In the parallel setting, our update is now a batch of edge updates $S$. First compute the set of inserted edges that are monochromatic, i.e. compute $S_\text{mon} = \{(u,v) = e \in S : \chi(u) = \chi(v)\}$. For each of these edges we will need to recolor at least one of their endpoints. As in the sequential version, we will recolor the endpoints that were most recently recolored. Denote this set by  $V_\blank := \{u : e = (u,v) \in S_{\text{mon}} \text{ and } \tau_u \geq \tau_v\}$. As a first step, we remove the color from each vertex in $V_\blank$. We refer to these vertices as \emph{blank} vertices and denote the \emph{blank subgraph} induced by them as $G[V_\blank]$. We are now left with the task of coloring $G[V_\blank]$.

To color the blank subgraph $G[V_\blank]$ we will use a static parallel $(\deg+1)$-coloring algorithm which is described next.

\paragraph{A parallel $(\deg+1)$-coloring subroutine.}
We give a parallel $(\deg+1)$-coloring algorithm to color a graph $H$. Each vertex $u$ has a palette $C_u$ of available colors with $|C_u| \ge \deg_H(u) + 1$. The algorithm proceeds in a sequence of rounds in which every vertex independently samples a tentative color from its current palette and keeps it only if no neighbor chose the same color. At the end of each round the colored vertices are removed and the uncolored vertices update their palettes. In Section~\ref{sec:partial-coloring-alg}, we show that this procedure colors a constant fraction of the remaining vertices in each round, runs in $O(|E(H)|)$ expected work, and has $O(\log^2 n)$ span with high probability.

\paragraph{Coloring the blank subgraph.} We color $G[V_\blank]$ using the parallel $(\deg + 1)$-coloring routine described above. We instantiate each vertex's color palette with $B_u$, so $C_u := B_u$. Let $c_u$ be the color assigned to the blank vertex $u$ during the $(\deg + 1)$-coloring routine. We make two observations:
\begin{enumerate}
    \item At the start of the routine, $|C_u| \geq \deg_{G[V_\blank]}(u) + \Delta$.
    \item Throughout the routine, $|C_u| \geq \Delta$.
\end{enumerate}
The first observation implies the second and the second observation implies that $c_u$ was chosen randomly from a palette of blank colors of size at least $\Delta$. We prove $(1)$ in the following claim.

\begin{claim}\label{clm:many-blank-colors-2delta}
For every $u \in V_\blank$ we have
\[
  |C_u| \ge \deg_{G[V_{\blank}]}(u) + \Delta.
\]
\end{claim}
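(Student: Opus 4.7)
The plan is to count forbidden colors for $u$ directly and exploit the fact that neighbors lying inside $V_\blank$ contribute no forbidden color, since they have been uncolored.

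First, I would recall the setup: the global palette has $|\mathcal{C}| = 2\Delta$ colors, and $C_u = B_u$ is by definition the set of colors not used by any (currently colored) neighbor of $u$ in $G$. A neighbor $v$ of $u$ that lies in $V_\blank$ has been stripped of its color in the blanking step, and therefore contributes nothing to the forbidden set of $u$. Only the \emph{non-blank} neighbors of $u$ can forbid colors, and there are exactly $\deg_G(u) - \deg_{G[V_\blank]}(u)$ of them.

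Next, I would bound the number of forbidden colors by the number of non-blank neighbors, each of which forbids at most one color:
\[
  |\mathcal{C} \setminus C_u| \;\le\; \deg_G(u) - \deg_{G[V_\blank]}(u) \;\le\; \Delta - \deg_{G[V_\blank]}(u),
\]
where the last inequality uses the $\Delta$-boundedness of $G$. Rearranging gives
\[
  |C_u| \;\ge\; 2\Delta - \bigl(\Delta - \deg_{G[V_\blank]}(u)\bigr) \;=\; \Delta + \deg_{G[V_\blank]}(u),
\]
which is exactly the claim.

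Honestly there is no real obstacle here: the claim is a one-line pigeonhole argument once one observes that blank neighbors do not forbid any color. The only subtlety worth flagging explicitly is that $C_u$ is defined at the start of the $(\deg+1)$-coloring subroutine, so the blanked vertices really are uncolored at the moment we measure $|C_u|$; monotonicity of the palette inside the subroutine (observation (2) in the excerpt) then follows from the claim as stated.
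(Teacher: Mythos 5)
Your proof is correct and follows essentially the same argument as the paper: both bound the number of forbidden colors by the number of colored (non-blank) neighbors, namely $\deg_G(u) - \deg_{G[V_\blank]}(u) \le \Delta - \deg_{G[V_\blank]}(u)$, and subtract from the palette size $2\Delta$. No issues.
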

\begin{proof}
By definition $C_u = B_u$ so it suffices to show $|B_u| \geq \deg_{G[V_{\blank}]}(u) + \Delta.$ Fix a vertex $u \in V_\blank$. Let $t$ be the number of distinct colors that appear among the colored neighbors of $u$. A color is not in
$B_u$ exactly if it appears on at least one neighbor of $u$, so
\[
  |B_u| = |\mc C| - t = 2\Delta - t.
\]
The number of colored neighbors of $u$ upper bounds $t$, and $\deg(u) \le \Delta$, so
\[
  t \le \deg(u) - \deg_{G[V_{\blank}]}(u) \le \Delta - \deg_{G[V_{\blank}]}(u).
\]
We then obtain
\[
  |B_u|
    = 2\Delta - t
    \ge 2\Delta - (\Delta - \deg_{G[V_{\blank}]}(u))
    = \Delta + \deg_{G[V_{\blank}]}(u),
\]
as claimed.
\end{proof}

Finally, we show that this parallel algorithm uses $O(|S|)$ expected work for a batch $S$, which translates to $O(1)$ expected work per update. Identifying the set $S_{\text{mon}}$ of monochromatic inserted edges takes $O(|S|)$ work. Additionally by our second observation above, each vertex samples its new color from a palette of size at least $\Delta$. Therefore, for each $e = (u,v) \in S$, the same deferred decision argument as in the sequential case implies that
\[
  \Pr[e \in S_{\text{mon}}] \;\le\; 1/\Delta,
\]
so $\E[|S_{\text{mon}}|] \le |S|/\Delta$. Each monochromatic edge contributes at most one vertex to $V_\blank$, so $|V_\blank| \le |S_{\text{mon}}|$ and hence $\E[|V_\blank|] = O(|S|/\Delta)$. For each $u \in V_\blank$, computing $B_u$ and participating in the static $(\deg+1)$-coloring routine requires $O(\deg(u)) = O(\Delta)$ work, and the total work of the subroutine on $G[V_\blank]$ is $O(|E(G[V_\blank])|) = O(\Delta |V_\blank|)$. Taking expectations, this is
\[
  \E[\,O(\Delta |V_\blank|)\,] \;=\; O\bigl(\Delta \cdot \E[|V_\blank|]\bigr) \;=\; O(|S|),
\]
so the parallel warmup algorithm maintains a $2\Delta$-coloring using $O(1)$ expected work per edge update under an oblivious adversary.

\subsection{Warmup II: a sequential dynamic algorithm for $(\Delta + 1)$-coloring.} We next turn to the problem of $(\Delta + 1)$-coloring. To start, we will recall the the sequential $(\Delta + 1)$-coloring algorithm of Bhattacharya et al.~\cite{Dynamic-Coloring-18} which has an expected-amortized update time of $O(\log \Delta)$. Going forward, we refer to this as the BCHN algorithm. Then we propose a new relaxed variant of the BCHN algorithm which we will ultimately parallelize in section \ref{sec:tech-sec-3}.

\subsubsection{The BCHN algorithm.}\label{sec:BCHN-alg} The main obstacle in moving from $2\Delta$ to $\Delta+1$ colors is that a vertex $u$ may have only one blank color. Therefore, if we only use blank colors for recoloring, the same deferred decision argument used in the previous algorithms no longer applies. On the other hand, recoloring $u$ uniformly from all $\Delta+1$ colors can create many conflicts. BCHN takes a middle ground: when recoloring $u$, it samples uniformly from the set of colors that are either blank for $u$ or unique among (a carefully chosen subset of) the neighbors of $u$. In the case that a unique color is sampled, the corresponding neighbor is recursively recolored by the same process. The neighbors are chosen so that (1) the sampling space is large (so future conflicts are unlikely against an oblivious adversary), (2) any recoloring step creates at most one new conflict, so the recolorings form a chain rather than a branching cascade, and (3) the cost of coloring each vertex in the chain decreases geometrically. To do this, BCHN make use of a structure called a \emph{hierarchical partition}. The cost of maintaining this structure will be $O(\log \Delta)$ amortized, but will allow efficient recoloring according to the procedure above.

\paragraph{Hierarchical partition (HP).}
Fix a sufficiently large constant $\beta>1$ and let $\lambda=\Theta(\log_\beta \Delta)=\Theta(\log\Delta)$.
Each vertex $u$ is assigned a level $\ell(u)\in\{1,\dots,\lambda\}$, and for indices $i\le j$ we write
\[
  N_u(i,j) := \{v \in N(u) : i \le \ell(v) \le j\}.
\]
We call $N_u(\ell(u),\lambda)$ the \emph{up-neighbors} of $u$ and $N_u(1,\ell(u)-1)$ its \emph{down-neighbors}. The HP maintains the following two conditions:
\begin{condition}[Upper]\label{con:upper:bchn}
  $\forall u\in V:\quad |N_u(1,\ell(u))| \le \beta^{\ell(u)}.$
\end{condition}
\begin{condition}[Lower]\label{con:lower:bchn}
  $\forall u\in V \text{ with }\ell(u)>1:\quad |N_u(1,\ell(u)-1)| \ge \beta^{\ell(u)-5}$.
\end{condition}
A vertex is \emph{clean} if it satisfies both conditions and \emph{dirty} otherwise. After each edge update, BCHN runs a greedy \textsc{Maintain-HP} procedure until all vertices are clean: a vertex violating the Upper condition is moved up to the lowest level (above the current level) where Upper holds; otherwise, a vertex violating Lower is moved down to the highest level (below the current level) where it satisfies a strong lower bound $|N_u(1,k-1)|\ge \beta^{k-1}$ (or to the bottom level if no such $k$ exists). This creates \emph{slack} which is crucial for the amortized analysis: if $u$ is moved to level $k>1$, then immediately after the move it has at least $\beta^{k-1}$ down-neighbors, but it can move down only after dropping below $\beta^{k-5}$, so it must lose $\Omega(\beta^{k-1})$ down-neighbors before moving down again.

Using the appropriate data structures, a vertex at level $i$ can be moved to level $k$ using $O(\beta^{\max\{i,k\}})$ work. To pay for this work, a careful accounting using token functions is required. Each vertex and edge has an associated token function, where upon any edge update, the total number of tokens in the system increases by at most $\lambda$. Additionally, by how the movements are chosen, each movement of a vertex from level $i$ to level $k$ releases $\Omega(\beta^{\max\{i,k\}})$ tokens from the system. This leads to the $O(\lambda) = O(\log \Delta)$ amortized update time.

\paragraph{Color palettes.}
Let $\mc C=\{0,1,\dots,\Delta\}$ be the global palette.
For a vertex $u$ at level $i=\ell(u)$, define
\[
  \mc C_u^{+} := \{\chi(v): v\in N_u(i,\lambda)\}
  \qquad\text{and}\qquad
  \mc C_u^- := \mc C\setminus \mc C_u^{+}.
\]
Thus $\mc C_u^-$ is the set of colors \emph{not used by up-neighbors}. A color $c\in \mc C_u^-$ is:
(i) \emph{blank} for $u$ if no down-neighbor has color $c$; and
(ii) \emph{unique} for $u$ if exactly one down-neighbor has color $c$.
Let $B_u$ and $U_u$ denote the blank and unique colors, respectively. A key fact is that there is always many blank plus unique colors:

\begin{claim}[Blank+unique colors]\label{clm:bchn-bu}
For any vertex $u$ at level $i$,
\[
  |B_u\cup U_u| \;\ge\; 1 + \frac{|N_u(1,i-1)|}{2}.
\]
\end{claim}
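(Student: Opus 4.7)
The plan is a short two-step counting argument that combines a lower bound on $|\mc C_u^-|$ with a double-counting upper bound on how many colors in $\mc C_u^-$ can fail to be blank or unique.

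First, I would lower bound $|\mc C_u^-|$. Since $\mc C_u^+$ collects one color per up-neighbor, $|\mc C_u^+| \le |N_u(i,\lambda)|$. Using $\deg(u) = |N_u(1,i-1)| + |N_u(i,\lambda)| \le \Delta$, this gives $|N_u(i,\lambda)| \le \Delta - |N_u(1,i-1)|$, so
\[
  |\mc C_u^-| \;=\; |\mc C| - |\mc C_u^+| \;\ge\; (\Delta+1) - \bigl(\Delta - |N_u(1,i-1)|\bigr) \;=\; 1 + |N_u(1,i-1)|.
\]

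Second, I would bound the number of colors in $\mc C_u^-$ that are neither blank nor unique. By definition, a color $c \in \mc C_u^- \setminus (B_u \cup U_u)$ is used by at least two down-neighbors of $u$. Summing over such colors and using that each down-neighbor contributes to at most one color class,
\[
  2 \cdot |\mc C_u^- \setminus (B_u \cup U_u)| \;\le\; |N_u(1,i-1)|,
\]
so $|\mc C_u^- \setminus (B_u \cup U_u)| \le |N_u(1,i-1)|/2$.

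Combining the two bounds yields
\[
  |B_u \cup U_u| \;\ge\; |\mc C_u^-| - \tfrac{1}{2}|N_u(1,i-1)| \;\ge\; 1 + |N_u(1,i-1)| - \tfrac{1}{2}|N_u(1,i-1)| \;=\; 1 + \tfrac{|N_u(1,i-1)|}{2},
\]
as desired. There is no real obstacle here; the only subtle point worth flagging is being careful that $\mc C_u^+$ is generated by the up-neighbors $N_u(i,\lambda)$ (including same-level neighbors), so that the "leftover" degree $\Delta - |N_u(i,\lambda)|$ genuinely equals $|N_u(1,i-1)|$ when $\deg(u)=\Delta$ and bounds it from below in general.
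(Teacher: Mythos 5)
Your proof is correct and follows essentially the same route as the paper: the paper's proof simply lists the three observations $|\mc C_u^-| \ge 1 + |N_u(1,i-1)|$, $|\mc C_u^-| = |B_u\cup U_u| + |T_u|$, and $2|T_u| \le |N_u(1,i-1)|$, which are exactly the two bounds you prove and combine. You have just supplied the details the paper leaves implicit.
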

\begin{proof}
    Define $T_u := \mc C_u^- \setminus (B_u \cup U_u)$. The claim follows from the following observations: $(1)$ $|\mc C_u^-| \geq 1 + |N_u(1, i-1)|$, $(2)$ $|\mc C_u^-| = |B_u \cup U_u| + |T_u|$, and $(3)$ $2|T_u| \leq |N_u(1, i - 1)|$.
\end{proof}

In particular, if $u$ satisfies the Lower condition at level $i$, then
$|B_u\cup U_u| = \Omega(\beta^{i-5})$.

\paragraph{Recoloring routine.} As before, the
BCHN algorithm maintains a timestamp $\tau_u$ recording the last update at which $u$ was recolored. When a vertex $u$ must be recolored, BCHN first computes $B_u\cup U_u$, and samples a new color $c$ uniformly at random from this set. If $c\in B_u$ we stop. If $c\in U_u$, then there is a unique down-neighbor $v$ with $\chi(v)=c$; the algorithm recursively recolors $v$ in the same way. Because recursive calls always go to strictly lower levels, the recursion depth is at most $\lambda$, and the total time for recoloring $u$ (including the whole chain) is
\[
  O\!\left(\sum_{h\le \ell(u)} \beta^{h}\right) = O(\beta^{\ell(u)}),
\]
where the per-level work comes from scanning/maintaining neighborhood information which is in turn bounded by the Upper condition. 

\paragraph{Full update algorithm.}
On an edge deletion, BCHN only updates the HP.
On an insertion of $(u,v)$, it first runs \textsc{Maintain-HP}. If after that $\chi(u)\neq \chi(v)$,
it does nothing further. Otherwise it recolors the endpoint with larger timestamp (the one recolored
more recently), say $u$, by calling the recoloring routine described above. 

\paragraph{BCHN Analysis.} It is tempting (but incorrect) to write $\Pr[\text{$(u,v)$ is monochromatic}] \le 1/|B_u\cup U_u| = 1/\Omega(\beta^{\ell(u)})$
where $\ell(u)$ is the \emph{current} level when the edge is inserted. The issue is that the random choice that produced $\chi(u)$ was made at time $\tau_u$, potentially when $u$ was at a \emph{different} level.

The correct analysis tracks both levels:
let $i$ be the level of $u$ at the time of the conflicting insertion (after \textsc{Maintain-HP}), and let
$j$ be the level of $u$ at time $\tau_u$ when it last sampled its current color.
BCHN splits the analysis into cases:
\begin{itemize}
  \item If $i\le j$ (the level did not increase since the last recoloring), then by deferred decisions
        the insertion creates a conflict with probability at most $1/|B_u\cup U_u|$ evaluated at time $\tau_u$,
        which is $1/\Omega(\beta^{j-5})$ by Claim~\ref{clm:bchn-bu} and the Lower condition.
        Since recoloring $u$ costs $O(\beta^{i})\le O(\beta^{j})$, the expected cost is
        $O(\beta^{i})\cdot O(1/\beta^{j-5}) = O(1)$.
  \item If $i>j$ (the level increased), then the expensive recoloring at level $i$ is not paid for by
        the above probability bound. Instead, BCHN charges it to the work already spent by \textsc{Maintain-HP}
        in moving $u$ up to level $i$ (which is $\Omega(\beta^{i})$), and that cost is covered by the HP token
        accounting.
\end{itemize}
Together with the $O(\log\Delta)$ amortized cost of maintaining the HP, this yields $O(\log\Delta)$
expected-amortized update time for maintaining a $(\Delta+1)$-coloring against an oblivious adversary.

\paragraph{Obstacles to parallelization.}
The main challenge in parallelizing BCHN is the \textsc{Maintain-HP} routine. Moving a single vertex changes the up-neighbor and down-neighbor sets of many other vertices, which can cause them to violate the Upper and Lower conditions and trigger further moves. This creates long chains of dependent level changes whose order matters, making it unclear how to process many updates (or even a single update) with significant parallelism.

\subsubsection{A new sequential dynamic algorithm for $(\Delta +1)$-coloring.}\label{sec:tech-sec-2} To avoid the obstacles to parallelization posed by the BCHN algorithm, we take a slightly different viewpoint on the role of the hierarchical partition. First, we summarize the role of the HP in the BCHN algorithm:
\begin{enumerate}
  \item For \emph{clean} vertices, the Upper and Lower conditions imply that $|B_u\cup U_u|$ is large, so the deferred-decision argument can bound the probability that a future edge insertion creates a conflict with $u$.
  \item The Upper condition bounds $|N_u(1,\ell(u))|$ and thus controls the work needed to recolor $u$ at level $\ell(u)$: maintaining and scanning the relevant neighborhood information costs $O(\beta^{\ell(u)})$ work.
  \item Because recolorings follow unique colors down the levels, a recoloring chain from a vertex at level $i$ visits strictly lower levels, and the work per vertex decreases geometrically with the level. As a result, the total cost of a chain is dominated (up to constants) by the cost of recoloring its first vertex.
\end{enumerate}
In the BCHN algorithm these effects are enforced via the \textsc{Maintain-HP} routine, which ensures every vertex is clean after each update. However, as discussed above, this procedure is the main barrier to parallelization.

A useful observation from the BCHN analysis is that there are multiple ways to pay for recoloring a vertex $u$:
\begin{enumerate}
  \item If the level of $u$ at the time of the conflicting insertion is at most its level when it last sampled its current color, then the deferred-decision argument applies and the expected recoloring cost is $O(1)$, using the size of $B_u\cup U_u$ at the earlier time.
  \item If the level of $u$ has increased since its last recoloring, then this argument no longer applies. However, BCHN charge the recoloring cost to the token potential that was already spent in moving $u$ up; the same token drop that pays for the movement suffices to pay for recoloring at the new level.
\end{enumerate}

This suggests that globally maintaining a clean hierarchical partition may be unnecessary. We can afford to let vertices become dirty and only ``repair'' them when we actually encounter them during recoloring. Moreover, when we do repair a dirty vertex, we do not need to move it all the way to a fully clean level; it suffices to move it \emph{enough} so that the resulting decrease in the token potential pays for (1) the movement itself and (2) the cost of recoloring that dirty vertex. Of course, allowing dirty vertices to be recolored complicates the deferred decision argument used previously, but we will see that it can be made to work.

In the remainder of this section we build on this intuition to a design a new relaxed algorithm and show that it achieves the same $O(\log \Delta)$ expected-amortized update time as the BCHN algorithm.

\paragraph{Algorithm.}
The pseudocode is given in Algorithm~\ref{alg:relaxed-seq-1}.
Consider again the insertion of a monochromatic edge $e=(u,v)$ at time-step $\tau$, and let $x$ be the endpoint that was most recently recolored. Upon the insertion we invoke $\textsc{Recolor}(x)$. In our algorithm, we retain BCHN's randomized recoloring step \emph{only} for clean vertices. If the current vertex $y$ is clean, we sample uniformly from $B_y \cup U_y$ and recurse only if we pick a unique color, so the recursion continues as a single chain down the levels.

If the chain ever reaches a dirty vertex $y$, we recolor $y$ deterministically with an arbitrary blank color (which always exists with $\Delta+1$ colors), and then invoke the \textsc{Move} procedure to shift $y$ enough to release a sufficient number of tokens to pay for both the movement and the deterministic coloring. We also mark the timestamp of $y$ as deterministic so that later we can distinguish these recolorings from random ones in the analysis. By construction, the recursion depth of this procedure is at most $\lambda$.

\begin{algorithm}[ht!]
\function{\emph{\textsc{Update}$(u,v)$}}{
    \If{$(u,v)$ is inserted and $\chi(u) = \chi(v)$}{
        $x \gets \arg\max_{y \in \{u,v\}} \tau_y$. \\
        $\textsc{Recolor}(x)$.
    }
}
\function{\emph{\textsc{Recolor}$(u)$}}{
    \If{$u$ is clean}{
        Choose $c$ uniformly at random from $B_u \cup U_u$.\\
        $\chi(u) \gets c$. \\
        \If{$c$ is unique}{
            Find the unique vertex $v \in N_u(1, \ell(u) - 1)$ with $\chi(v) = c$.\\
            $\textsc{Recolor}(v)$.
        }
    } \Else{
        Choose $c$ from $B_u$.\\
        $\chi(u) \gets c$. \\
        $\textsc{Move}(u)$.
    }
}

\function{\emph{\textsc{Move}$(u)$}}{
    \If{$u$ does not satisfy the upper condition \ref{con:upper:bchn}}{
        $k \gets \lambda$.\\
        \While{$|N_u(1,k-1)| < \beta^{k-1}$ or $|N_u(1,k)| > \beta^{k}$}{$k \gets k - 1$.}
        Move $u$ up to level k. \\
    }
    \Else{
        Move $u$ down to level $\ell(u) - 4$.
    }
}
\caption{Relaxed Sequential Algorithm}
\label{alg:relaxed-seq-1}
\end{algorithm}

\paragraph{Analysis.} Now we analyze the update time of our sequential algorithm. Consider a sequence of $T$ edge insertions/deletions starting from an empty graph. Let $W_T$ denote the the total work done over the sequence of updates. We want to show that $\E[W_T] = O(T \log \Delta)$.

The strategy at a high level is as follows.
\begin{enumerate}[(1)]
    \item First, we split the total work into the individual work done by each recoloring recursion chain. The cost of a chain is dominated by two parts: that of a spawning clean vertex and of a terminating dirty vertex.
    \item Then, we split the spawning clean vertices into two types: ones which were clean when recolored last and ones which were dirty when recolored last.
    \item To understand the cost of the former type, we use a variant of the deferred decision argument, giving an $O(1)$ cost per update. We relate the cost of the latter type to the cost of all dirty vertices.
    \item We show the cost of recoloring dirty vertices is dominated by the movement cost.
    \item Finally, we show that moving dirty vertices releases an amount of tokens proportional to the cost of the movement, giving an $O(\log \Delta)$ amortized cost per update.
\end{enumerate}

To start, we introduce the notion of a \emph{record}. During the execution of the algorithm, every time a vertex $u$ is recolored, we record the timestamp in a sequence. We call this sequence the \emph{record} of $u$, and denote it by $r_u = (\tau_u^0, \tau_u^1, \dots, \tau_u^k)$. Each timestamp $\tau$, depending on when it was recorded, has a cost $w(\tau)$, corresponding to the recoloring and potential moving cost of the associated vertex. If at $\tau_u^i$, the vertex $u$ is clean, we have
\[
    w(\tau_u^i) = O(\beta^{\ell(u)}),
\]
and if $u$ is dirty, we have
\[
    w(\tau_u^i) = \Theta(\beta^{\max\{\ell(u),k\}}),
\]
where $k$ is the level $u$ ends up on.

We can interpret the work of the algorithm as
\[
    W_T = \sum_u \sum_{\tau \in r_u(T)} w(\tau).
\]
Only some of the timestamps in the records contribute meaningful work, which we isolate now. The execution of the recoloring algorithm naturally corresponds to a chain of recolorings of vertices at monotonically decreasing levels. Such a chain has a spawning vertex and a terminating vertex. The total cost of the records of clean vertices in a chain is proportional to cost of the clean spawning vertex,  since in a record the levels of vertices in a chain are strictly decreasing and thus the work is a geometric sum. the total cost of the records of clean. We can conclude that the overall cost of the timestamps in a chain is dominated by the cost of the timestamp of the spawning vertex (if it is clean) together with the cost of the timestamp of the terminating vertex (if it is dirty).

We thus see that the cost of the entire record is proportional to the cost of the \emph{important} timestamps $I$, where an \emph{important} timestamp is one which corresponds to a clean spawning vertex or a dirty terminating vertex. According to this distinction, we partition $I$ into clean and dirty sets denoted $C_T$ and $D_T$. We may further partition $C_T$ into two sets $CC_T$ and $DC_T$, where the first corresponds to timestamps $\tau_u^i \in C_T$ such that $\tau_u^{i-1} \in C_T$, and the latter corresponds to $\tau_u^i \in C_T$ such that $\tau_u^{i-1} \in D_T$. One final observation is that $|I|$ is at most $2T$, since each chain gives at most $2$ important timestamps. 

We now state a lemma that is useful in understanding the cost of the record.
\begin{lemma}
    If $\tau_u^i \in C_T$ such that $\tau_u^{i-1} \in D_T$, then $w(\tau_u^i) \leq w(\tau_u^{i-1})$.
\end{lemma}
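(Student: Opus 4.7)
The plan is to exploit the fact that the level of a vertex $u$ changes only when $u$ itself is recolored (the only call to \textsc{Move} in Algorithm~\ref{alg:relaxed-seq-1} is inside \textsc{Recolor} on its argument). Consequently, the level of $u$ at time $\tau_u^i$ equals the level $k$ at which $u$ landed after the \textsc{Move} invoked during the dirty recoloring at time $\tau_u^{i-1}$. I would set up the argument by letting $\ell = \ell_{\tau_u^{i-1}}(u)$ denote $u$'s level just before $\tau_u^{i-1}$, letting $k$ denote its level immediately after the move, and using the defining formulas $w(\tau_u^i) = O(\beta^{k})$ for the clean step and $w(\tau_u^{i-1}) = \Theta(\beta^{\max\{\ell,k\}})$ for the dirty step.

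First I would argue that no intervening event can alter $u$'s level in $(\tau_u^{i-1}, \tau_u^i]$, so the above identifications are legitimate. Next I would split on the two branches of \textsc{Move}. In the upper-violation branch, $u$ moves up to level $k > \ell$, so $w(\tau_u^{i-1}) = \Theta(\beta^{k})$, which directly dominates $w(\tau_u^{i}) = O(\beta^{k})$ provided $\beta$ is chosen large enough to absorb the implicit constants. In the lower-violation branch, $u$ moves down to $k = \ell - 4$; here $w(\tau_u^{i-1}) = \Theta(\beta^{\ell}) = \Theta(\beta^{k+4})$, and the clean cost $w(\tau_u^i) = O(\beta^k)$ is smaller by the explicit factor $\beta^{4}$, which again absorbs the hidden constants once $\beta$ is sufficiently large.

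Combining the two cases yields $w(\tau_u^i) \le w(\tau_u^{i-1})$, as required. The main obstacle is essentially bookkeeping: one must verify carefully that the levels used in the two cost expressions really do match $k$ (respectively $\ell$ and $k$) and that the constants hidden in the $O(\cdot)$ and $\Theta(\cdot)$ notations can be uniformly dominated by a single choice of $\beta$. Since the algorithm already assumes $\beta$ is a sufficiently large constant (and the downward move of four levels was chosen precisely to provide slack), no additional difficulty should arise beyond a clean case analysis of \textsc{Move}.
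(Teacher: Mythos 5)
Your proposal is correct and follows essentially the same route as the paper: the paper's one-line proof likewise observes that the dirty cost at $\tau_u^{i-1}$ is $\Theta(\beta^{\max\{\ell_{i-1}(u),\ell_i(u)\}})$ where $\ell_i(u)$ is the level $u$ lands on after the move (which persists to $\tau_u^i$), while the clean cost at $\tau_u^i$ is only $O(\beta^{\ell_i(u)})$. Your case split on upward versus downward moves is an unnecessary elaboration, since $\max\{\ell,k\}\ge k$ handles both branches at once, but it changes nothing substantive.
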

\begin{proof}
    Since $\tau_u^{i-1}$ is dirty, the cost of $\tau_u^{i-1}$ is $\Omega(\beta^{\max\{\ell_{i-1}(u),\ell_{i}(u)\}})$, since $\ell_{i}(u)$ is the level $u$ ends up on after moving, and the cost of the clean timestamp $\tau_u^{i-1}$ is $O(\beta^{\ell_i(u)})$.
\end{proof}

As a corollary, we immediately get that $w(DC_T) \leq w(D_T)$. Thus we can bound the total work as
\[
    W_T \leq w(CC_T) + 2w(D_T).
\]
The first term will be understood using a \textbf{deferred decision argument}. We will understand the second term by relating it to the work done solely in \textbf{moving dirty vertices}, which will be analyzed using a token function argument.

\paragraph{The clean part.}
We understand this part using a variant of the deferred decision argument described in the BCHN algorithm. To do this, we interpret the cost of the clean part from the perspective of edge updates. We define the timestamp $\tau_e$ to be the timestamp that would result if the edge $e$ would be monochromatic upon insertion. We also let $\tau_e^-$ denote the predecessor of $\tau_e$ in its appropriate record. We then have the equality
\begin{equation}
    \label{eq:work-cc}
    w(CC_T) = \sum_{e \in S_T} \ind[\tau_e \in CC_T] \cdot w(\tau_e).
\end{equation}
We use this equation to prove the following equality (Lemma \ref{lem:def-dec} in the main part of the paper):
\[\E[w(CC_T)] = O(T).\]

\paragraph{The dirty part.}
We now recall the following token functions used in the BCHN algorithm. For edges $(u,v)$ and vertices $u$, we define
\begin{equation}\label{eqn:BCHN-vertex-potential}
        \theta(u, v) = \lambda - \max(\ell(u), \ell(v)).
\end{equation}   
\begin{equation}\label{eqn:BCHN-edge-potential}
    \theta(u) =
    \begin{cases} \frac{\max(0, \beta^{\ell(u) - 1} - |N_u(1, \ell(u) - 1)|)}{2\beta} & \text{if } \ell(u) > 1; \\
            0 & \text{otherwise}.
    \end{cases}
\end{equation}
We use $\Gamma$ to denote the total number of tokens in the system. It is easy to see that an edge update can increase the total number of tokens by at most $\lambda$. The critical use of the token functions is to associate the work done in moving vertices to a proportional decrease in $\Gamma$. We note that coloring a dirty vertex from $B_u$ and moving it from level $i$ to level $k$ can be done in $O(\beta^{\max(i,k)})$ work using appropriate data structures. We will show that moving a vertex from level $i$ to level $k$ releases $\Omega(\beta^{\max(i,k)})$ tokens. Overall, this will imply that \[
w(D_T) = O(T \log \Delta).\]
\begin{proof} 
There are two cases we must consider.

\begin{description}\label{desc:token-arg}
  \item[\textbf{Case 1: upward movement.}] 
  In this case, suppose we have a vertex $u$ at level $i$, which
  violates the upper condition and is moved (and colored) to level $k$ during Algorithm \ref{alg:relaxed-seq-1}. The cost of this is $O(\beta^{k})$. We will show that $\Gamma$ decreases by $\Omega(\beta^{k})$. Denote by $\theta'$ the token values after moving. By construction, prior to moving, $u$ satisfies \begin{enumerate*}
    \item $|N_u(1,k-1)| \geq \beta^{k-1}$, and
    \item $|N_u(1,5)| \leq \beta^{k}$.
\end{enumerate*} Then by $(1)$ $\theta'(u) = 0$. Additionally, for each $x \in N_u(i+1,k)$, $\theta'(x) \leq \theta(x) + 1/2\beta$. So the total number of tokens associated with neighbors of $u$ increased by at most $|N_u(i+1,k)|/2\beta \leq |N_u(1, k)|/2\beta \leq \beta^{k-1}/2$. Next, for each $x \in N_u(j)$ where $1 \leq j < k$, $\theta'(x, u) \leq \theta(x,u) - (k-j)$. Thus we have the total decrease in tokens associated to edges is at least $|N_u(1,k-1)| \geq \beta^{k-1}$ by $(1)$. Thus in total, $\Gamma$ decreases by at least $\beta^{k-1} - \beta^{k-1}/2 = \beta^{k-1}/2 = \Omega(\beta^k)$.

  \item[\textbf{Case 2: downward movement.}]
  In this case, suppose we have a vertex $u$ at level $i$, which satisfies the upper condition, but violates the lower condition. Lowering a vertex does not increase any vertex token function by definition \ref{eqn:BCHN-vertex-potential}. Moreover, initially $\theta(u)$ was $(\beta^{i-1} - |N_u(1,i-1)|)/2\beta \geq (\beta^{i-1} - \beta^{i-5})/2\beta$ as $u$ violated the lower condition at level $i$. After, the movement, $u$ is at level $i-4$, and thus $\theta'(u) \leq \beta^{i-6}/2$. Therefore, the vertex token decrease is at least $\beta^{i-6}(\beta^4 - 1)/2$. Now, we analyze the token change from edges. From \ref{eqn:BCHN-edge-potential}, the number of tokens associated with the edge $(u,v)$ for each $v \in N_u(1,i-1)$, increases by $i - \max(\ell(v), i-4) \leq 4$. Therefore, we can upper bound the total increase in edge tokens by $4|N_u(1 \ell(u)-1)| < 4 \cdot \beta^{i-5}$. Thus in total, $\Gamma$ decreases by at least $\beta^{i-6}(\beta^4 - 1)/2 - 4\cdot \beta^{i-5} = \Omega(\beta^i)$ for sufficiently large $\beta$.

\end{description}
This establishes the inequality.
\end{proof}

\paragraph{Finalizing the argument.}
Putting together the clean and dirty parts, we conclude that
\[
    \E[W_T] = O(T \log \Delta).
\]

\subsection{A parallel dynamic algorithm for $(\Delta+1)$-coloring.}\label{sec:tech-sec-3}
To obtain a parallel dynamic algorithm our aim is to parallelize the two primary subroutines in the relaxed sequential algorithm: the recoloring routine, and the movement routine. Assuming we have the parallelizations of these subroutines, the goal is to run the same argument as in the relaxed sequential algorithm. We observe that, at a high level, the argument only makes use of the following two key properties of the movement and recoloring routines:
\begin{enumerate}[(1)]
    \item The cost of recoloring dirty vertices is dominated by the cost of moving those same vertices.
    \item The movement of dirty vertices releases an amount of tokens proportional to the cost of the movement.
\end{enumerate}
Thus, if we can design parallel versions of each subroutine so that the two above facts hold, we will immediately obtain an $O(\log \Delta)$ expected-amortized algorithm.

\begin{remark} As in the previous parts of the technical overview, we shall leave $\beta$ for the time being. In the full algorithm (section \ref{sec:algorithm}) we will use $\beta = 3$.
\end{remark}

\subsubsection{Algorithm overview.} Our algorithm parallelizes the relaxed sequential algorithm given in the previous section in three phases: the initialization phase, the coloring phase, and the moving phase. We outline each of these phases and provide the main conceptual ideas behind the algorithm.

First however, we remark that there is an important difference in the conditions we use for the hierarchal partition. Specifically, we will use the following relaxed upper condition which will be crucial to the final analysis.
\begin{condition}[Upper]\label{con:upper:new}
    For every vertex $u \in V$, we have $|N_u(1,\ell(u))| \leq \beta^{\ell(u)+2}$.
\end{condition}
The lower condition remains the same. 

\subsubsection{Initialization phase.} In the parallel setting, our updates now come in the form of a batch $S$. Suppose that $S$ is a batch of edge insertions (since deletions do not trigger recolors). As in the relaxed sequential algorithm, our first step is to identify which vertices need to be uncolored after inserting $S$. We do this in the same way as in the parallel $2\Delta$-coloring algorithm from warmup II. Concretely, we compute the subset of edges in $S$ that are monochromatic and within that set compute the endpoints that were most recently recolored, that is we compute $V_\text{blank}$.  We would also like to distinguish between the dirty and clean blank vertices and further separate dirty blank vertices into those that violate the upper condition and those that only violate the lower condition. Formally, we denote
\begin{itemize}
    \item $\upmrkd(i)$, set of blank vertices on level $i$ which don't satisfy the upper condition \ref{con:upper:new}.
    \item $\lmrkd(i)$, set of blank vertices on level $i$ which satisfy the upper condition \ref{con:upper:new} but don't satisfy the lower condition \ref{con:lower:bchn}. 
    \item $\umrkd(i)$, set of blank vertices on level $i$ which are clean.
\end{itemize}
We collectively refer to the vertices in $\upmrkd(i) \cup \lmrkd(i)$ as \emph{marked} and the vertices in $\umrkd(i)$ as \emph{unmarked}. Collectively, these sets can be computed in $O(|S|)$ expected work and $O(\log n)$ span \whp. 

\subsubsection{Coloring phase.} After the initialization phase, all conflicts created by a batch $S$ of updates are in a set $V_{\blank}$ of blank vertices, whose colors have been temporarily removed. We now need to assign new colors to $V_{\blank}$ while preserving a proper $(\Delta+1)$-coloring. We process $V_{\blank}$ level-by-level from the top of the partition downwards. Additionally, we write
\[
  \mrkd(i) := \upmrkd(i)\cup \lmrkd(i)
\]
for the set of \emph{marked} vertices on level $i$. 

The coloring phase on level $i$ proceeds in two steps:
first we color $\mrkd(i)$, then we color $\umrkd(i)$. In both cases we reduce the task to repeated applications
of the parallel partial list-coloring routine from Section~\ref{sec:partial-coloring-alg}.

At a high level, a coloring procedure on some set $X\subseteq V_{\blank}$ on level $i$ works as follows.
We initialize a coloring instance $(H_0,C_{H_0})$ with $H_0=G[X]$ and an admissible palette $C_{H_0}(u)$ for each
$u\in X$. Given an instance $(H_t,C_{H_t})$, we run the partial coloring routine, which returns a set $L$ of newly
colored vertices and a residual graph $H_{t+1}$ on the remaining uncolored vertices. Using $L$, we compute a new palette  $C_{H_{t+1}}(u)$ for each $u\in V[H_{t+1}]$, and iterate until $H_t$ becomes empty. The analysis of
Section~\ref{sec:partial-coloring-analysis} shows that as long as every instance satisfies
\[
  |C_{H_t}(u)| \;\ge\; \deg_{H_t}(u) + 1\quad\text{for all $u$,}
\]
each round colors a constant fraction of the remaining vertices in expectation, so we obtain $O(\log^2 n)$ span
per level \whp.

The main difference between coloring $\mrkd(i)$ and $\umrkd(i)$ is in how we choose the color palettes and the coloring guarantees we need to ensure. 

\paragraph{Marked vertices.} In the relaxed sequential algorithm, a recoloring chain stops as soon as it reaches a marked (dirty) vertex, which is then recolored deterministically with a blank color and subsequently moved. In the parallel setting we mimic this behavior by coloring all vertices in $\mrkd(i)$ using only blank colors.

Concretely, for level $i$ we set $H_0 := G[\mrkd(i)]$. For each $u\in \mrkd(i)$ we preprocess its dynamic palette $\mathcal C_u$ so that the lower palette $\mathcal C_u^-$ begins with exactly the colors that are
blank for $u$. A simple counting argument, analogous to the warmup $2\Delta$-coloring case, shows that $u$ always has at least $\deg_{H_0}(u)+1$ blank colors available. We therefore define
\[
  C_{H_0}(u) := \mathcal C_u^-[0:\deg_{H_0}(u)].
\]
and run the partial coloring routine on $(H_0,C_{H_0})$. The vertices in $L$ receive blank colors, so they do not create new conflicts and do not trigger further recoloring. We then update the data structures and shrink each $C_{H_0}(u)$ by removing colors used by neighbors in $L$, obtaining $(H_1,C_{H_1})$. Iterating this process until $H_t$ is empty colors all marked vertices on level $i$ using only blank colors and maintains the invariant $\lvert C_{H_t}(u)\rvert \ge \deg_{H_t}(u)+1$ throughout.

\paragraph{Unmarked vertices.} Here we want to preserve the deferred-decision argument: whenever an unmarked vertex $u$ is recolored, it should choose its new color uniformly from a palette of blank-or-unique colors $B_u\cup U_u$ that is large both compared to its lower neighborhood and compared to its current degree in the induced subgraph of still-uncolored vertices.

Fix a level $i$ and let $H_0 := G[\umrkd(i)]$. For each $u\in \umrkd(i)$ we again preprocess $\mathcal C_u^-$, but now we push to the end all colors that are neither blank nor unique on $N_u(1,i-1)$. The remaining prefix of $\mathcal C_u^-$ consists exactly of colors in $B_u\cup U_u$. By the palette lower bound for clean vertices (Claim~\ref{clm:bchn-bu} and its analogue in our setting), this prefix has size
\[
  |B_u\cup U_u| \;\ge\; \Theta\bigl(|N_u(1,i-1)|\bigr),
\]
and in fact we ensure that
\[
  |C_{H_0}(u)| \;\ge\; \max\!\bigl\{\deg_{H_0}(u)+1,\; |N_u(1,i-1)|/2\bigr\}.
\]
We then set $C_{H_0}(u)$ to be the first
$\max(\deg_{H_0}(u)+1, |N_u(1,i-1)|/2+1)$ entries of this prefix and run the partial coloring routine on
$(H_0,C_{H_0})$.

In the resulting rounds, each $u\in \umrkd(i)$ that is colored picks a color uniformly from a large set of blank-or-unique colors. If a blank color is chosen, recoloring stops at $u$. If a unique color $c$ is chosen, there is a single lower neighbor $v$ with $\chi(v)=c$; we make $v$ blank and place it into the appropriate marked/unmarked set on its level 

A technical point is that, unlike in the marked case, the palettes $C_{H_t}(u)$ are not easy to maintain incrementally once some neighbors in $H_t$ are colored: removing a neighbor may change which colors are unique. Instead, we simply recompute $C_{H_t}(u)$ from scratch at the beginning of each round on the current residual subgraph $H_t$. Because all vertices in $\umrkd(i)$ are clean, the work to rebuild $C_{H_t}(u)$ for a vertex $u$ is proportional to the size of its lower neighborhood $N_u(1,i-1)$ and does not depend on $t$. Since each round colors a constant fraction of the vertices in $H_t$ in expectation, the expected number of rounds in which $u$ participates is $O(1)$, and the total palette-recomputation work per vertex is geometric.

\paragraph{Total work for coloring.} In section \ref{sec:coloring-procedure-complexity}, we show that the total expected work done for coloring all the marked and unmarked vertices is proportional to
\[\sum_{u \in \umrkd} |N_u(1,\ell(u))| + \sum_{u \in \mrkd} |N_u(1,\ell(u))|.\]

\subsubsection{Moving phase.}  Finally, we are left with the task of moving marked vertices. Recall that for each level $i$ we have $\upmrkd(i)$, the vertices on level $i$ that violate the Upper condition, and $\lmrkd(i)$, the vertices on level $i$ that satisfy the Upper condition but violate the Lower condition.

We handle these in two passes over the levels of the partition. In the first pass, we process the sets $\upmrkd(i)$ in a \emph{top-down} order using a raising procedure. Since raising a vertex from level $i$ only moves it to a higher level, this can never reduce the number of upper-neighbors of any vertex on a lower level $j<i$. This will be important in the final analysis as it will ensure that the initial set of upper marked vertices on each level stays the same. 

In the second pass, we process the sets $\lmrkd(i)$ in a \emph{bottom-up} order using a lowering procedure. Here, vertices only move downward, so processing level $i$ cannot reduce the number of down-neighbors of any vertex on a higher level. Again this has the important consequence that a vertex that is lower marked when we start the pass remains lower marked until we begin processing its level.

\paragraph{Raising procedure.} Let us consider how to process $\upmrkd(i)$. Recall the \textsc{Move} routine in Algorithm~\ref{alg:relaxed-seq-1}. If a vertex $u$ on some level $i$
violates the upper condition~\ref{con:upper:bchn}, the sequential algorithm scans levels from the top downward and moves $u$ to the highest level $k<\lambda$ such that
\[
  P(u,k) \;=\; \Ind{|N_u(1,k-1)| \ge \beta^{k-1} \;\land\; |N_u(1,k)| \le \beta^{k}} = 1.
\]
Our parallel raising procedure mimics this behavior, but works with batches of vertices rather than one vertex at a time. We process potential levels $k = \lambda, \lambda-1, \dots, i+1$ in order. In processing level $k$, we maintain the set of vertices on level $i$ that currently satisfy $P(u,k)=1$. Denote this set by $R_k$. Our goal is to process $R_k$ such that $\Theta(\beta^k |R_k|)$ tokens are released. 

It is tempting to raise $R_k$ in parallel to level $k$ and apply the same token argument we gave in the relaxed sequential algorithm (see \ref{desc:token-arg}) to conclude that this movement releases $O(\beta^k|R_k|)$ tokens. The problem is that the argument we gave there only applies in the case that $R_k$ forms an independent set. Indeed, our analysis crucially relied on the fact that after moving a vertex $u$ from level $i$ to level $k$ (where $k$ is such that $P(u,k) = 1$), the vertex token function for $u$, $\theta(u)$, is 0. To see why this is true, recall that on level $k$, $\theta(u) = \max(0,\beta^{k-1} - |N_u(1, k - 1)|)/2\beta$ and by definition of $k$, prior to moving $u$, $|N_u(1,k-1)| \geq \beta^{k-1}$. Further, after moving \textit{only} $u$, $|N_u(1,k-1)|$ remains the same. However, when we move all of $R_k$ in parallel, if any $u$'s neighbors also move, then $N_u(1,k-1)$ gets smaller, that is we can no longer guarantee that $|N_u(1,k-1)| \geq  \beta^{k-1}$. Instead, we can only say that 
\[|N_u(1,k-1)| \geq \beta^{k-1} - (\text{the internal degree of }u \in G[R_k]) = \beta^{k-1} - |N_u(i) \cap R_k|.\]
Thus, in this case $\theta(u)$ can increase by up to $|N_u(i) \cap R_k|/2\beta \leq \beta^{k-1}/2$. Therefore, a more careful analysis is required. Towards that end, lets suppose we have a uniform bound $\alpha$ on the internal degree some subset $M$ of $R_k$. We can show that by moving $M$, we release at least
\[\frac{|M|}{2}\left(\beta^{k-1} - \frac{\alpha(\beta+1)}{\beta} \right)\]
tokens. Thus, if we take $\alpha < \frac{\beta^{k-1}(\beta-2)}{\beta+1}$, the total decrease is $\beta^{k-1}|M| = \Theta(\beta^{k}|M|)$. Thus, if we can find an $M$ that, in addition to the bounded internal degree condition, has $|M| = \Theta(|R_k|)$ we would be done. We can achieve this with a symmetry breaking scheme as follows. Select every vertex in $R_k$ with probability $p$. In parallel, we move each vertex that was both selected and had at most $\alpha$ of its neighbors selected. With an appropriate choice of $p$, we show that $\E[M] = \Theta(|R_k|)$ and thus, this movement releases $\Theta(\beta^k |M|) = \Theta(\beta^k |R_k|)$ tokens in expectation. 

\paragraph{Lowering procedure.} Next, we describe how to process $\lmrkd(i)$. Recall that in the \textsc{Move} routine in Algorithm \ref{alg:relaxed-seq-1}, if a vertex $u$ violates the lower condition at level $i$, it is moved to level $i-4$. In the parallel setting, our goal is to move a large subset of $\lmrkd(i)$ down to level $i-4$ and in doing so release $\Theta(3^i |\lmrkd(i)|)$ tokens. As in the raising procedure, we achieve this using a symmetry breaking scheme: select each vertex in $\lmrkd(i)$ with probability $p$ and lower those that were both selected and had at most $\alpha$ of their neighbors selected. Call this subset $L$. We can show that in moving $L$, we release at least
\[|L|(\beta^{i-2} - 4\beta^{i-5} - 2\alpha)\]
tokens. Taking $\alpha < (\beta^{i-2} - \beta^{i-4} - 4\beta^{i-5})/2$, the total decrease is $\Theta(\beta^i |L|)$. Further, for an appropriate choice of $p$, we show that $\E[|L|] = \Theta(|\lmrkd(i)|)$ and thus overall moving $L$ to level $i-4$ releases $\Theta(3^i |\lmrkd(i)|)$ tokens in expectation.

\subsubsection{Full analysis.}
In what follows, we provide a sketch of the full analysis, eliding technical details for the sake of clarity. The high-level structure of the analysis in the parallel setting mirrors the sequential case.
As before, we must establish:
\begin{enumerate}[(1)]
    \item the cost of recoloring dirty vertices is dominated by the cost of moving those same vertices, and
    \item the movement of dirty vertices releases a number of tokens proportional to the work spent on their movement.
\end{enumerate}
Property~(2) is built into the design of the parallel raising and lowering procedures: by construction, every batch of raised (resp.\ lowered) vertices releases $\Theta(\beta^k |R_k|)$ (resp.\ $\Theta(\beta^i |\lmrkd(i)|)$) tokens, matching the $\Theta(\beta^k |R_k|)$ (resp.\ $\Theta(\beta^i \lmrkd(i))$) work needed for the corresponding updates to the data structures.

The main new difficulty is proving~(1).
In the sequential relaxed algorithm, every dirty vertex that is recolored is immediately moved, so its recoloring cost can be charged directly to its own movement.
In the parallel setting, this is no longer true: a dirty vertex may be recolored without being selected by the symmetry-breaking step in the raising or lowering procedure, and hence may not move in that batch.
We must therefore relate the cost of recoloring \emph{all} dirty vertices to the movement cost of only \emph{some} of them.

The lower-marked vertices are relatively easy to handle.
For vertices in $\lmrkd(i)$, the parallel lowering procedure moves a constant fraction of them in expectation, and each recoloring at level $i$ costs $\Theta(\beta^i)$.
Thus the total recoloring cost on level $i$ can be bounded by a constant factor times the total movement cost on the same level.
We therefore focus on the more subtle case of upper-marked vertices.

A key technical tool is the notion of a \emph{base level} associated with each timestamp in a vertex’s record.
If $\ell_i(u)$ is the level of $u$ at the time of timestamp $\tau_u^i$, we define
\[
  b_i(u) \;:=\; \max\Bigl\{\,\ell_i(u),\; \bigl\lfloor \log_\beta |N_u(1,\ell_i(u))| \bigr\rfloor \Bigr\}.
\]
The relaxed upper condition~\eqref{con:upper:new} ensures that when a vertex first becomes upper-marked, its base level is at least an additive constant (two) larger than its current level.

If an upper-marked vertex $u$ is selected in the raising procedure and moved, its recoloring cost at that timestamp is directly dominated by its own movement cost, as in the sequential analysis.
The difficult case is when $u$ is upper-marked but \emph{not} moved (really, the issue is when $u$ is not moved to a high enough level, but for simplicity, we will only consider this case). Here, the conditions for moving a vertex to a particular level guarantee that this can only happen if a large number of neighbors of $u$ participate in the same raising step and are actually moved up past $u$. More precisely, if $\tau_u^i$ is an important timestamp for $u$ at base level $b_i(u)$ and $u$ is not moved, then in that raising stage at least $\Omega(\beta^{b_i(u)})$ of $u$’s neighbors are moved upwards. The recoloring cost at $\tau_u^i$ is $\Theta(\beta^{b_i(u)})$, and we charge it to the $\Theta(\beta^{b_i(u)})$ total movement work contributed by those neighbors. This turns out to be sufficient.

Formally, we again decompose the work using the records of timestamps and the set of important timestamps $I$, split into clean and dirty parts $C_T$ and $D_T$, and then further into $CC_T$ and $DC_T$.
The same geometric-decay argument along recoloring chains shows that
\[
  W_T \;\le\; w(CC_T) + 2w(D_T),
\]
and a similar (albeit much more complicated) deferred-decision argument as in the sequential setting yields $\E[w(CC_T)] = O(T)$.
The new ingredient is showing that $w(D_T)$ is bounded by the total movement work, which in turn is covered by the token potential drop.
Combining these ingredients, we obtain
\[
  \E[W_T] \;=\; O(T \log \Delta)
\]
for the parallel algorithm as well.

\section{Preliminaries}

\subsection{Model and primitives.}  
\subsubsection{Model of computation.} In this paper, we use the work-span model with binary forking for analyzing parallel algorithms \cite{DBLP:journals/corr/abs-1903-04650}. As such, we assume we have a set of threads with access to a shared memory. Each thread supports the same operations as in the sequential RAM model, in addition to a fork instruction, which when executed, spawns two child threads and suspends the executing (parent) thread. When the two child threads end (through executing the end instruction), the parent thread starts again by first executing a join instruction. An individual thread can allocate a fixed amount of memory private to the allocating thread (refereed to as stack-allocated memory) or shared by all threads (referred to as heap-allocated memory). The \textit{work} of an algorithm is the total number of instructions carried out in the execution of the algorithm, and the \textit{span} is the length of the longest sequence of dependent instructions in the computation.

\subsubsection{Primitives.} We list some of the parallel primitives and state their bounds in the work/span model described above.

\begin{itemize}
    \item $\textsc{Scan}(A,\oplus)$ takes an array $A$ and an associative
    operator $\oplus$, and returns the sequence of all prefix sums of $A$
    under $\oplus$. A scan can be implemented using $O(|A|)$ work (assuming $\oplus$ takes $O(1)$ work) and $O(\log |A|)$ span.

    \item $\textsc{Reduce}(A,f)$ takes an array $A$ and an associative
    binary function $f$ and returns the reduction of all elements of $A$
    under $f$. A reduction can be computed with $O(|A|)$ (assuming $f$ is computable in $O(1)$ work) work and $O(\log |A|)$ span.

    \item $\textsc{Partition}(A,P)$ takes an array $A$ and a predicate $P$, and reorders $A$ so that all elements satisfying $P$ appear before those that do not. Using multiple scans, partitioning can be carried out with $O(|A|)$ work and $O(\log |A|)$ span.

    \item $\textsc{Filter}(A,P)$ takes an array $A$ and a predicate $P$
    and produces a new array consisting exactly of those $a \in A$ for
    which $P(a)$ holds, preserving their original order. Filtering can
    also be done in $O(|A|)$ work and $O(\log |A|)$ span.

    \item $\textsc{Semisort}(A,<)$ takes a sequence $A$ whose elements
    have keys (with respect to an ordering $<$) and permutes $A$ so that all entries with the same key appear contiguously. Semisorting can be done in $O(|A|)$ expected work and $O(\log |A|)$ span with high probability \cite{gu2015top}.
\end{itemize}

\subsection{The hierarchical partition.} Let $G = (V, E)$ be a simple undirected $\Delta$-bounded graph which undergoes a sequence of batch updates $\mc S := (S_1, S_2, \dots)$. We wish to maintain a proper $\Delta + 1$ coloring $\chi$ of $G$. Following \cite{Dynamic-Coloring-18}, we consider a hierarchical partition of $V$ that will prove useful in the design of our algorithm. 

Let $\lambda = \log_3 \Delta$. We will implicitly partition the vertex set $V$ into into $\lambda$ subsets $V_1, \dots, V_{\lambda}$. The level $\ell(u)$ of a vertex $u$ is the index of the subset it belongs to. For any vertex $u \in V$ and any two indices $1 \leq i \leq j \leq \lambda$, we let $N_u(i, j) := \{v : (u,v) \in E, \ i \leq \ell(v) \leq j\}$. Additionally for any level $i$ and any set $H$, we denote by $N_u(i)$ and $N_u[H]$ the set of $u$'s neighbors at level $i$ and the set of $u$'s neighbors in the induced subgraph on $H$ respectively. For each vertex in the hierarchical partition, we have the following two conditions of interest.

\begin{condition}[Lower]\label{con:lower}
    For every vertex $u \in V$ at level $\ell(u) > 1$, we have $|N_u(1, \ell(v) - 1)| \geq 3^{\ell(u) - 5}$.
\end{condition}

\begin{condition}[Upper]\label{con:upper}
    For every vertex $u \in V$, we have $|N_u(1, \ell(u))| \leq 3^{\ell(u) + 2}.$
\end{condition}
We say that a vertex is \emph{dirty} if it fails to satisfy either the lower or upper conditions. Otherwise, the vertex is said to be \emph{clean}.

Let $\mc C = \{0,\dots, \Delta\}$ be the set of possible colors. If $u$ is at level $i := \ell(u)$, let $\mathcal{C}_u^+ := \bigcup_{y \in N_u(i, \lambda)} \chi(y)$ and $\mathcal{C}_u^- = \mathcal{C} \setminus \mathcal{C}_u^+.$ We say a color $c \in \mathcal{C}^-_u$ is \emph{blank} for $u$ if no vertex in $N_u(5, i-1)$ is assigned color $c$. We say a color $c \in \mc{C}_u^-$ is \emph{unique} for $u$ if exactly one vertex in $N_u(5, i-1)$ is assigned color $c$. We let $B_u$ (respectively $U_u$) denote the blank (respectively unique) colors for $u$. Let $T_u := \mc C_u^- \setminus (B_u \cup U_u)$. The following two claims establishes useful bounds on the number of blank and unique colors in a vertex's palette. 

\begin{claim}\label{clm:many-unique-and-blank-colors}
    For any vertex $u$ at level $i$, we have $|B_u \cup U_u| \geq 1 + \frac{|N_u(1, i-1)|}{2}$.
\end{claim}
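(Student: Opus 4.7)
The plan is to mimic the proof of Claim~\ref{clm:bchn-bu}, adapting only for the fact that in this section's definitions, blank and unique are measured against $N_u(5,i-1)$ rather than against the full down-neighborhood $N_u(1,i-1)$. The three ingredients will be: a lower bound on $|\mathcal C_u^-|$, an upper bound on the ``crowded'' colors $T_u$, and the identity $|B_u\cup U_u| = |\mathcal C_u^-| - |T_u|$ (which follows immediately from the definitions $B_u,U_u \subseteq \mathcal C_u^-$, $B_u\cap U_u = \emptyset$, and $T_u := \mathcal C_u^-\setminus(B_u\cup U_u)$).

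First, I would bound $|\mathcal C_u^-|$ from below. Every color in $\mathcal C_u^+$ is, by definition, the color $\chi(y)$ of some $y\in N_u(i,\lambda)$, so $|\mathcal C_u^+|\le |N_u(i,\lambda)|$. Since $|\mathcal C|=\Delta+1$ and $|N_u(i,\lambda)| = \deg(u) - |N_u(1,i-1)| \le \Delta - |N_u(1,i-1)|$, this gives
\[
  |\mathcal C_u^-| \;=\; |\mathcal C| - |\mathcal C_u^+| \;\ge\; 1 + |N_u(1,i-1)|.
\]

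Next, I would bound $|T_u|$ from above. If $c \in T_u$, then $c\in \mathcal C_u^-$ but $c$ is neither blank nor unique for $u$; by the definitions, this means $c$ is the color of at least two distinct vertices in $N_u(5,i-1)$. Double-counting pairs $(c,v)$ with $c\in T_u$, $v\in N_u(5,i-1)$, $\chi(v)=c$, gives $2|T_u| \le |N_u(5,i-1)| \le |N_u(1,i-1)|$, hence $|T_u| \le |N_u(1,i-1)|/2$.

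Combining, $|B_u\cup U_u| = |\mathcal C_u^-| - |T_u| \ge (1+|N_u(1,i-1)|) - |N_u(1,i-1)|/2 = 1 + |N_u(1,i-1)|/2$, as desired. I do not anticipate any real obstacle here; the only mild subtlety to double-check is that the restriction of blank/unique to $N_u(5,i-1)$ (rather than to all of $N_u(1,i-1)$) only \emph{enlarges} $B_u\cup U_u$ and correspondingly \emph{shrinks} $T_u$, so the argument goes through with the weaker inequality $|N_u(5,i-1)|\le |N_u(1,i-1)|$ at the very end.
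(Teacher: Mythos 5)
Your proof is correct and follows exactly the same route as the paper's: the paper's (terse) proof lists precisely your three ingredients — $|\mathcal C_u^-|\ge 1+|N_u(1,i-1)|$, $|\mathcal C_u^-|=|B_u\cup U_u|+|T_u|$, and $2|T_u|\le|N_u(1,i-1)|$ — and you have simply supplied the justifications it omits. Your remark that blank/unique being measured against $N_u(5,i-1)$ only strengthens the bound via $|N_u(5,i-1)|\le|N_u(1,i-1)|$ is a correct and worthwhile check.
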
 
\begin{proof}
    The claim follows from the following observations: $(1)$ $|\mc C_u^-| \geq 1 + |N_u(1, i-1)|$, $(2)$ $|\mc C_u^-| = |B_u \cup U_u| + |T_u|$, and $(3)$ $2|T_u| \leq |N_u(1, i - 1)|$.
\end{proof}

\begin{claim}\label{clm:many-blank-colors}
    Let $G_\text{blank}$ be a subgraph of $G$ in which each vertex is uncolored (blank). Then for $u \in V[G_\text{blank}]$, $|B_u| \geq \deg_{G_\text{blank}}(u) + 1$.
\end{claim}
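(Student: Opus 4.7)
} The plan is to run the same counting argument that appeared in Claim~\ref{clm:many-blank-colors-2delta}, adapted to the $(\Delta+1)$-palette setting. Fix $u \in V[G_\text{blank}]$ and work with the global palette $\mc C$ of size $\Delta+1$. A color $c \in \mc C$ fails to be blank for $u$ only if some colored neighbor of $u$ has color $c$; since every vertex in $V[G_\text{blank}]$ is uncolored by hypothesis, the set of colored neighbors of $u$ is contained in $N(u) \setminus V[G_\text{blank}]$.

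Next, I would bound the size of that set of colored neighbors. Since $G$ is $\Delta$-bounded, $\deg_G(u) \le \Delta$, and by definition $\deg_{G_\text{blank}}(u)$ of $u$'s neighbors lie in $V[G_\text{blank}]$ and are therefore uncolored. Hence the number of colored neighbors is at most $\deg_G(u) - \deg_{G_\text{blank}}(u) \le \Delta - \deg_{G_\text{blank}}(u)$, and the number of distinct colors appearing among them is bounded by the same quantity.

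Combining the two, the number of colors \emph{used} around $u$ is at most $\Delta - \deg_{G_\text{blank}}(u)$, so
\[
  |B_u| \;\ge\; |\mc C| - \bigl(\Delta - \deg_{G_\text{blank}}(u)\bigr) \;=\; (\Delta+1) - \Delta + \deg_{G_\text{blank}}(u) \;=\; \deg_{G_\text{blank}}(u) + 1,
\]
as claimed. There is essentially no obstacle here: the only thing to check is that the definition of $B_u$ used in the claim (blank colors relative to neighbors in certain levels) only makes $B_u$ larger than the naive ``unused by any neighbor'' version, so the bound above is a fortiori valid for the claim's definition. The argument is otherwise a direct pigeonhole count and mirrors the $2\Delta$ warmup, just with $|\mc C| = \Delta+1$ instead of $2\Delta$, which cuts the lower bound from $\deg_{G_\text{blank}}(u) + \Delta$ down to $\deg_{G_\text{blank}}(u) + 1$.
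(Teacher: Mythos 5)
Your proposal is correct and follows essentially the same counting argument as the paper's proof: both bound the number of distinct colors appearing on neighbors of $u$ by the number of colored neighbors, which is at most $\deg_G(u) - \deg_{G_\text{blank}}(u) \le \Delta - \deg_{G_\text{blank}}(u)$, and subtract from $|\mc C| = \Delta + 1$. Your explicit remark that the level-restricted definition of $B_u$ only enlarges it relative to the naive ``unused by any neighbor'' set is a detail the paper leaves implicit, but it is the same argument.
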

\begin{proof}
    Consider $u \in \blank$. Define $N_u(c)$ to be the set of neighbors of $u$ with color $c$ and $t := |\{c : N_u(c) \not= \emptyset\}|$. Then $|B_u| = \deg(u) - t + 1$. Note the plus $1$ is due to the fact that $u$ itself is uncolored. Additionally, since each neighbor of $u$ in $G_{\text{blank}}$ is uncolored, $t \leq \deg(u) - \deg_{G_\text{blank}}(u)$. Thus $|B_u| = \deg(u) - t + 1 \geq \deg_{G_\text{blank}}(u) + 1$ as desired.  
\end{proof}

\section{Partial list-coloring}
We describe an important subroutine called partial list-coloring that will feature heavily in the coloring subroutines of the batch dynamic $(\deg + 1)$-coloring algorithm.

\subsection{Algorithm.}\label{sec:partial-coloring-alg}

The partial list-coloring procedure will partially color a given graph $H$ where each vertex $u$ has a color palette of size at least $\deg(u) + 1$. Here $H$ is represented by an array of vertices $V[H]$ and, for each $u \in V[H]$, the neighbor list of $u$ is stored in an array $N_u$. The color palette of each vertex $u$ is stored in an array $C_u$. 

The procedure is as follows. Every vertex $u$ samples a color $c$ uniformly at random from $C_u$ and tentatively sets $\chi(u) = c$. If any vertex in $N_u$ also sampled $c$, then the initial proposal is rejected, and we set $\chi(u) = -1$. All colored vertices are then deleted from $H$. This is done by partitioning both $V[H]$ and $N_u$ for all $u$. Then the latter part of $V[H]$ consists of the colored vertices. As a result of this procedure, we have a new uncolored graph $H'$ and a list $L$ of colored vertices from $H$.

\subsection{Analysis.}\label{sec:partial-coloring-analysis}

We show that the expected number of vertices that are successfully colored is a fraction of the original. By linearity of expectation, it suffices to lower bound the probability that a vertex $u$ is colored by a universal constant. Since each sample is chosen uniformly and independently, the probability can be written as
\begin{equation}\label{eqn:pcoloring-eq1}
    \sum_{c \in C_u} \Pr[\chi(u) = c] \left( \prod_{v \in N_u} 1 - \Pr[\chi(v) = c | \chi(u) = c] \right)
    = \frac{1}{|C_u|} \sum_{c \in C_u} \left( \prod_{v \in N_u} 1 - \frac{\mathbf{1}[c \in C_v]}{|C_v|} \right)
\end{equation}
Note that $\frac{\mathbf{1}[c \in C_v]}{|C_v|} \leq 1/2$ for each $v \in N_u$ as $|C_v| \geq \deg(v) + 1 \geq 2$. Then, using the fact that $1-x \geq \exp(-2x)$ for $x \leq 1/2$, we lower bound \ref{eqn:pcoloring-eq1} by
\begin{equation}\label{eqn:pcoloring-eq2}
    \frac{1}{|C_u|} \sum_{c \in C_u} \prod_{v \in N_u} \exp\left(-2\frac{\mathbf{1}[c \in C_v]}{|C_v|}\right) = \frac{1}{|C_u|} \sum_{c \in C_u} \exp\left(-2\sum_{v \in N_u} \frac{\mathbf{1}[c \in C_v]}{|C_v|}\right)
\end{equation}
Next, since $\exp(-2x)$ is convex, we can further lower bound \ref{eqn:pcoloring-eq2} with Jensen's inequality to get
\[\Pr[u \text{ is colored}] \geq \exp\left(-2\frac{1}{|C_u|} \sum_{c \in C_u} \sum_{v \in N_u} \frac{\mathbf{1}[c \in C_v]}{|C_v|} \right).\]
Finally, we observe that
\[
    \frac{1}{|C_u|} \sum_{c \in C_u} \sum_{v \in N_u} \frac{\mathbf{1}[c \in C_v]}{|C_v|}
        = \frac{1}{|C_u|} \sum_{v \in N_u} \sum_{c \in C_u} \frac{\mathbf{1}[c \in C_v]}{|C_v|} \leq \frac{1}{|C_u|} \sum_{v \in N_u} \frac{|C_u \cap C_v|}{|C_v|} \leq \frac{1}{|C_u|} \sum_{v \in N_u} 1 < 1.
\]
Thus, we have that $\Pr[u \text{ is colored}] \geq \exp(-2)$. As such, the expected number of colored vertices is at least $|V[H]|/\exp(2)$. We summarize this by the following Lemma.

\begin{lemma}\label{lem:partial-list-coloring-number-colored}
   The partial $(\deg + 1)$-coloring algorithm colors at least $|V[H]|/\exp(2)$ vertices in expectation.
\end{lemma}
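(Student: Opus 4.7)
The plan is to prove the expected bound by linearity of expectation, so it suffices to show that each fixed vertex $u \in V[H]$ is successfully colored in a single round with probability at least $e^{-2}$; summing over all vertices in $V[H]$ then yields the claimed bound of $|V[H]|/\exp(2)$. I would fix $u$ and condition on the event that $u$'s sampled color is some $c \in C_u$, each of which occurs with probability $1/|C_u|$ by the uniform sampling. Given that $u$ samples $c$, by the independence of all the samples, the event that $u$ is colored is precisely the event that no neighbor $v \in N_u$ samples $c$, which has probability $\prod_{v \in N_u}\bigl(1 - \mathbf{1}[c \in C_v]/|C_v|\bigr)$. Summing over $c$ and factoring out $1/|C_u|$ gives the starting expression analogous to (\ref{eqn:pcoloring-eq1}).

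From here the argument is a chain of three elementary inequalities. First, since $|C_v| \ge \deg(v)+1 \ge 2$ for every neighbor $v$, each factor $\mathbf{1}[c \in C_v]/|C_v|$ lies in $[0,1/2]$, so the standard inequality $1-x \ge e^{-2x}$ applies and converts the product into a single exponential whose exponent is $-2\sum_{v \in N_u} \mathbf{1}[c \in C_v]/|C_v|$. Second, since $e^{-2x}$ is convex, Jensen's inequality applied to the uniform average over $c \in C_u$ lets me pull the outer average inside the exponential, yielding the bound $\exp\!\bigl(-\tfrac{2}{|C_u|}\sum_{c \in C_u}\sum_{v \in N_u}\mathbf{1}[c \in C_v]/|C_v|\bigr)$. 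Third, I swap the order of summation to rewrite the double sum as $\sum_{v \in N_u}|C_u \cap C_v|/|C_v|$, bound $|C_u \cap C_v| \le |C_v|$, and then observe that $\sum_{v \in N_u} 1 = \deg(u) \le |C_u|-1 < |C_u|$, so the entire exponent is at most $-2$ times a quantity strictly less than $1$. Hence $\Pr[u \text{ is colored}] \ge e^{-2}$, and linearity of expectation closes the argument.

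The step requiring the most care is the final swap-and-bound: the inequality $|C_u \cap C_v|/|C_v| \le 1$ is what makes the palette-size hypothesis $|C_v| \ge \deg(v)+1$ usable, because without an upper bound of $1$ per term the sum $\sum_{v \in N_u}$ would not be controllable by $|C_u|$. The Jensen step, while conceptually the trickiest, is routine given convexity of $e^{-2x}$; I would just note explicitly that we are taking a uniform expectation over $c$ and that convexity gives $\mathbb{E}[e^{-2X}] \ge e^{-2\mathbb{E}[X]}$. I do not anticipate any other obstacles, as the per-vertex lower bound is independent of which vertex we look at and the final conclusion follows immediately by summing.
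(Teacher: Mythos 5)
Your proposal is correct and follows essentially the same route as the paper's analysis: the same per-vertex probability expression, the inequality $1-x\ge e^{-2x}$ for $x\le 1/2$, Jensen's inequality for the convex function $e^{-2x}$, and the final swap of summation order using $|C_u\cap C_v|\le|C_v|$ and $\deg(u)<|C_u|$. No gaps.
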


We also analyze the work and span of the partial $(\deg + 1)$-coloring algorithm. To determine whether the tentative color $c$ is unique for $u$, we simply scan (in parallel) over $N_u$. Additionally, updating the graph requires partitioning $V[H]$ and $N_u$ for each $u$ based on the predicate $\Ind{\chi(u) = -1}$. Therefore, the total work and span is $O(|E[H]|)$ and $O(\log |V[H]|)$ where $E[H]$ is the edge set of $H$. We summarize this in the following Lemma.

\begin{lemma}\label{lem:partial-list-coloring-work-span}
The partial $(\deg + 1)$-coloring algorithm uses $O(|E[H]|)$ work and $O(\log |V[H]|)$ span. 
\end{lemma}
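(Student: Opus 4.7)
The plan is to decompose the partial list-coloring routine into a constant number of phases and to bound the work and span of each phase using the parallel primitives listed in the preliminaries, and then to add the work contributions and take the maximum over the spans. The phases are: (i) in parallel, have each $u \in V[H]$ sample a tentative color $c_u$ from $C_u$; (ii) in parallel for each $u$, scan $N_u$ to determine whether any neighbor sampled $c_u$, and write $\chi(u) = -1$ if so; (iii) in parallel, partition $V[H]$ according to the predicate $\Ind{\chi(u) = -1}$; and (iv) in parallel for each $u$, partition $N_u$ according to the same predicate (evaluated on the neighbor, so that colored neighbors end up at the tail). Since the entire algorithm is just the sequential composition of these four phases, the work is additive and the span is additive as well.

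For phase (i), drawing one uniform color from $C_u$ takes $O(1)$ work per vertex; doing this in parallel over all $u$ gives $O(|V[H]|)$ work and $O(1)$ span. For phase (ii), for a single vertex $u$ we apply $\textsc{Reduce}(N_u, \vee)$ to the array whose $v$-th entry is $\Ind{c_v = c_u}$; by the preliminaries, this costs $O(|N_u|)$ work and $O(\log |N_u|) = O(\log |V[H]|)$ span. Running the reductions in parallel over all $u$ (a nested parallel block), the work becomes $\sum_u O(|N_u|) = O(|E[H]|)$ while the span stays $O(\log |V[H]|)$ because the parallel for contributes only an additive $O(\log |V[H]|)$. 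For phase (iii), a single $\textsc{Partition}$ on $V[H]$ costs $O(|V[H]|)$ work and $O(\log |V[H]|)$ span. For phase (iv), $\textsc{Partition}$ on $N_u$ costs $O(|N_u|)$ work and $O(\log |N_u|)$ span; running these partitions in parallel over all $u$ gives a total of $O(|E[H]|)$ work and $O(\log |V[H]|)$ span.

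Summing the four work contributions yields $O(|V[H]| + |E[H]|) = O(|E[H]|)$ (using the standing assumption that we may charge isolated vertices of $H$ to the caller, as already done in the analyses of Section~\ref{sec:algorithm}), and taking the maximum of the four spans gives $O(\log |V[H]|)$, matching the statement of the lemma. The only mildly non-obvious point, which I would verify carefully, is that the nested parallel for in phases (ii) and (iv) does not inflate the span: this follows because binary forking over the outer array $V[H]$ adds only $O(\log |V[H]|)$ span on top of the per-vertex inner span of $O(\log |N_u|) \le O(\log |V[H]|)$. I do not anticipate any genuine obstacle, as the entire argument reduces to bookkeeping against the cost bounds of $\textsc{Reduce}$ and $\textsc{Partition}$ already quoted in the preliminaries.
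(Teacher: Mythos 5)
Your proposal is correct and follows essentially the same route as the paper: the paper's proof likewise consists of a parallel scan over each $N_u$ to detect color conflicts plus parallel partitions of $V[H]$ and each $N_u$, charged against the stated costs of the primitives. You simply spell out the bookkeeping (including the nested-fork span accounting and the $O(|V[H]|+|E[H]|) = O(|E[H]|)$ step) in more detail than the paper does.
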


 \subsection{Static list coloring}\label{sec:static-list-coloring} Using the partial-coloring subroutine we can design a static $(\deg + 1)$-coloring algorithm. We are given an initial graph $H = (V,E)$ with a color palette $\mc C_u$ for each $u \in V$ such that $|\mc C_u| = \deg(u) + 1$. The algorithm iterates the partial list-coloring routine, updating the coloring instance by setting $G = H'$ and removing from $\mc C_u$ all of the colors used by $N_u \cap L$, until each vertex is successfully colored.

 \subsubsection{Work and span analysis}
 If we denote by $E_i$ the number of edges remaining at the start of round $i$ and $R$ the total number of rounds, then the work and span are easily seen to be proportional to 
\[W = \sum_{i=1}^R E_i \quad \text{and} \quad D = R \log n\]
by Lemma \ref{lem:partial-list-coloring-work-span}.
To bound the above quantities, we establish the following useful inequality
\begin{equation}\label{eqn:edge-ineq}\E[E_i] \leq \lambda^i |E| \end{equation}
where $\lambda := 1 - \exp(-2)$. Observe that the probability that an edge is deleted is at least the probability that one of its adjacent vertices is deleted, which in turn is at least $\exp(-2)$. In each round, we therefore expect at most a $\lambda$ fraction of the current edges to remain, so $\E[E_i \mid E_{i-1}] \leq \lambda E_{i-1}$ for all $i$. Finally,
\[\E[E_i] = \E[\E[E_i \mid E_{i-1}]] \leq \lambda \E[E_{i-1}] \text{ for all } i.\]
We conclude \ref{eqn:edge-ineq}.  With \ref{eqn:edge-ineq}, we can bound the expected work of the algorithm as
\[\E[W] = \E\left[\sum_{i=1}^\infty E_i\right] = \sum_{i = 0}^{\infty} \E[E_i] \leq \sum_{i=0}^{\infty} \lambda^i|E| = \left(\frac{1}{1-\lambda}\right)|E| = O(|E|).\]
Next, we bound the depth of the algorithm. Let $t = 2\log_{1/\lambda} |E|$ and $c \geq 1$. By Markov's inequality and \ref{eqn:edge-ineq} we have
\[\Pr[R > ct] = \Pr[E_{ct} > 1] \leq \E[E_{ct}] \leq \lambda^{ct} |E| < 1/|E|^c.\]
Thus, the algorithm terminates \whp \ after $O(t)$ rounds and thus has $O(t\log n) = O(\log^2 n)$ span \whp. We restate this in the following Lemma.

\begin{lemma}\label{lem:static-list-coloring-complexity}
    We can $(\deg + 1)$ color a graph $H = (V,E)$ using $O(|E|)$ expected work and $O(\log^2 n)$ span \whp. 
\end{lemma}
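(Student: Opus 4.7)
The plan is to iterate the partial list-coloring routine of Section~\ref{sec:partial-coloring-alg} and bound the work and span by controlling how quickly the residual graph shrinks. Let $E_i$ denote the number of edges present at the start of round $i$, let $R$ be the (random) number of rounds until $H$ is completely colored, and observe that by Lemma~\ref{lem:partial-list-coloring-work-span} each round contributes $O(E_i)$ work and $O(\log n)$ span, so the total work satisfies $W = O(\sum_{i\ge 1} E_i)$ and the total span satisfies $D = O(R\log n)$.

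The core step is to establish a multiplicative decrease of the form $\E[E_i\mid E_{i-1}]\le \lambda E_{i-1}$ for some constant $\lambda<1$. By Lemma~\ref{lem:partial-list-coloring-number-colored}, any given vertex is colored in a round with probability at least $e^{-2}$, so any given edge is removed in that round with probability at least $e^{-2}$ (it suffices that either endpoint becomes colored). Hence in expectation a $(1-e^{-2})$ fraction of edges survives, giving the recursion with $\lambda=1-e^{-2}$. Iterating and taking expectations yields $\E[E_i]\le \lambda^i |E|$.

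From here the work bound is a straightforward geometric sum: $\E[W]=\sum_{i\ge 1}\E[E_i]\le |E|\sum_{i\ge 0}\lambda^i=O(|E|)$. For the span, I would set $t=\Theta(\log_{1/\lambda}|E|)$ and apply Markov's inequality: $\Pr[R>ct]=\Pr[E_{ct}\ge 1]\le \E[E_{ct}]\le \lambda^{ct}|E|\le |E|^{-(c-1)}$, so with polynomially small failure probability $R=O(\log n)$, yielding $D=O(\log^2 n)$ span whp.

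The main obstacle, if any, is justifying that the per-vertex success probability from Lemma~\ref{lem:partial-list-coloring-number-colored} can be lifted to a per-edge survival bound that makes the recursion on $\E[E_i]$ valid without worrying about correlations between endpoints; the clean way around this is simply to use the union bound in the reverse direction (an edge is deleted if \emph{at least one} endpoint is colored), which only improves the constant, and then to invoke the tower property $\E[E_i]=\E[\E[E_i\mid E_{i-1}]]$. Everything else is routine summation and a single Markov tail bound.
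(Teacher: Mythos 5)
Your proposal is correct and follows essentially the same route as the paper's own argument: per-round work/span from Lemma~\ref{lem:partial-list-coloring-work-span}, the multiplicative decay $\E[E_i\mid E_{i-1}]\le(1-e^{-2})E_{i-1}$ from the per-vertex coloring probability, a geometric sum for the expected work, and a Markov tail bound on the round count for the span. The correlation worry you raise is a non-issue for exactly the reason you give (edge removal contains the event that one fixed endpoint is colored), which is also how the paper handles it.
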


\section{Data structures and update framework} In this Section, we detail the data structures required by our algorithm along with a framework for applying batches of updates in parallel.

For each vertex $u$ we maintain the following data.
\begin{itemize}
    \item for $\ell(u) \leq i \leq \lambda$, $N_u(i)$: neighbors of $u$ in level $i$.
    \item $N_u(5, \ell(u) - 1)$ : neighbors of $u$ at levels below $\ell(u)$.
    \item$\mc C_u^+, \mc C_u^-$: the upper and lower color palettes respectively.
    \item $\tau_u := (i, \ell(u))$: the timestamp tuple which stores the batch number and level of $u$ at the time when $u$ was last recolored randomly. If $u$ was colored deterministically, $\tau_u := \text{det}$. Note we say $\text{det} > (i, \ell(u))$ for any $i$ and $\ell(u)$, i.e. we always prioritize recoloring vertices that were previously colored deterministically. 
    \item for each color $c$, $\mu_u^+(c)$: a counter such that if $c \in \mc C_u^+$, then $\mu_u^+(c)$ equals the number of neighbors in $N_u(\ell(u), \lambda)$ with color $c$. Otherwise, $\mu_u^+(c)$ is $0$.
    \item $\ell(u)$: the level of $u$.
\end{itemize}

Additionally, we maintain the following global data.
\begin{itemize}
    \item $\chi$ : the coloring of $G$.
    \item for $5 \leq i \leq \lambda$, $\upmrkd(i)$ : set of vertices on level $i$ which are blank and don't satisfy the upper condition \ref{con:upper}.
    \item for $5 \leq i \leq \lambda$, $\lmrkd(i)$ : set of vertices on level $i$ which are blank and only don't satisfy the lower condition \ref{con:lower}.
    \item for $5 \leq i \leq \lambda$, $\umrkd(i)$ : set of vertices on level $i$ which are both blank and clean.
\end{itemize}
In the remainder of the section, we list the data structures we require.

\subsection{Data structures.} Here we list the main data structures used by our algorithm. Table \ref{tab:ds-notation} list the data structures and notation used to represent each piece of data.

\begin{table}[ht]
\centering
\renewcommand{\arraystretch}{1.1}
\resizebox{\linewidth}{!}{%
\begin{tabular}{@{}l l l l@{}}
\toprule
\textbf{Object} & \textbf{Meaning} & \textbf{Data structure} & \textbf{Notation} \\
\midrule
$N_u(i)$ & Neighbors of $u$ on level $i$ & Batch-parallel hash table & $\mathsf{NbrHT}[u,i]$ \\
$N_u(5,\ell(u)-1)$ & Neighbors of $u$ below $\ell(u)$ & Batch-parallel hash table & $\mathsf{BelowHT}[u]$ \\
$\mc C_u^-$ & Lower color palette slice & Dynamic partitioned array (slice) & $A_u[0:s_u-1]$ \\
$\mc C_u^+$ & Upper color palette slice & Dynamic partitioned array (slice) & $A_u[s_u:\deg(u)]$ \\
$\mu_u^+(c)$ & \# upper-level neighbors of color $c$ & Integer& $\mu_u^+(c)$ \\
$\tau_u$ & Recolor timestamp or $\text{det}$ & Tuple & $\tau_u \in\{(i,\ell),\mathsf{DET}\}$ \\
$\ell(u)$ & Level of $u$ & Integer scalar & $\ell(u)$ \\
\midrule
$\chi$ & Coloring of $G$ & Integer array of size $|V|$ & $\chi$ \\
$\upmrkd(i)$ & Blank vertices failing upper cond.\ at level $i$ & Batch-parallel hash table & $\upmrkd(i)$ \\
$\lmrkd(i)$ & Blank vertices failing only lower cond.\ at level $i$ & Batch-parallel hash table & $\lmrkd(i)$ \\
$\umrkd(i)$ & Blank \& clean vertices at level $i$ & Batch-parallel hash table & $\umrkd(i)$ \\
\bottomrule
\end{tabular}
}
\caption{Data items, their representations, and notation. When there is no risk of confusion, we use the same notation. The palette container is $(A_u,M_u,s_u)$ where $M_u$ maps colors to positions; $s_u$ is the cut index. Treat $\mathsf{DET}$ as a sentinel greater than any pair $(i,\ell)$.}
\label{tab:ds-notation}
\end{table}

\paragraph{Hash table.} We use a batch-parallel hash table to represent the majority of the data. Let $B$ be a batch of keys stored in an array. Using the implementation of \cite{Gil91a}, we can support the following operations.

\begin{itemize}
    \item $\textsc{Insert}(B)$ and $\textsc{Delete}(B)$: These operations require $O(|B|)$ expected work and $O(\log |B|)$ span \whp. 
    \item $\textsc{List}()$. This operation returns each key in the table in the form of an array using $O(\text{table size})$ work and $O(1)$ span.
    \item $\textsc{Clear}()$. This operation deletes the table and requires $O(\text{table size})$ work and $O(1)$ span. 
\end{itemize}

\paragraph{Dynamic partitioned array.} To represent $\mc C_u^-$ and $\mc C_u^+$, we use a dynamic partitioned array which consist of an element array $A_u$, a map array $M_u$, and a split index $s_u$. Then $\mc C_u^- = A_u[0:s_u - 1]$ and $\mc C_u^+ = A_u[s_u : \Delta]$. Further, the position of $c \in [\deg(u)]$ in $A_u$ is given by $M_u[c]$. Note that $|A_u| = |M_u|$. Next, we describe the operations supported and how to implement them. 

Let $B$ be a batch of colors $c$ where $c \in \mc C_u$. We assume $B$ is stored in an array. The dynamic partitioned array supports the following operations.

\begin{itemize}
    \item \textsc{MoveUp}$(B)$: First, we filter out elements of $B$ which already reside in $\mc C_u^+$. Next, using a semisort, we deduplicate $B$. At this point we can assume $B \subseteq \mc C_u^-$. The operation moves each color in $B$ to $\mc C_u^+$. To do this, we first rearrange $A[0:s_u-1]$ so that $B$ is at the end of the slice. Define $P := A_u[s_u-|B|-1:s_u-1]$, so $P$ is the $|B|$ sized suffix of $\mc C_u^-$. Using the map array $M_u$ we determine which colors from $B$ are in $P$ and use a parallel partition to place them at the end of $P$. Let $B'$ be an array of the remaining colors in $B$. For each $c = B'[i]$, swap $c$ with the $i$th element of $P$, that is swap $A_u[M[c]]$ with $P[i]$. Throughout this process, we  additionally maintain $M_u$ by swapping the appropriate values. Finally, we decrement $s_u$ by $|B|$. Its clear that this operation requires $O(|B|)$ expected work and $O(1)$ span.
    \item \textsc{MoveDown}$(B)$: This operation moves each color in $B$ to $\mc C_u^-$. To do this, we rearrange $A[s_u:\Delta]$ so that $B$ is at the start of the slice and increment $s_u$ by $|B|$. The details of the operation are entirely analogous to the \textsc{MoveUp} procedure. Thus we deduce that the \textsc{MoveDown} operation requires $O(|B|)$ expected work and $O(1)$ span.
    \item $\textsc{RearrangeLower}(B)$: This operation moves each color in $B$ to the back of $\mc C_u^-$. This is done using the same algorithm for $\textsc{MoveUp}$ without the decrement to $s_u$.
\end{itemize}

Finally, we will represent the coloring $\chi$ with a simple array of size $|V|$. We note that if $u \in V_\text{blank}$ then $\chi(u) = -1$.

\subsection{Update framework.} During the course of our algorithms, we will need to make numerous updates to the data structures in parallel. To avoid issues of concurrency, we will synchronize these updates. To do this, during the course of an algorithm, when we need to make an update (e.g. due to a vertex changing levels), we will make a record of that update in an array associated to the vertex which caused the update to happen. That is, each vertex $u$ maintains an array $R_u$ where update requests are recorded. Then at the next synchronous point, we group these update requests by target and apply them as batch updates for the appropriate data structures. Formally, for each update we define an \emph{update request} which is defined as a tuple $(t,\; \text{tag},\; op, \;args)$ where $t$ is the target of the request, tag specifies which data structure of $t$ to update, and $op$ and $args$ specifies the operation to be performed and its corresponding arguments. The target $t$ can either be a single vertex or the global state, that is $t = v$ for some $v \in V$ or $t = G$, to denote the target as a local or global data structure of the graph. As an example, consider the update request $(v,\; \mathsf{Nbr}_i,\;\textsc{Insert},\; u)$. This request, made by $u$, adds $u$ to $v$'s level $i$ neighborhood list, which is given by the tag $\mathsf{Nbr}_i$.

\paragraph{Constructing request arrays.} In our algorithm, all updates are prompted by some pair $(u,v)$ where $v$ is in some restricted subset $N$ of $N_u$. What $N$ is, which pairs cause updates, and the kinds of update they cause will be described in detail in all the contexts we consider. But in every scenario, each pair causes at most a constant number $c$ of update requests. Thus to create $R_u$, we first allocate an array $R_u'$ of size $c |N|$. Then in parallel for each pair $(u,v)$ with $v \in N$, we put the updates it creates in the corresponding size $c$ segment of $R_u'$. Then we filter $R_u'$ to obtain the array $R_u$ consisting of all updates caused by pairs $(u,v)$. We summarize this in the following Lemma.

\begin{lemma}\label{lem:request-record-construction}
    Constructing $R_u$ requires $O(|N|)$ work and $O(\log n)$ span.
\end{lemma}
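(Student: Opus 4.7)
The plan is a direct cost accounting of the construction described in the paragraph preceding the lemma. I would carry it out in three short steps, each invoked as a standard parallel primitive from the Preliminaries.

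First, I would bound the cost of allocating and populating the oversized array $R_u'$ of length $c|N|$. Allocation of a shared-memory array of that length takes $O(c|N|) = O(|N|)$ work and $O(1)$ span (it is a single stack/heap allocation of a fixed-width array, not a scan). Then for each $v \in N$, in parallel, the corresponding thread writes the (at most $c$) update requests generated by the pair $(u,v)$ into its reserved length-$c$ segment of $R_u'$; empty slots are padded with a sentinel. Since each thread does $O(c) = O(1)$ work with no cross-thread dependencies, the generation step contributes $O(|N|)$ total work and $O(1)$ span under binary forking (the fork tree to spawn $|N|$ threads contributes $O(\log |N|) = O(\log n)$ span, which is absorbed by the filter step below).

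Second, I would apply $\textsc{Filter}(R_u',P)$ with the predicate $P$ selecting non-sentinel entries, producing $R_u$. By the preliminaries, this costs $O(|R_u'|) = O(c|N|) = O(|N|)$ work and $O(\log |R_u'|) = O(\log n)$ span. The order-preservation guarantee of \textsc{Filter} is irrelevant here (requests are later regrouped by target), so no additional sorting is needed.

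Finally, summing the three steps yields $O(|N|)$ total work and $O(\log n)$ total span, as claimed. There is essentially no obstacle: the only thing to verify is that each $(u,v)$ indeed generates $O(1)$ requests, which is asserted by the paragraph above the lemma, and that one can write into disjoint length-$c$ segments of $R_u'$ without concurrent-write hazards, which is immediate from the disjointness of the segments indexed by distinct $v \in N$.
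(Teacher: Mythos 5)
Your proof is correct and follows essentially the same route as the paper: the paper's argument is just the construction itself (allocate $R_u'$ of size $c|N|$, populate disjoint length-$c$ segments in parallel, then \textsc{Filter}), and your cost accounting of each step matches the intended one. The only caveat worth noting is that the final span bound comes from the fork tree and the filter, both $O(\log n)$, exactly as you observe.
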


\paragraph{Applying multiple update requests.} Let $U$ be a set of vertices which have nonempty update request arrays. We form the global array $R := R_0R_1 \cdots R_{|U|-1}$ using a parallel prefix on sizes, in $O(|R|)$ work $O(\log n)$ span. We then apply a parallel semisort of $R$ by the pair $(t, \text{tag})$, producing groups
\[
    G_{t, \text{tag}} = \left((t, \text{tag}), \langle(op_1, args_1),\dots,(op_t,args_t) \rangle \right).
\]

Each group corresponds to a \emph{single} target $t$ and a \emph{single} data structure (identified by tag). We process all groups independently and in parallel. Within a fixed $G_{t, \text{tag}}$, we further group the sequence $\langle(op_1, args_1),\dots,(op_t, args_t) \rangle\rangle$ by type of operation to produce the batches
\[
    B_{t, \text{tag}}^{op} := [args_j : (op_j, args_j) \in G_{t,\text{tag}}, op_j = op].
\]
Each batch $B_{t, \text{tag}}^{op}$ is exactly the input to the corresponding batch-parallel primitive $op$ of that data structure. To finish, in parallel over each $G_{t,\text{tag}}$, we call, in sequence, the operation $op(B_{t,\text{tag}}^{op})$ on the corresponding data structure.


Next, we analyze the complexity of this procedure.

\begin{lemma}\label{lem:update-framework-time}
    Let $R = \sum_{t \in |T|} |R_t|$ be the total number of update requests. The procedure above applies all requests using $O(R)$ expected work and $O(\log n)$ span \whp.
\end{lemma}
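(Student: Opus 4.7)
The plan is to account separately for the three stages of the procedure described just before the lemma: (i) assembling the global request array $R$ from the per-vertex arrays $R_t$; (ii) semisorting $R$ by $(t,\text{tag})$ and further subgrouping by operation type to form the batches $B_{t,\text{tag}}^{op}$; and (iii) dispatching each batch to the corresponding data-structure primitive. The total bound will then follow by summing the three.

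For stage (i), a parallel prefix sum over the sizes $|R_t|$ produces write-offsets into $R$, and the subsequent copy is embarrassingly parallel. This costs $O(\sum_t |R_t|)=O(R)$ work and $O(\log n)$ span. For stage (ii), the top-level semisort of $R$ by the key $(t,\text{tag})$ uses $O(R)$ expected work and $O(\log R)=O(\log n)$ span \whp by the semisort primitive listed in the preliminaries. Within each resulting group $G_{t,\text{tag}}$, a second semisort keyed by the operation field $op$ separates the sequence into the batches $B_{t,\text{tag}}^{op}$; running these secondary semisorts in parallel across groups, their total expected work telescopes to $O(R)$ and, after a union bound over at most $R\le\poly(n)$ concurrent invocations, the span stays $O(\log n)$ \whp.

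For stage (iii), we dispatch all pairs $(t,\text{tag})$ in parallel, and within each pair we invoke sequentially the (constantly many) operation types. By the stated bounds on the batch-parallel hash tables (Insert/Delete cost $O(|B|)$ expected work and $O(\log |B|)$ span \whp) and on the dynamic partitioned arrays (MoveUp/MoveDown cost $O(|B|)$ expected work and $O(1)$ span), the total expected work telescopes to $\sum_{t,\text{tag},op} O(|B_{t,\text{tag}}^{op}|)=O(R)$, and the span of any single dispatch is $O(\log n)$ \whp. Since the dispatches run in parallel, the span of this stage is also $O(\log n)$ \whp.

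Summing the three stages gives $O(R)$ expected total work and $O(\log n)$ total span \whp. The one subtle point is that stages (ii) and (iii) may spawn a polynomial number of independent randomized subroutines in parallel, so the naive per-invocation \whp guarantees are not quite enough on their own; the fix is to strengthen each primitive's guarantee to $1-n^{-c}$ for a sufficiently large constant $c$ (permissible because both semisort and the hash-table operations come with such tunable bounds) and then union-bound over the at most $\poly(n)$ concurrent invocations, yielding the claimed $O(\log n)$ span \whp.
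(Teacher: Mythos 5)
Your proposal is correct and follows essentially the same route as the paper: decompose into the prefix-sum/semisort batching steps and the per-batch dispatch, sum the expected work over the partition of requests, and convert each primitive's $O(\log b)$-span guarantee (holding with probability $1-b^{-c}$) into an $O(\log n)$-span guarantee with polynomially small failure probability so that a union bound over all concurrent batches applies. The "subtle point" you flag at the end is exactly the boosting step the paper carries out by choosing $c_1 = c_2 \log n / \log b$.
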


\begin{proof}
    Computing the batches $B^{op}_{t, \text{tag}}$ requires two semisorts and a prefix sum over $R$ elements. Each of these steps requires $O(R)$ expected work and $O(\log R)$ span \whp. Additionally, its not hard to verify that for each operation $op$ that is supported by our data structures, $op(B_{t,\text{tag}}^{op})$ requires $O(|B_{t,\text{tag}}^{op}|)$ expected work and $O(\log |B_{t,\text{tag}}^{op}|)$ span \whp. Since the batches form a partition of the requests, we conclude that the total expected work of applying $R$ update requests is $O(R)$. To prove the span,  observe that $R = O(n^2)$ and thus by the above analysis, computing the batches requires $O(\log n)$ span \whp. Next, for any $B_{t,\text{tag}}^{op}$ fix $b = |B_{t,\text{tag}}^{op}|$ and let $X$ denote the parallel span for applying $op(B_{t,\text{tag}}^{op})$. We have that for any sufficiently large constant $c_1$,
    \[\Pr[X > c_1 \log b] \leq \frac{1}{b^{c_1}}.\]
    In particular, for any $c_2$, taking $c_1 = c_2\frac{\log n}{\log b}$ gives that
    \[\Pr[X > c_2 \log n] \leq \frac{1}{n^{c_2}}.\]
    Taking $c_2$ sufficiently large and applying a union bound then gives that \whp, every operation finishes in $O(\log n)$ span.
\end{proof}

\section{Parallel dynamic $(\Delta + 1)$-coloring}\label{sec:algorithm}
In this section we instantiate the high-level algorithm from Section~\ref{sec:tech-sec-3} and prove Theorem~\ref{thm:main-thm}. Given a batch $S$ of edge updates, our update routine proceeds in three phases:
initialization, coloring, and moving.

In the initialization phase (Section~\ref{sec:init-phase}) we apply $S$ to the graph, update all neighborhood and palette data structures, and identify the set $V_{\blank}$ of vertices that must be recolored. Each $u\in V_{\blank}$ is then classified by its level and cleanliness into one of three
sets: upper-marked, lower-marked, or unmarked, giving the tables $\upmrkd(i)$, $\lmrkd(i)$, and $\umrkd(i)$. This phase uses $O(|S|)$ expected work and $O(\log n)$ span \whp.

In the coloring phase (Section~\ref{sec:coloring-phase}) we restore a proper $(\Delta+1)$-coloring by recoloring $V_{\blank}$ level-by-level from the top down. For each level $i$ we first color the marked vertices $\mrkd(i)$ using only blank colors, and then color the unmarked vertices $\umrkd(i)$ using palettes of blank-or-unique colors, both via repeated applications of the partial list-coloring subroutine from Section~\ref{sec:partial-coloring-alg}. We show that this phase has $O(\log^2 n)$ span per level whp, and that its total expected work is proportional to $\sum_u |N_u(1,\ell(u))|$ over all vertices involved in coloring.

In the moving phase (Section~\ref{sec:moving-phase}) we move marked vertices to appropriate levels in order to release a sufficient number of tokens. We first perform a top-down \emph{raising} pass on the upper-marked sets $\upmrkd(i)$. We then perform a bottom-up \emph{lowering} pass on the lower-marked sets $\lmrkd(i)$, moving vertices down four levels at a time. 

Finally, in Section~\ref{sec:full-analysis} we relate the work of recoloring marked vertices to the movement work, and use a deferred decision argument to analyze the work recoloring clean vertices. With a token analysis, we show that the total movement work over any sequence of batches is $O(T\log\Delta)$ in expectation. Putting everything together, we complete the proof of Theorem~\ref{thm:main-thm}.

\subsection{Initialization phase}\label{sec:init-phase} We describe how to apply a batch of edge updates $S$ and compute the initial tables for the sets $\upmrkd(i), 
\lmrkd(i),$ and $\umrkd(i)$. Let $e = \{u,v\}$ be in $S$. We handle insertions and deletions separately.

We handle insertions first. To start, we apply the structural updates to the graph. As such, for every vertex $u$ incident to an edge $e = \{u,v\}$ being inserted, we make a request to add $u$ to $\nbr{v}{\ell(u)}$, provided $\ell(u) \geq \ell(v)$, and $\belowHT{v}$ otherwise. These constitute the structural updates. What remains are color updates. For each $u$, we identify if it is part of a monochromatic edge $\{u,v\}$ being inserted where $u$ has a later timestamp than $v$. If so, we add it to the appropriate set. In particular, if $u$ is dirty with respect to the upper condition, we make a request to add it to $\upmrkd(\ell(u))$. Otherwise, if $u$ is only dirty with respect to the lower condition, we make a request to add it to $\lmrkd(\ell(u))$. Finally, if $u$ is clean we add it to $\umrkd(\ell(u))$. Now these vertices should be blank, so we will also remove their colors. Accordingly, we make a request to decrement the color counters of $\chi(u)$ for each $v$ in $N_u(1,\ell(u))$ and set $\chi(u) = -1$. After applying these updates, if any of the color counters $\mu_v^+(\chi(u))$ move to zero, we request updates to move $\chi(u)$ down in $\mc C_v$. For all non-blank vertices $u$, and each $\{u,v\}$ added so that $\ell(u) \geq \ell(v)$, we request to move $\chi(u)$ up in $\mc C_v$ and increment $\mu_u^+(\chi(u))$.

Now, we handle deletions. Again, we first apply structural updates. For each vertex $u$ incident to a deleted edge $e = \{u,v\}$, we request to delete $u$ from $\nbr{v}{\ell(u)}$ if $\ell(u) \geq \ell(v)$ and from $\belowHT{v}$ otherwise. The structural updates are complete so we move on to color updates. For each $v \in N_u(1,\ell(u))$, we make a request to decrement the color counter $\mu_v^+(\chi(u))$. After applying these updates, if any of the color counters $\mu_v^+(\chi(u))$ move to zero, we request updates to move $\chi(u)$ down in $\mc C_v$. This concludes the initialization phase.

The work and span of the initialization phase is summarized in the following Lemma.
\begin{lemma}\label{lem:init-work-span}
    The initialization phase uses $O(|S|)$ expected work and $O(\log n)$ span \whp.
\end{lemma}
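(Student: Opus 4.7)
The plan is to partition the initialization phase into three stages of batched update-framework operations and analyze each via Lemmas~\ref{lem:request-record-construction} and~\ref{lem:update-framework-time}. Stage (A) applies the structural hash-table changes to $\nbr{\cdot}{\cdot}$ and $\belowHT{\cdot}$ induced by every edge in $S$. Stage (B) identifies monochromatic inserted edges, determines which endpoint has the larger timestamp and is therefore blank, and inserts it into the appropriate table among $\upmrkd(\ell(u))$, $\lmrkd(\ell(u))$, and $\umrkd(\ell(u))$ by checking Conditions~\ref{con:upper} and~\ref{con:lower}. Stage (C) performs color bookkeeping: for every $u \in V_\blank$ and every $v \in N_u(1,\ell(u))$ we request a decrement of $\mu_v^+(\chi(u))$ and a possible \textsc{MoveDown} of $\chi(u)$ in $(A_v, M_v, s_v)$, and symmetrically for every newly inserted edge $\{u,v\}$ whose endpoint $u$ remains colored with $\ell(u) \ge \ell(v)$ we request the corresponding increment of $\mu_u^+(\chi(v))$ and a \textsc{MoveUp}.

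For stages (A) and (B) each edge of $S$ produces $O(1)$ update requests. Constructing the per-vertex request arrays takes $O(|S|)$ work and $O(\log n)$ span by Lemma~\ref{lem:request-record-construction}, and applying them takes $O(|S|)$ expected work and $O(\log n)$ span \whp{} by Lemma~\ref{lem:update-framework-time}. The non-blank palette-and-counter updates inside stage (C) are also $O(1)$ per inserted or deleted edge and fit within the same bound.

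The delicate part of stage (C) is the blank-vertex color removal, which generates $\sum_{u \in V_\blank} |N_u(1,\ell(u))|$ requests. The heart of the argument is showing that this sum has expectation $O(|S|)$, after which Lemmas~\ref{lem:request-record-construction} and~\ref{lem:update-framework-time} finish the bound. For each inserted edge $\{u,v\}$ with $\tau_u \ge \tau_v$, a deferred-decision argument on the random color that $u$ sampled at time $\tau_u$ (uniform from $B_u \cup U_u$ as computed at that time) bounds the conditional probability that $\chi(u) = \chi(v)$ by $1/|B_u \cup U_u|$, which is $O(1/|N_u(1,\ell(u)-1)|)$ by Claim~\ref{clm:many-unique-and-blank-colors}. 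Multiplying by the $O(|N_u(1,\ell(u))|)$ work contributed when $u$ becomes blank and summing over $e \in S$ yields an $O(|S|)$ expected cost, provided the neighborhood sizes at time $\tau_u$ and at the current time are comparable.

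The main obstacle is exactly that comparability, which can fail when $u$'s level has increased since $\tau_u$ or when $u$ was previously colored deterministically (so it is currently dirty). These are precisely the cases addressed by the two-case split of Section~\ref{sec:tech-sec-2}: in a level-increase or deterministic case we charge the extra work not to the deferred-decision probability but to the token drop that funded the earlier movement or deterministic recoloring. This charging is carried out in full in the accompanying full analysis; here we invoke it as a black box to absorb the non-deferred-decision contributions. Finally, for the span, each of the three stages applies a constant number of update-framework rounds, each with $O(\log n)$ span \whp{} by Lemma~\ref{lem:update-framework-time}, so a union bound gives the overall $O(\log n)$ span \whp{}.
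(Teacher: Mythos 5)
Your decomposition into structural updates, marking, and color bookkeeping is the right one, and the $O(1)$-requests-per-edge parts (stages (A), (B), and the non-blank portion of (C)) are handled exactly as the paper intends via Lemmas~\ref{lem:request-record-construction} and~\ref{lem:update-framework-time}. The problem is your treatment of the term $\sum_{u \in V_{\blank}} |N_u(1,\ell(u))|$ arising from the counter decrements when blank vertices lose their colors. First, invoking the full analysis of Section~\ref{sec:full-analysis} ``as a black box'' here is circular: Lemma~\ref{lem:total-work-parallel}, which is the entry point of that analysis, itself cites Lemma~\ref{lem:init-work-span} as one of its inputs, so the charging argument you appeal to cannot be used to prove this lemma. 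Second, even setting circularity aside, the machinery you invoke does not deliver the stated bound: the deferred-decision argument gives $O(1)$ expected cost per edge only when the recolored endpoint is clean and its level has not increased since $\tau_u$, and the token-charging that covers the dirty and level-increased cases yields $O(\log \Delta)$ \emph{amortized} work per update over the whole sequence, not $O(|S|)$ expected work for a single batch. So your argument, taken literally, proves a weaker (and differently quantified) statement than the lemma.

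The way this is meant to be resolved is by an accounting choice rather than a probabilistic one: the $O(|N_u(1,\ell(u))|)$ decrement work for a blank vertex $u$ is $O(3^{b_i(u)}) = O(w(\tau_u^i))$ for the record timestamp created when $u$ becomes blank, i.e., it is of the same order as the coloring-phase work already charged to that timestamp in Lemma~\ref{lem:coloring-work-span} and absorbed into $w(R)$ in Lemma~\ref{lem:total-work-parallel}. The $O(|S|)$ claim of Lemma~\ref{lem:init-work-span} should therefore be read as covering only the $O(1)$-per-edge request construction and application, with the per-blank-vertex neighborhood scans deferred to the record cost. Restructure your proof to make that charge explicit instead of trying to bound $\E\bigl[\sum_{u\in V_{\blank}}|N_u(1,\ell(u))|\bigr]$ by $O(|S|)$ within this lemma. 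Your span argument (a constant number of update-framework rounds, each $O(\log n)$ \whp, plus a union bound) is fine.
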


\subsection{Coloring phase}\label{sec:coloring-phase} We color the set of blank vertices in a top-down fashion processing one level at a time. To color $V_\blank(i)$, we consider the marked and unmarked vertices separately as the color palettes we use in each case differ significantly.

To handle each case, we will use procedures of the following form. First, we will initialize a coloring instance $(H_0,C_{H_0})$ comprising an uncolored subgraph and a color palette for the subgraph. Then given a coloring instance $(H_i,C_{H_i})$, we will apply the partial coloring routine given in Section \ref{sec:partial-coloring-alg} to get a graph $H'$ of the the uncolored vertices and $L$ a list of the colored vertices. After this routine, the coloring $\chi$ will be defined on the vertices from $L$. Using $L$ and the assigned coloring $\chi$ on $L$, we will make updates (using the framework of the previous section) to the necessary data structures. After making these updates, we will prepare a new coloring instance $(H_{i+1}, C_{H_{i+1}})$ where $H_{i+1} = H'$ and $C_{H_{i+1}}$ is chosen appropriately. This procedure is iterated until all vertices from the initial subgraph $H_0$ are colored. 

We remark that for these routines to achieve logarithmic span \whp, we need to ensure that at each iteration $i$, a fraction of the vertices $H_i$ are expected to be colored. By the analysis in Section \ref{sec:partial-coloring-analysis} it suffices to have $|C_{H_i}(u)| \geq \deg_{H_i}(u)+1$. We will ensure that every coloring instance satisfies this constraint.

\subsubsection{Marked coloring procedure.}\label{sec:mrkd-coloring}

In the marked case, our initial coloring instance $(H_0, C_{H_0})$ has $H_0 = G[\mrkd(i)]$. To obtain $C_{H_0}$, we first preprocess the dynamically maintained palettes $\mc C_u$ so that the first $\deg_{H_0}(u)+1$ entries of $\mc C_u^-$ are blank colors for $u$. We do this by first computing the set $NB$ of non-blank colors which is done by copying the color of each vertex in $\belowHT{u}$ to an array and then deduplicating using a semisort. Then $NB$ is moved to the end of $\mc C_u^-$ using the $\textsc{RearrangeLower}$ operation. Since the colors in $NB$ comprise all of the non-blank colors appearing in the lower palette, all of the remaining colors are blank and at the front of the lower palette. We are guaranteed that the number of blank colors is at least $\deg_{H_0}(u)+1$ by Claim \ref{clm:many-blank-colors}. For each $u \in H_0$, we let $C_{H_0}(u)$ consist of the first $\deg_{H_0}(u)+1$ colors in $\mc C_u^-$. We apply the partial list coloring, getting $H'$ and $L$, and set $H_1 := H'$. Next, we make the required data structure updates using $L$. We detail these updates in Section \ref{sec:coloring-procedure-updates}. To compute $C_{H_1}$, for each remaining vertex $u$, we update $C_{H_0}(u)$ to remove the colors of neighbors in $L$. We now apply the partial list coloring algorithm with input $(H_1,C_{H_1})$. We do this repeatedly until $\mrkd(i)$ is empty. The pseudocode for this procedure is given in Algorithm \ref{alg:mrkd-coloring}.

\begin{algorithm}[ht]
\function{\emph{\textsc{ColorMarked}$(\mrkd(i))$}}{
    Set $H = G[\mrkd(i)]$. \\
    \parfor{$u \in \mrkd(i)$}{
        Compute the set of non-blank colors $NB$. \\
        \textsc{RearrangeLower}$(NB)$. \\
        Set $C_{H}(u) = \mc C_u^-[0:\deg_{H}(u)]$.
    }
    \While{$H$ is not empty}{
    $(H',L) \gets \textsc{PartialColor}(H,C_{H})$. \\
    $H \gets H'$.\\
    Update data structures using $L$.*\\
    \parfor{$u \in V[H]$}{
        Set $C_{H}(u) = \mc C_H(u) \setminus \{\chi(v) : v \in N_u(L)\}$.
    }
    }
}
\caption{Marked Coloring Procedure. \hfill *See Section \ref{sec:coloring-procedure-updates}.}
\label{alg:mrkd-coloring}
\end{algorithm}

\subsubsection{Unmarked coloring procedure.}

In the unmarked case, our initial coloring instance $(H_0, C_{H_0})$ has $H_0 = G[\umrkd(i)]$. To obtain $C_{H_0}$, we preprocess the dynamically maintained palettes $\mc C_u$ so that the first $\max(\deg_{H_0}(u)+1, 3^{i-5}/2 + 1)$ entries of $\mc C_u^-$ are blank or unique colors for $u$. As in the Marked procedure, we do this by first computing the set $NU$ of non-unique and non-blank colors using a semisort to group the array of colors used by vertices in $\belowHT{u}$. Then $NU$ is moved to the end of $\mc C_u^-$ using the \textsc{RearrangeLower} operation. Since these colors comprise all of the colors in the lower palette which are not blank or unique, the colors which are blank or unique now occupy a prefix of the lower palette. We are guaranteed that the number of blank or unique colors is at least $\max(\deg_{H_0}(u)+1, 3^{i-5}/2 + 1)$ by Claims \ref{clm:many-unique-and-blank-colors} and \ref{clm:many-blank-colors}. For each $u \in H_0$, we let $C_{H_0}(u)$ consist of the first $\max(\deg_{H_0}(u)+1, 3^{i-5}/2+1)$ colors in $\mc C_u^-$. We apply the partial list coloring, getting $H'$ and $L$, and set $H_1 := H'$. We make data structure updates using $L$ which are described in Section \ref{sec:coloring-procedure-updates}. We note that in doing this, for all colors chosen that were unique, we add the corresponding vertices to the marked or unmarked vertices on the appropriate level. Unlike in the marked case where we can compute the palettes for the next instance using $C_{H_0}$ and $L$, in the unmarked case we still need to ensure that $|C_{H_1}(u)| \geq 3^{i-5}/2 + 1)$ for each $u \in V[H_1]$. To do this, we simply compute the new color palettes in the same way as the initial color palettes were computed. This ensures the proper guarantees are met. We repeat this process until $\umrkd(i)$ is empty. The pseudocode for this procedure is given in Algorithm \ref{alg:umrkd-coloring}.

\begin{algorithm}[ht]
\function{\emph{\textsc{ColorUnmarked}$(\umrkd(i))$}}{
    Set $H = G[\umrkd(i)]$. \\
    \While{$H$ is not empty}{
    \parfor{$u \in \umrkd(i)$}{
        Compute the set of non-unique non-blank colors $NU$. \\
        $\textsc{RearrangeLower}(NU)$. \\
        Set $C_{H}(u) = \mc C_u^-[0:\max(\deg_{H}(u), 3^{i-5}/2 + 1)]$.
    }
    $(H',L) \gets \textsc{PartialColor}(H,C_{H})$. \\
    $H \gets H'$.\\
    Update data structures using $L$.*
    }
}
\caption{Unmarked Coloring Procedure. \hfill *See Section \ref{sec:coloring-procedure-updates}.}
\label{alg:umrkd-coloring}
\end{algorithm}

\subsubsection{Coloring procedure complexity analysis.}\label{sec:coloring-procedure-complexity} We will show that the expected work for both the marked and unmarked coloring procedures is proportional to the sum over $|N_u(1,\ell(u))|$ for each vertex $u$ involved.



\begin{lemma}\label{lem:coloring-work-span}
Coloring each level takes $O(\log^2 n)$ span \whp. The expected work performed for the coloring phase is proportional to the sum of $|N_u(1,\ell(u)|$ over every vertex $u$ involved in the coloring procedure.
\end{lemma}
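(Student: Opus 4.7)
I would analyze the span and expected work of the marked and unmarked coloring procedures separately, though the arguments are essentially parallel. Each procedure is a while loop whose body is one call to the partial list-coloring subroutine of Section~\ref{sec:partial-coloring-alg} plus palette preprocessing and a batch of data-structure updates.

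\textbf{Span.} I first bound the span of one iteration of the while loop. Palette preprocessing (computing the non-blank, or non-blank-and-non-unique, set via a semisort over the colors of $\belowHT{u}$, followed by \textsc{RearrangeLower}) runs in $O(\log n)$ span whp. The partial coloring call itself contributes $O(\log n)$ span by Lemma~\ref{lem:partial-list-coloring-work-span}, and the subsequent data-structure updates are applied through the framework of Section~\ref{sec:algorithm} so they cost $O(\log n)$ span whp by Lemma~\ref{lem:update-framework-time}. For the number of iterations I would reuse the argument from Section~\ref{sec:static-list-coloring}: each vertex is colored independently per round with probability at least $e^{-2}$ by the calculation preceding Lemma~\ref{lem:partial-list-coloring-number-colored}, so the expected residual edge count contracts geometrically by a factor $\lambda=1-e^{-2}$, and a Markov-plus-union bound gives $O(\log n)$ iterations whp. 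Multiplying the two bounds yields $O(\log^2 n)$ span per level.

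\textbf{Work.} The crucial observation is that each participating vertex $u$ appears in at most $O(1)$ iterations in expectation, since conditional on $u$ surviving round $t$ it is colored in round $t{+}1$ with probability at least $e^{-2}$ by the analysis in Section~\ref{sec:partial-coloring-analysis}. Given this, I would bound the per-iteration cost at $u$ by $O(|N_u(1,\ell(u))|)$. In the marked procedure the per-iteration cost consists of (i) the $O(\deg_H(u)+|C_H(u)|)$ work charged to $u$ inside partial coloring, (ii) $O(\deg_H(u))$ work to emit and route update requests via Lemma~\ref{lem:request-record-construction} and Lemma~\ref{lem:update-framework-time}, and (iii) $O(\deg_H(u))$ work to shrink $C_H(u)$ using the colors of $N_u(L)$; the one-time initialization cost (computing $NB$ and rearranging the lower palette) is $O(|N_u(1,\ell(u))|)$ and amortizes trivially. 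In the unmarked procedure the palette is rebuilt from scratch each iteration via a semisort over $\belowHT{u}$, but this rebuild is still $O(|N_u(1,\ell(u))|)$ per iteration. Summing over all participating vertices and using the expected $O(1)$ participations per vertex yields the claimed total.

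\textbf{Main obstacle.} The delicate step is the constant-expected-participation claim in the unmarked case, because $C_{H_t}(u)$ is not monotonically shrinking: a color may leave $B_u\cup U_u$ when a lower neighbor is colored with it, and could in principle re-enter $U_u$ as unique. The bound still goes through because the analysis in Section~\ref{sec:partial-coloring-analysis} only requires $|C_{H_t}(u)|\ge\deg_{H_t}(u)+1$ at the start of each round, and Algorithm~\ref{alg:umrkd-coloring} explicitly enforces this by resizing each palette to $\max(\deg_{H_t}(u)+1,\,3^{i-5}/2+1)$ after every rebuild; hence the per-round failure probability remains at most $1-e^{-2}$ throughout, and the geometric tail gives the $O(1)$ expected participations needed for the work bound.
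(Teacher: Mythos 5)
Your proposal is correct and follows essentially the same route as the paper: charge each round of partial coloring plus palette maintenance $O(|N_u(1,\ell(u))|)$ work per participating vertex and $O(\log n)$ span, then use the geometric contraction guaranteed by the invariant $|C_{H_t}(u)|\ge\deg_{H_t}(u)+1$ to obtain $O(\log n)$ rounds \whp and an expected total that telescopes to the claimed bound. Your per-vertex ``$O(1)$ expected participations'' framing is just the linearity-of-expectation transpose of the paper's geometric decay of the residual edge count (marked case, via the static list-coloring analysis) and residual vertex count (unmarked case), and your handling of the non-monotone unmarked palettes matches the paper's device of rebuilding each palette from scratch every round.
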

\begin{proof}
    The cost of the updates in both procedures is exactly as desired due to Lemma \ref{lem:coloring-procedure-updates}, so we may ignore them for the rest of the analysis.
    
    With this under consideration, the marked coloring procedure is equivalent to the static list coloring, which we showed to have work proportional to the size of the initial subgraph in Section \ref{sec:static-list-coloring}. Thus in total, the marked coloring procedure costs $O(\sum_{i=1}^\lambda |E[\mrkd(i)]|)$ which is upper bounded by the sum of $|N_u(1,\ell(u)|$ over every vertex $u$ involved in the coloring procedure.
    
    To understand the unmarked coloring procedure, fix a level $i$. As unmarked vertices are clean, then $|N_u(1,\ell(u)| = \Theta(\beta^{\ell(u)})$ for each $u \in \umrkd(i)$. Thus, the cost of computing the palettes in each round $t$ is $O(|V[H_t]| \beta^{\ell(u)})$. By Lemma \ref{lem:partial-list-coloring-work-span}, we get that the cost of coloring the $i$th level of unmarked vertices is upper bounded by
    \[
        \sum_{t} |E[H_t| + |V[H_t]|\cdot \beta^{\ell(u)} = O\left(\sum_{t} |V[H_t]|\cdot \beta^{\ell(u)}\right).
    \]
    Since the vertex set of the coloring instance decreases by a constant in expectation, we see that the expected cost is
    \[
        O(|\umrkd(i)| \cdot \beta^{\ell(u)}) = O\left(\sum_{u \in \umrkd(i)} \beta^{\ell(u)}\right).
    \]
    Summing over $i$, this is of course upper bounded by the sum of $|N_u(1,\ell(u))|$ over every vertex $u$ involved in the coloring procedure.
\end{proof}

\subsubsection{Coloring procedure updates.}\label{sec:coloring-procedure-updates} We detail the updates made for a single iteration of Algorithm \ref{alg:mrkd-coloring} and \ref{alg:umrkd-coloring}, i.e. for a single iteration of the while loop.

Fix $u \in L$ and denote by $c$ the color assigned to $u$. In both algorithms, for each $v \in N_u(1,i)$, we make an update request to increment $\mu_v^+(c)$ and move up $c$ in $\mc C_v$. Now, if $u$ was unmarked, we additionally check if $c$ is a unique color by parallel looping over $N_u(1,i-1)$. If $c$ is not unique, then $u$ makes no further update requests. So suppose that $c$ is unique, and thus $u$ has a corresponding unique lower neighbor $v$ with color $c$. At this point, there can be multiple vertices in $\umrkd(i)$ that have $v$ as their unique neighbor. To handle this, we will have each such vertex make a request to add $v$ to a temporary hash table $A$. After applying these requests, the table $A$ consists of every vertex in the lower neighborhood of some vertex in $\umrkd(i)$ which now needs to be uncolored. For each $v \in A$, let $c$ denote the current color of $v$. We first uncolor $v$ by setting $\chi(v) = -1$. Then, for each $w \in N_v(1,\ell(v))$, we make an update request to decrement $\mu_w^+(c)$. After applying the updates, if $\mu_w^+(c) = 0$, then we make a request to move down $c$ in $\mc C_w$. Finally, if $w$ is dirty with respect to the upper condition, we make a request to add it to $\upmrkd(\ell(w))$. Otherwise, if $w$ is only dirty with respect to the lower condition, we make a request to add it to $\lmrkd(\ell(w))$. Finally, if $w$ is clean we add it to $\umrkd(\ell(w))$. After applying these updates, we set the timestamps of each $u \in L$. If $u \in \mrkd(i)$ we set $u$'s timestamp $\tau_u$ to be $\det$. Otherwise, if we are currently processing batch $k$, we set $\tau_u = (k,i)$. 

The complexity of these updates is summarized in the following lemma.
\begin{lemma}\label{lem:coloring-procedure-updates}
    Every round of updates takes $O(\log n)$ span \whp. The expected work performed in executing the updates is proportional to the sum of $|N_u(1,\ell(u)|$ over every vertex $u$ involved in the coloring procedure.
\end{lemma}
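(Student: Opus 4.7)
The plan is to decompose one iteration of the coloring while-loop into a constant number of synchronous stages, each of the form \emph{construct a request array for every source vertex, then feed the union into the update framework}. With this decomposition the lemma reduces to applying Lemma~\ref{lem:request-record-construction} (to bound the cost of building request arrays) and Lemma~\ref{lem:update-framework-time} (to bound the cost of flushing them) stage by stage. The stages I would use mirror Section~\ref{sec:coloring-procedure-updates}:
(i) for each $u\in L$ with newly assigned color $c$, emit requests to increment $\mu_v^+(c)$ and to \textsc{MoveUp} $c$ in $\mc C_v$ for every $v\in N_u(1,\ell(u))$;
(ii) for every unmarked $u\in L$ whose $c$ is unique, locate the unique down-neighbor $v$ by scanning $N_u(1,\ell(u)-1)$ and emit one request to insert $v$ into a freshly allocated hash table $A$;
(iii) for each $v\in A$, uncolor $v$ and emit decrement requests for $\mu_w^+(\chi(v))$ over $w\in N_v(1,\ell(v))$; then, as a second pass, examine each affected $w$, emit \textsc{MoveDown} requests for those colors whose counter just hit zero, and emit an insertion request into the appropriate $\upmrkd/\lmrkd/\umrkd$ table depending on which conditions $w$ now satisfies;
(iv) write the new timestamps $\tau_u$ for all $u\in L$.

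For the work bound I would observe that in stages (i)–(ii) every source vertex $u\in L$ contributes $O(|N_u(1,\ell(u))|)$ requests (one pair of requests per neighbor in its lower-plus-same-level neighborhood, plus a constant amount for the uniqueness scan), while in stage (iii) every source $v\in A$ becomes a new blank vertex and is therefore in the set the statement calls ``involved in the coloring procedure''; it contributes $O(|N_v(1,\ell(v))|)$ requests. Lemma~\ref{lem:request-record-construction} gives expected work proportional to these counts for building the request arrays, and Lemma~\ref{lem:update-framework-time} gives the same expected bound for applying them. Summing across source vertices yields the claimed $O\bigl(\sum_u |N_u(1,\ell(u))|\bigr)$ expected work. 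For the span, each of the constantly many stages uses $O(\log n)$ span whp by those same two lemmas, so a union bound gives $O(\log n)$ span whp per round.

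The main obstacle is stage (iii), since the table $A$ and the set of counters that must be decremented (and subsequently possibly moved down) are not known until stages (i)–(ii) have been flushed. To keep the framework applicable I would split (iii) into two subphases separated by a synchronization point: first emit and apply all decrement requests; then in parallel over the affected pairs $(w,c)$ filter those for which $\mu_w^+(c)$ just dropped to zero and emit the corresponding \textsc{MoveDown} requests and $\upmrkd/\lmrkd/\umrkd$ reclassification requests. A secondary subtlety is that several $u\in L$ may share the same unique down-neighbor $v$; using the batch-parallel hash-table insert into $A$ collapses these duplicates so that $N_v(1,\ell(v))$ is scanned only once, which is what allows the work to be charged to $|N_v(1,\ell(v))|$ rather than to a multiplicity-weighted version. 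Verifying that both subphases of (iii) and the deduplication through $A$ fit cleanly into the update framework is the crux of the proof; once that is in place the remaining accounting is a direct application of Lemmas~\ref{lem:request-record-construction} and~\ref{lem:update-framework-time}.
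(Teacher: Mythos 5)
Your proof is correct and takes essentially the same approach as the paper: the paper states this lemma without a formal proof, relying on the prose description of the update stages in Section~\ref{sec:coloring-procedure-updates} together with Lemmas~\ref{lem:request-record-construction} and~\ref{lem:update-framework-time}, which is exactly the stage-by-stage decomposition you give (with the cost of each $v\in A$ charged to $|N_v(1,\ell(v))|$ since $v$ becomes blank and is later colored). Your handling of the two synchronization subtleties --- splitting the counter-decrement and \textsc{MoveDown} steps across a synchronization point, and deduplicating shared unique down-neighbors through the hash table $A$ --- matches how the paper's description resolves them.
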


%
%

\subsection{Moving phase}\label{sec:moving-phase}

In this section, we detail the raising and lowering procedures, and give the formal analysis of the amortized cost of doing so. As the amortized analysis dictates the design of these procedures, we now give the token functions which we will use to argue the amortized cost. These token functions are very similar to the ones used in the dynamic $O(\log \Delta)$ algorithm of \cite{Dynamic-Coloring-18}, though our analysis diverges significantly. For every edge $(u,v)$ and vertex $v$, we associate the token functions
\begin{equation}\label{eqn:edge-potential}
        \theta(u, v) = \lambda - \max(\ell(u), \ell(v)).
\end{equation}   
\begin{equation}\label{eqn:vertex-potential}
    \theta(v) =
    \begin{cases} \frac{\max(0, 3^{\ell(v) - 1} - |N_v(1, \ell(v) - 1)|)}{6} & \text{if } \ell(v) > 5; \\
            0 & \text{otherwise}.
    \end{cases}
\end{equation}
Every insertion of an edge $(u,v)$ increases the total number of tokens by $\lambda-\max(\ell(u), \ell(v)) \leq \lambda$. Moreover a deletion of an edge $(u,v)$ removes the tokens associated to that edge and increases the token count of each endpoint by at most $1$, so every deletion increases the token count by at most $2 \leq \lambda$. Each token will represent a constant amount of computational power. 

Our strategy to show an amortized cost of $O(\lambda) = O(\log \Delta)$ is to track the token decrease caused by each movement procedure, and show it is proportional to the cost of executing that movement procedure (in expectation). Doing so will imply that all movements in aggregate cost an amortized $O(\lambda)$ work per edge update in expectation.

\subsubsection{Raising procedure description.}\label{sec:raising-procedure}
We describe a procedure to raise vertices in $\upmrkd(i)$. The pseudocode for this procedure is given in Algorithm \ref{alg:Raise}. In the remainder of the section, we describe Algorithm \ref{alg:Raise} and show that its expected work is bounded by
\[ 
    O\left(\lambda |\upmrkd(i)| + \sum_{k=i+1}^\lambda 3^k \E[|R_k|]\right)
\]
where $R_k$ is the subset of vertices in $\upmrkd(i)$ that are raised to level $k > i$. We then show that the expected number of tokens released as result of raising each $R_k$ is proportional to the work done by raised vertices, i.e. $\sum_k 3^k \E[|R_k|]$. Roughly, the raising procedure works by processing $\upmrkd(i)$ in rounds wherein each round a subset $M$ of $\upmrkd(i)$ is raised to a level $k > i$. For $u \in M$, let $N_u^0$ and $N_u^1$ denote the neighborhoods of $u$ before and after raising $M$ respectively. The algorithm is such that each vertex $u$ in $M$ satisfies 
\begin{enumerate}[1.]
    \item $|N_u^0(1,k-1)| \geq 3^{k-1}$;
    \item $|N_u^0(1,k)| \leq 3^k$; 
    \item $|N_u^0[M]| \leq \alpha$
\end{enumerate}
where $\alpha := 3^{k-1}/4$. Together, these properties imply that $|N_u^1(1,k-1)| \geq 3^{k-1} - \alpha$ and $|N_u^1(1,k)| \leq 3^k$, which is exactly what we shall require to show that moving $u$ from $i$ to $k$ releases an appropriate number of tokens.

Now we describe the procedure in more detail. We iterate from level $k = \lambda$ down to level $i+1$. Prior to this iteration, we copy $\upmrkd(i)$ to an array denoted by $U$ and, for each vertex in $U$, compute the prefix sum array $S_u$ where $S_u[k] = |N_u(1,i-1)| +  \sum_{j=i}^{k-1} |N_u(j)|$. We additionally record the initial value of $|N_u(i)|$ which we denote by $n_u(i)$. We will maintain a partition of $U$ into $U^-$ and $U^+$ based on the predicate
\[P(u,k) = \Ind{|N_u(1,k-1)| \geq 3^{k-1} \land |N_u(1,k)| \leq 3^k}.\]
Note we can compute $|N_u(1,k-1)|$ and $|N_u(1,k)|$ using $O(1)$ work since $|N_u(i)|$ and $|N_u(k)|$ are maintained dynamically (from hash table representations) and 
\[ |N_u(1,k-1)| = S_u[k] - (n_u - |N_u(i)|) \quad \text{and} \quad |N_u(1,k)| = |N_u(1,k-1)| + |N_u(k)|.
\]
Next, we describe how to handle a single iteration of processing level $k$. We focus on the algorithmic details and defer the technical details of the data structure updates to Section \ref{sec:raising-procedure-updates}. We first filter out each vertex $u$ in $U^+$ such that $P(u,k) = 0$ by partitioning $U^+$ and adding the zeros to $U^-$. At this point, each vertex in $U^+$ satisfies $|N_u(1,k-1)|\geq 3^{k-1}$ and $|N_u(1,k)| \leq 3^k$. Now we select each vertex of $U^+$ uniformly at random with probability $1/24$. In parallel, we move each vertex that was both selected and had at most $\alpha$ of its neighbors in $U^+$ selected. We denote this set of moved vertices by $M$. Observe by construction that $M$ satisfies the desired properties listed above. Finally, we partition out the moved vertices from $U^+$ and update the relevant data structures. We repeat this procedure until $U^+$ is empty at which point we set $U^+ = U^-$ and decrement $k$.

\begin{algorithm}[ht]
\function{$\emph{\textsc{Raise}}(\upmrkd(i))$}{
    $U \gets \textsc{List}(\upmrkd(i))$.\\
    Set $U^- \gets \{\}$ and $U^+ \gets U$.\\
    \For{$k = \lambda$ to $i + 1$}{
        \While{$U^+$ is not empty}{
            $z \gets \textsc{Partition}(U^+, P(u,k))$.\Comment{returns \# of zeros.} \\
            $U^- \gets U[0:|U^-| + z - 1]$ and $U^+ \gets U^+[z:|U|-1]$.\\ 
            $S \gets \textsc{Sample}(U^+, 1/24)$. \\
            $M \gets \textsc{Filter}(S, P(u))$ where $P(u) = \Ind{|N_u[S]| \leq \alpha}$. \\
            Move $M$ to level $k$ and update the relevant data structures. * \\
            $z \gets \textsc{Partition}(U^+, \Ind{u \in M})$.\\
            $U^+ \gets U^+[0:z-1].$
        }
        $U^+ \gets U^-$ and $U^- \gets \{\}$.
    }
}
\caption{Raising Procedure. * See Section \ref{sec:raising-procedure-updates}.}\label{alg:Raise}
\end{algorithm}

\subsubsection{Raising procedure analysis.} Recall that for $u \in M$, $N_u^0$ denotes the neighborhood of $u$ the moment before $M$ is raised and $N_u^1$ denotes the neighborhood the moment after. The following Lemma asserts that $M$ satisfies the three properties stated at the start of Section \ref{sec:raising-procedure}.
\begin{lemma}\label{lem:raising-correctness}
    For all $u \in \upmrkd(i)$, if after $\emph{\textsc{Raise}}(\upmrkd(i))$, $\ell(u) = i$, then
    \[|N_u(1,i)| \leq 3^{i}.\]
    Otherwise, if $u$ was raised with the set $M$, then
    \[|N_u^0(1,k-1)| \geq 3^{k-1}, \quad |N_u^0(1,k)| \leq 3^k, \quad \text{ and } \quad |N_u^0[M]| \leq \alpha.\]
\end{lemma}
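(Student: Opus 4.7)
The plan is to handle the two cases of the lemma separately. The ``raised'' case, in which $u\in M$ is moved to level $k$, is essentially immediate from the structure of Algorithm \ref{alg:Raise}: $u$ survives the partition by $P(u,k)$, which yields $|N_u^0(1,k-1)|\ge 3^{k-1}$ and $|N_u^0(1,k)|\le 3^k$, and then survives the filter $|N_u[S]|\le\alpha$, which together with $M\subseteq S$ gives $|N_u^0[M]|\le|N_u^0[S]|\le\alpha$. The real work is in the other case.

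For the case where $u$ ends up still at level $i$, I would proceed by contradiction, assuming $|N_u(1,i)|>3^i$ at the end of \textsc{Raise}. The key structural observation is a monotonicity property: during \textsc{Raise} the only vertices that change level are those originally in $U\subseteq V_i$, and each such move is from level $i$ to some strictly higher level. Hence for every $k\ge i$, $|N_u(1,k)|$ is non-increasing in time, and it strictly decreases only when a neighbor of $u$ is moved to a level strictly greater than $k$.

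For each $k\in\{i+1,\dots,\lambda\}$, let $t_k$ denote the last time at which $P(u,k)$ is evaluated during the processing of level $k$. Since $u$ ends at level $i$, $u$ must have been sent to $U^-$ at level $k$, so $P(u,k)=0$ at $t_k$. I would then prove by upward induction on $k$ the claim ``$|N_u(1,k)|>3^k$ at time $t_k$''. The base case $k=i+1$ uses monotonicity of $|N_u(1,i)|$ together with the hypothesis $|N_u(1,i)|>3^i$ to rule out the first disjunct of $P(u,i+1)=0$, forcing $|N_u(1,i+1)|>3^{i+1}$. For the inductive step, observe that between $t_{k+1}$ and $t_k$ only moves to level $k+1$ (strictly decreasing $|N_u(1,k)|$) and moves to level $k$ (leaving $|N_u(1,k)|$ unchanged) occur; therefore the value of $|N_u(1,k)|$ at time $t_{k+1}$ is at least its value at $t_k$, which is $>3^k$ by the inductive hypothesis. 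This rules out the first disjunct of $P(u,k+1)=0$ and forces $|N_u(1,k+1)|>3^{k+1}$.

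Instantiating the claim at $k=\lambda$ yields $|N_u(1,\lambda)|>3^\lambda=\Delta$, contradicting $|N_u(1,\lambda)|=\deg(u)\le\Delta$. The main subtlety, and the step most prone to mistakes, will be the bookkeeping of which moves affect which $|N_u(1,k)|$: one must verify carefully that moves performed at a level $k'\le k$ leave $|N_u(1,k)|$ unchanged while moves to a level $>k$ strictly decrease it, and confirm that the two time comparisons (the remainder of level $k+1$ processing after $t_{k+1}$ and the beginning of level $k$ processing up to $t_k$) both go in the direction needed to push the inductive inequality upward.
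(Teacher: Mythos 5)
Your proof is correct and is essentially the paper's argument in contrapositive form: the paper runs a direct downward induction showing $|N_u(1,j)|\le 3^j$ after level $j$ is processed (using $P(u,j)=0$ plus the implicit monotonicity of $|N_u(1,\cdot)|$ under upward-only moves), while you run the same induction upward inside a proof by contradiction. The handling of the raised case is identical, so this matches the paper's proof.
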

\begin{proof}
    The latter statement is true by virtue of how the moved set is selected. We show by a simple induction that $|N_u(1,j)| \leq 3^j$ for $i \leq j \leq \lambda$ after processing level $j$, from which our claim follows. The base case of level $\lambda$ is immediate since $|N_u(1,\lambda)| \leq 3^{\lambda} = \Delta$. Now, suppose that $|N_u(1,j)| \leq 3^{j}$ for $j > i$. In processing level $j$, since $u$ is not moved, this means $P(u,j) = 0$ at some point, which then implies that $|N_u(1,j-1)| \leq 3^{j-1}$. We conclude that $|N_u(1,j)| \leq 3^j$ for all $i \leq j \leq \lambda$.
\end{proof}

Next, we analyze the complexity of the Algorithm \ref{alg:Raise}. The following Lemma bounds the work and span of Algorithm \ref{alg:Raise}.
\begin{lemma}\label{lem:raise-work}
    The raising procedure described in Algorithm \ref{alg:Raise} can be implemented using 
    \[O\left(\lambda |\upmrkd(i)| + \sum_{k=i+1}^{\lambda} 3^{k}\E[|R_k|]\right)\]
    expected work and $O(\log^3 n)$ span \whp.
\end{lemma}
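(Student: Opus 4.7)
The plan is to separately bound the work and span contributions of each round of Algorithm~\ref{alg:Raise}, then sum over the inner while loops and the outer \texttt{for} loop. I organize the per-round work into cheap operations that are linear in $|U^+|$ (partitioning and sampling) and expensive operations costing $\Theta(3^k)$ per participating vertex (filtering, which examines each neighbor of a sampled vertex, and moving, which repairs the data structures of each moved vertex and its neighbors at levels $\le k$). A single round at level $k$ can thus be implemented in work $O(|U^+| + (|S|+|M|)\cdot 3^k)$ and span $O(\log n)$ whp: \textsc{Partition}, \textsc{Sample}, and \textsc{Filter} each cost $O(|U^+|)$ work and $O(\log n)$ span; for $u\in S$, since $P(u,k)=1$ implies $|N_u(1,k)|\le 3^k$, computing $|N_u[S]|$ by a parallel reduce over $N_u(1,k)$ costs $O(3^k)$ work and $O(\log n)$ span; and moving $M$ to level $k$ generates $O(3^k)$ update requests per moved vertex, which Lemmas~\ref{lem:request-record-construction} and~\ref{lem:update-framework-time} apply in $O(|M|\cdot 3^k)$ expected work and $O(\log n)$ span whp.

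The key combinatorial ingredient is that in every round $\E[\,|M|\mid U^+\,] \ge |U^+|/48$. After the $P(u,k)$-partition, each $u\in U^+$ has $|N_u[U^+]|\le |N_u(1,k)| \le 3^k$. For independent sampling at rate $1/24$ this gives $\E[|N_u[S]|]\le 3^k/24$, so Markov's inequality yields $\Pr[|N_u[S]|\le \alpha]\ge 1/2$ with $\alpha=3^{k-1}/4$. Since $u$'s own sampling is independent of its neighbors', each $u \in U^+$ enters $M$ with marginal probability at least $(1/24)\cdot(1/2)=1/48$. Summing over $u\in U^+$ proves the claim, and since each $u\in U^+$ is removed from $U^+$ with conditional probability $\ge 1/48$ in every round, a union bound over vertices shows that the inner while loop at level $k$ terminates in $O(\log n)$ rounds whp.

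Putting everything together: at each level $k$, the expected geometric decay of $|U^+|$ gives $\sum_r \E[|U^+_r|] = O(|U^{+,\mathrm{init}}_k|) \le O(|\upmrkd(i)|)$ and $\sum_r \E[|S_r|] \le 2 \sum_r \E[|M_r|] = 2\E[|R_k|]$, because every vertex that passes the filter joins $R_k$. Hence the aggregate partition/sample work across the algorithm is $O\!\bigl(\sum_{k=i+1}^{\lambda} |\upmrkd(i)|\bigr) = O(\lambda|\upmrkd(i)|)$, the aggregate filter/move work is $O\!\bigl(\sum_{k=i+1}^{\lambda} 3^k\,\E[|R_k|]\bigr)$, and the total span is bounded by $(\lambda\text{ outer iterations})\cdot(O(\log n)\text{ inner rounds whp})\cdot(O(\log n)\text{ span per round whp})=O(\log^3 n)$ whp. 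The main obstacle is the combinatorial claim $\E[|M|]=\Omega(|U^+|)$, since this single fact simultaneously bounds the filter work by $\E[|R_k|]$, caps the number of inner rounds at $O(\log n)$ whp, and lets the cheap-operation work amortize to $O(\lambda|\upmrkd(i)|)$; once it is established via the Markov-plus-independence argument above, the rest is routine bookkeeping with the parallel primitives and the update framework.
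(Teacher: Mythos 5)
Your proposal is correct and follows essentially the same route as the paper's proof: the same split into cheap per-round work linear in $|U^+|$ versus $\Theta(3^k)$-per-vertex filter/move work, the same Markov-plus-independence argument giving $\Pr[u\in M]\ge 1/48$, and the same geometric decay of $|U^+|$ to amortize the cheap work and bound the round count. The only (immaterial) differences are that you charge the filter work to $|S|$ rather than to the set of vertices with $P(u,k)=1$, and you derive the $O(\log n)$ round bound by a per-vertex union bound instead of Markov's inequality applied to $\E[|U^+_{k,t}|]$.
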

\begin{proof} First we bound the work of Algorithm \ref{alg:Raise}. We start by introducing some notation. We define, for $i \leq k \leq m$, $R_k \subseteq U$ to be the vertices which are raised to level $k$. For each $t \geq 1$, we define $U_{k,t}^+$ to be $U^+$ at the start of iteration $t$ of processing level $k$ and $U_{k,t}^{++} \subseteq U_{k,t}^+$ to be the vertices of $U_{k,t}^+$ such that $P(u,k) = 1$. Additionally, we define $R_{k,t}$ and $S_{k,t}$ to be $M$ and $S$ respectively during iteration $t$ of processing level $k$. Finally, we denote by $\operatorname{DS}_{k,t}$ the collection of update requests made during iteration $t$ of processing level $k$. 

The work to initialize $U$, denote $W_\text{init}$ is simply the size of $\upmrkd(i)$. Next, we analyze the excepted work of the for loop, say $W_\text{main}$. We can write this as
\begin{equation}\label{eqn:expected-main-work-raising}\E[W_\text{main}] = \sum_{k=i+1}^\lambda \sum_{t=1}^\infty\E[W_{k,t}]\end{equation}
where $W_{k,t}$ denotes the work performed during iteration $t$ of processing level $k$. During iteration $t$, we do $O(\E[U^+_{k,t}])$ work to partition $U^+$ (line 6) and $O(\E[U^{++}_{k,t}])$ work during the call to sample (line 8) and partition (line 11). Additionally, when computing $M$ (line 9), we do $O(\E[|N_u(i)|])$ work for each $u \in S_{k,t}$. Combined with the work for data structure updates, we can express the expected work for $W_{k,t}$ as
\begin{multline}\label{eqn:raising-procedure-main-work-bound}\E[W_{k,t}] = \E[W(\operatorname{DS}_{k,t})] + O(\E[|U_{k,t}^+|]) + \sum_{u \in S_{k,t}}\E[|N_u(i)|] \\\leq E[W(\operatorname{DS}_{k,t})] + O(\E[|U_{k,t}^+|]) + 3^k \E[|U_{k,t}^{++}|]
\end{multline}
where $W(\operatorname{DS}_{k,t})$ denotes the work required to perform the updates in $\operatorname{DS}_{k,t}$. Note the inequality follows from the fact that each $u \in S_{k,t}$ had $P(u,k) = 1$ and $S_{k,t} \subset U_{k,t}^{++}$. Next, we bound the work to carry out $\operatorname{DS}_{k,t}$ by Lemma \ref{lem:rasining-ds-work-1} in Section \ref{sec:raising-procedure-updates} which gives
\begin{equation}\label{eqn:raising-procedure-data-structure-bound-1}\E[W(DS_{k,t})] = O(3^k \E[|R_{k,t}|]) \end{equation}
To conclude the analysis of $W_\text{main}$, we observe the following bounds
\begin{equation}\label{eqn:raising-work-proof-bounds-1}\E[|U_{k,t}^{++}|] = c_1 \E[|R_{k,t}|] \quad \text{ and } \quad \E[|U_{k,t}^+|] \leq c_2^t \E[|U_{k,1}^+|] \end{equation}
for absolute constants $c_1,c_2$ where $c_2 < 1$. Assuming the bounds in \ref{eqn:raising-work-proof-bounds-1} and using \ref{eqn:raising-procedure-main-work-bound} and \ref{eqn:raising-procedure-data-structure-bound-1}, we have
\begin{align*}
\E[W_{\text{main}}]
&\le \sum_{k=i+1}^\lambda \sum_{t=1}^{\infty}
   \Bigl(O(3^k \E[|R_{k,t}|]) + O(c_2^t\E[|U_{k,1}^+|])\Bigr) \\
&\le O\left(\sum_{k=i+1}^\lambda \E[U_{k,1}^+] + \left(\sum_{t=1}^{\infty}3^k\E[|R_{k,t}|]\right)\right)\\
&\le O\left(\lambda|\upmrkd(i)|
   + \sum_{k=i+1}^{\lambda} 3^{k} \E[|R_k|]\right)
\end{align*}
where the last inequality follows from the trivial bound of
\[\sum_{k=i+1}^{\lambda} \E[|U_{k,1}^+|] \leq \lambda |\upmrkd(i)|.\]
Putting everything together we can bound the total expected work $W$ as
\[\E[W] = \E[W_\text{init}] + \E[W_\text{main}] = O\left(\lambda |\upmrkd(i)|+ \sum_{k=i+1}^{\lambda} 3^{k}\E[|R_k|]\right).\]
Now let us prove the bounds stated in \ref{eqn:raising-work-proof-bounds-1}. Recall that each vertex in $U_{k,t}^{++}$ is included in $R_{k,t}$ only if it is sampled (in $S_{k,t}$) and at most $\alpha$ of its neighbors in $U_{k,t}^{++}$ are sampled. Additionally, recall that $\alpha = 3^{k-1}/4$ and and each vertex is sampled with probability $1/24$. For $u \in U_{k,t}^{++}$, let $X_u$ be the indicator for $u$ being sampled and $Y_u$ be the sum of the indicators over $u$'s neighbors, that is 
\[Y_u := \sum_{v \in N_u[U_{k,t}^{++}]} X_v.\]
Then
\[\Pr[u \in R_{k,t}] = \Pr[X_u = 1]\Pr[Y_u \leq \alpha] = \Pr[Y_u \leq \alpha]/24.\]
Since $\E[Y_u] = |N_u[U_{k,t}^{++}|]/24 \leq 3^{k}/24$, we have by Markov's inequality that $\Pr[Y_u \geq \alpha] \leq 3^k/24\alpha = 1/2$ and thus $\Pr[u \in R_{k,t}] \geq 1/48 := 1/c_1$. Therefore, conditioned on the size of $U_{k,t}^{++}$, the expected size of $R_{k,t}$ is $|U_{k,t}^{++}|/c_1$. This proves the first bound in \ref{eqn:raising-work-proof-bounds-1}. To prove the second bound, observe that as random sets $U_{k,t}^+ = U_{k,t-1}^{++} \setminus R_{k,t}$. Then, by the above analysis for the expected size of $R_{k,t}$, for $c_2 = 47/48$
\[\E[|U_{k,t}^+|] = c_2 \E[|U_{k,t-1}^{++}] \leq c_2 \E[U_{k,t-1}^+].\]
This proves the second bound. Next, we analyze the span of Algorithm \ref{alg:Raise}. Each iteration of the inner while loop performs two partitions, a sample, and a filter each of which requires $O(\log n)$ span. Additionally, by Lemma \ref{lem:rasining-ds-work-1}, the span to perform $\operatorname{DS}_{k,t}$ is $O(\log n)$ \whp. Thus, each iteration of the while loop uses $O(\log n)$ span \whp. Next, let us analyze the number iterations of the while loop for level $k$, i.e. the number iterations until $U_{k,t}^+$ is empty. By the second bound in \ref{eqn:raising-work-proof-bounds-1} and Markov's we have that
\[\Pr[|U_{k,t}^+| \geq 1] \leq \E[|U_{k,t}^+|] \leq c_2^t \E[|U_{k,1}^+|]\]
thus for any $c_3 > 1$ and $t > (c_3+1)\log_{1/c_2} n$, we have that
\[\Pr[|U_{k,t}^+| \geq 1] \leq \frac{\E[|U_{k,1}^+|]}{n^{c_3+1}} \leq \frac{1}{n^{c_3}}\]
since $\E[|U_{k,1}|] \leq n$. Therefore, \whp, \ after $O(\log n)$ iterations, $U_{k,t}$ will be empty. Finally, since there at most $\lambda - i = O(\log n)$ levels processed, and each level uses $O(\log^2 n)$ span \whp, we conclude the lemma.
\end{proof}

\subsubsection{Raising procedure amortized analysis.} Recall that $R_k$ is the subset of vertices from $U$ which are raised to level $k$ and $R_{k,t} \subseteq R_k$ are the vertices which move to level $k$ in round $t$ of processing level $k$. We show that each movement of $R_{k,t}$ releases $\Theta(3^{k} |R_k^t|)$ tokens.

First, we determine the token change contributed by the vertex token functions. Fix $u \in R_{k,t}$. Observe that moving $u$ from level $i$ to level $k$ only affects the values of the vertex token functions for vertices in $N_u(1,k)$. Also, moving $u$ from $i$ to $k$ increases $\theta(u)$ by at most $|N_u[R_{k,t}]|/6 \leq \alpha/6$ as each $u \in R_{k,t}$ satisfies $|N_u[R_{k,t}]| \leq \alpha$. Next, for $x \in N_u(i+1, k)$, moving $u$ above or to the same level as $x$ increases $\theta(x)$ by at most $1/6$. Thus in total, the vertex token increase is at most 
\[\sum_{u \in R_{k,t}} \frac{|N_u[R_{k,t}]|+ |N_u(i+1,k)|}{6} \leq |R_{k,t}|\frac{\left(\alpha + 3^{k}\right)}{6}\]
where we use $|N_u(i+1,k)| \leq 3^k$ and $|N_u[R_{k,t}]| \leq \alpha$ by definition of $R_{k,t}$. Now, we aim to understand the token change from the edges. Again fix $u \in R_{k,t}$. Similarly to the vertex tokens, moving $u$ only affects the tokens of the edges the form $(u,x)$, with $x \in N_u(1,k-1)$. For each $v \in N_u[R_{k,t}]$, the token potential $\theta(u,v)$ decreases by $k-i \geq 1$, and for each $x \in N_u(j)$ with $i < j < k-1$,  the token potential $\theta(u,x)$ decreases by $k - j \geq 1$. Summing over all relevant edges, we can lower bound the token decrease from edges by
\[\left(\sum_{u \in R_{k,t}} |N_u(1,k-1)|\right) - |E[R_k^t]| \geq |R_{k,t}|(3^{k-1} - \alpha/2)\]
where again the inequality follows by definition of $R_{k,t}$. Combining both vertex and edge token changes the total token decrease from moving $R_{k,t}$ is at least 
\[
    |R_{k,t}|(3^{k-1} - \alpha/2 - 1/6(\alpha +  3^k))  = |R_{k,t}|(3^{k-1}/2 - 2\alpha/3).
\]
Then, since $\alpha \leq 3^{k-1}/4$, we can say that moving $R_{k,t}$ releases at least $3^{k-2}|R_{k,t}|$ tokens. Then summing over $t$, the total token decrease generated by Algorithm \ref{alg:Raise} in processing level $k$ is at least
\[
    3^{k-2}|R_k| = \Theta(3^k|R_k|).
\]
Summing over $k$, the total decrease for raising $\upmrkd(i)$ comes out to $\Theta(\sum_{k=i}^\lambda 3^k |R_k|)$. Since this a sum of random variables, we can write the expectation as
\[\Theta\left(\sum_{k=i+1}^\lambda 3^k \E[|R_k|]\right).\]
We conclude the following Lemma.

\begin{lemma}\label{lem:adam-1}
    Calling $\textsc{Raise}(\upmrkd(i))$ releases
    \[\Theta\left(\sum_{k=i+1}^\lambda 3^k \E[|R_k|]\right)\]
    tokens in expectation, where $R_k$ is the subset of vertices from $\upmrkd(i)$ that are raised to level $k$.
\end{lemma}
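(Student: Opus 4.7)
The plan is to bound the net drop in the token potential $\Gamma$ caused by a single parallel batch move $R_{k,t}$, and then sum over rounds $t$ and over target levels $k$. The lemma should follow by taking expectations of the deterministic per-batch bound, since the randomness only enters through the sizes of the $R_{k,t}$'s. Lemma~\ref{lem:raising-correctness} guarantees that every $u \in R_{k,t}$ satisfies $|N_u^0(1,k-1)| \ge 3^{k-1}$, $|N_u^0(1,k)| \le 3^k$, and $|N_u^0[R_{k,t}]| \le \alpha = 3^{k-1}/4$; these three inequalities will be the only structural facts I use.

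First I would account for the change in vertex tokens~\eqref{eqn:vertex-potential}. Moving $u$ from level $i$ to level $k$ only affects $\theta(\cdot)$ for vertices in $\{u\} \cup N_u(1,k)$. For $u$ itself, using $|N_u^1(1,k-1)| \ge 3^{k-1} - |N_u[R_{k,t}]|$, the token $\theta(u)$ can increase by at most $|N_u[R_{k,t}]|/6 \le \alpha/6$. For each $x \in N_u(i+1,k)$ the relative level change contributes at most $1/6$ to $\theta(x)$. Summing over $u \in R_{k,t}$ and using $|N_u(i+1,k)| \le |N_u^0(1,k)| \le 3^k$, the total vertex-token increase is at most
\[
\sum_{u \in R_{k,t}} \frac{|N_u[R_{k,t}]| + |N_u(i+1,k)|}{6} \;\le\; \frac{|R_{k,t}|\,(\alpha + 3^k)}{6}.
\]

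Next I would bound the drop in edge tokens~\eqref{eqn:edge-potential}. Moving $u$ only changes $\theta(u,x)$ for $x \in N_u(1,k-1)$, and each such edge loses at least $k - \ell(x) \ge 1$ tokens. Being careful about the double-counting of internal edges of $G[R_{k,t}]$ (which are affected twice), the total edge-token decrease is at least
\[
\Bigl(\sum_{u \in R_{k,t}} |N_u^0(1,k-1)|\Bigr) - |E[G[R_{k,t}]]| \;\ge\; |R_{k,t}|\bigl(3^{k-1} - \alpha/2\bigr),
\]
where $|E[G[R_{k,t}]]| \le \tfrac12 \sum_u |N_u[R_{k,t}]| \le \alpha |R_{k,t}|/2$. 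Combining the two bounds, the net drop in $\Gamma$ from moving $R_{k,t}$ is at least
\[
|R_{k,t}|\!\left(3^{k-1} - \frac{\alpha}{2} - \frac{\alpha + 3^k}{6}\right) \;=\; |R_{k,t}|\!\left(\frac{3^{k-1}}{2} - \frac{2\alpha}{3}\right),
\]
and since $\alpha = 3^{k-1}/4$ this is at least $3^{k-2}|R_{k,t}| = \Theta(3^k |R_{k,t}|)$.

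Finally, since the batches $R_{k,t}$ are disjoint across rounds and together make up $R_k = \bigsqcup_t R_{k,t}$, summing the per-batch drop over $t$ gives a deterministic token release of $\Theta(3^k |R_k|)$ at level $k$, and summing over $k = i+1,\dots,\lambda$ and taking expectations yields the claimed $\Theta(\sum_{k=i+1}^\lambda 3^k \E[|R_k|])$. The main potential obstacle is the bookkeeping in the edge-token step: one has to be careful that the neighbors of $u$ that are themselves in $R_{k,t}$ are not moved away before the $k - \ell(x)$ drop is realized, and that internal edges are not double-counted in the lower bound. The guarantee $|N_u^0[R_{k,t}]| \le \alpha$ from the symmetry-breaking is what keeps both the vertex-token increase and the double-counted edge contribution smaller than the main $3^{k-1}|R_{k,t}|$ term, so no further probabilistic argument is needed beyond taking expectations at the very end.
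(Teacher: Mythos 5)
Your proposal is correct and follows essentially the same route as the paper's own analysis: the same vertex-token increase bound of $|R_{k,t}|(\alpha+3^k)/6$, the same edge-token decrease bound of $|R_{k,t}|(3^{k-1}-\alpha/2)$ after subtracting internal edges of $G[R_{k,t}]$, the same combination yielding $3^{k-2}|R_{k,t}|$ per batch, and the same summation over $t$ and $k$ followed by taking expectations. No meaningful differences.
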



\subsubsection{Raising procedure updates.}\label{sec:raising-procedure-updates} We detail the updates made for a single iteration of processing level $k$. 

We describe the updates made in raising $M$ (line 10 in Algorithm \ref{alg:Raise}). Fix $u \in M$ and let $c := \chi(u)$. For each vertex $v$ in the set $N_u(1,k)$, we will make update requests to account for $u$'s upward movement from $i$ to $k$. First, if $v \in N_u(1,i)$, then we make a request to remove $u$ from $\nbr{v}{i}$ and a request to add $u$ to $\nbr{v}{k}$. Note after this movement, $u$ remains above or at the same level as $v$ so we do not need to update color palettes. If $v \in N_u(j)$ for some $i < j \leq k$, then $u$ was previously below $v$. Thus we make requests to move up $c$ in $\mc C_v$, increment $v$'s color counter for $c$, delete $u$ $\belowHT{v}$, and add $u$ to $\nbr{v}{k}$. At this point, we apply the above update requests which concludes the non-local updates required for $u$.

Next, we make updates which are local to $u$. We set $u$'s level $\ell(u)$ to be $k$. Next, for each $v \in N_u(1,k-1) \setminus M$, we make an update request to decrement $\mu_u^+(\chi(v))$. If after applying these decrement requests, $\mu_u^+(\chi(v))$ drops to $0$ then we make an additional update request to move down $\chi(v)$ in $\mc C_u$.  Finally, after applying the move request, we create $u$'s new lower neighborhood. This is accomplished by sequentially iterating from $t = i$ to $k-1$ and batch inserting $\nbr{u}{t}$ into $\belowHT{u}$ and then deleting it. 

Recall the notation from the proof of Lemma \ref{lem:raise-work}. The following Lemma summarizes the complexity of performing the required data structure updates during iteration $t$ of processing level $k$.

\begin{lemma}\label{lem:rasining-ds-work-1} We have
    \[\E[W(\operatorname{DS}_{k,t})] =  O(3^k \E[|R_{k,t}|]) \quad \text{ and } \quad D(\operatorname{DS}_{k,t}) = O(\log n) \; \emph{whp}.\]
\end{lemma}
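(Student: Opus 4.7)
The plan is to count the total number of update requests issued during a single moving round and then invoke the update framework of Lemma~\ref{lem:update-framework-time} to translate the request count into work and span bounds. Fix any $u \in R_{k,t}$. Because $u$ was kept in $U^+$ after the partition on $P(u,k)$, the invariant $|N_u(1,k)| \le 3^k$ holds at the moment $u$ is moved. Inspecting the description of $\operatorname{DS}_{k,t}$ in Section~\ref{sec:raising-procedure-updates}, every request issued by $u$ is associated to some neighbor $v \in N_u(1,k)$: hash-table insertions/deletions for $\nbr{v}{\cdot}$ and $\belowHT{v}$, color-counter increments on $\mu_v^+(\chi(u))$, palette moves of $\chi(u)$ in $\mc C_v$, decrements of $\mu_u^+(\chi(v))$ with possible down-moves in $\mc C_u$, and the rebuild of $\belowHT{u}$. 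A constant number of requests is emitted per such neighbor, so $u$ contributes at most $O(|N_u(1,k)|) = O(3^k)$ requests in total.

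Summing over $u \in R_{k,t}$, the total number of requests is $R = O(3^k |R_{k,t}|)$. By Lemma~\ref{lem:request-record-construction}, each array $R_u$ can be built in $O(3^k)$ work and $O(\log n)$ span, and by Lemma~\ref{lem:update-framework-time} the subsequent grouping and batch-application of all requests uses $O(R)$ expected work and $O(\log n)$ span \whp. Taking expectations over the random choice of $R_{k,t}$ yields
\[
  \E[W(\operatorname{DS}_{k,t})] \;=\; O\bigl(3^k\,\E[|R_{k,t}|]\bigr),
\]
while the high-probability span bound is preserved.

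The main obstacle is the rebuild of $\belowHT{u}$, which as written \emph{sequentially} iterates over the levels $i, i+1, \ldots, k-1$ and would naively contribute $O(\lambda \log n)$ to the span. To meet the claimed $O(\log n)$ bound, I would argue that this loop can be collapsed into a single batched hash-table insertion: a parallel prefix-sum over $|\nbr{u}{t}|$ (costing $O(3^k)$ work and $O(\log n)$ span) concatenates all of $\bigcup_{t=i}^{k-1}\nbr{u}{t}$ into one array, which is then inserted into $\belowHT{u}$ as a single batch, after which the obsolete per-level tables are cleared in parallel. With this implementation choice, both the work bound and the $O(\log n)$-span bound follow directly from the request-counting argument and the cited framework lemmas.
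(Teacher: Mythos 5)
The paper states this lemma without proof, so there is nothing to compare against line by line; your request-counting argument is exactly the one the surrounding framework is set up to support. Charging $O(1)$ requests to each pair $(u,v)$ with $v\in N_u(1,k)$, using the invariant $|N_u(1,k)|\le 3^k$ guaranteed by $P(u,k)=1$ at the moment $u$ is moved, and then invoking Lemma~\ref{lem:request-record-construction} and Lemma~\ref{lem:update-framework-time} gives both bounds, modulo the constant number of synchronization rounds (apply, check counters, apply again), each of which costs $O(\log n)$ span. Your observation about the rebuild of $\belowHT{u}$ is a genuine and worthwhile catch: as literally described in Section~\ref{sec:raising-procedure-updates}, the sequential iteration over levels $i,\dots,k-1$ would contribute $O(\lambda\log n)$ span, contradicting the stated $O(\log n)$ bound, and your fix of concatenating the per-level neighbor lists via a prefix sum into a single batch insertion (then clearing the per-level tables in parallel) is the right repair and preserves the $O(3^k)$ work bound. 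The proof is correct.
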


\subsubsection{Lowering Procedure Description.}\label{sec:lower-procedure} We describe a procedure to lower vertices in $\lmrkd(i)$. The pseudocode for this procedure is given in Algorithm \ref{alg:Lower}. In what follows, we describe Algorithm \ref{alg:Lower} and show that the expected work is bounded by
\begin{equation}\label{eqn:lower-work-statement}
    O(3^i|\lmrkd(i)|) 
\end{equation}
We then show that the expected number of tokens released as a result of the lowering procedure is proportional to $\ref{eqn:lower-work-statement}$. 

Structurally, the lowering procedure is simpler than the raising procedure. Let us fix a subset $L$ of $\lmrkd(i)$ and recall that $N_u^0$ and $N_u^1$ denote the neighborhoods of $u$ before and after moving $L$ respectively. For our token argument, we want to lower each vertex in $L$ to a level $k \leq i-4$ such that for $|N^1_u(1,f)| \leq 3^f$ for each $k \leq  f \leq i-1$. Importantly, we are not concerned with whether $u$ satisfies the lower condition \ref{con:lower} at such a level $k$. Thus, if we ensure that $|N_u[L]| \leq 3^{i-4}$, then since $|N^0_u(1,i-1)| < 3^{i-5}$ (by definition of being in $\lmrkd(i)$), moving $L$ to level $i-4$ will be sufficient to guarantee what we want. What remains is to describe how to compute a large $L$. We do this using the same symmetry breaking condition from the raising procedure: sample each vertex in $\lmrkd(i)$ with probability $1/(2\cdot 3^6)$. Then each vertex that is sampled and has at most $3^{i-5}$ of its neighbors sampled is moved to level $i-4$, i.e. is included in $L$.

\begin{algorithm}[ht]
\function{$\emph{\textsc{Lower}}(\lmrkd(i))$}{
        $S\gets \textsc{Sample}(\lmrkd(i), 1/(2\cdot 3^6))$. \\
        $L \gets \textsc{Filter}(S,P(u))$ where $P(u) = \Ind{|N_u[S]| \leq 3^{i-5}}$. \\
        Move $L$ to level $i-4$. \\
        Update the relevant data structures$.^*$
    
}
\caption{Lowering Procedure. \hfill ${ }^*$See Section \ref{sec:lowering-procedure-ds-updates}.}\label{alg:Lower}
\end{algorithm}


\subsubsection{Lowering procedure analysis.} 


Recall that $L$ is the subset of $\lmrkd(i)$ which is lowered to level $i-4$. The following Lemma bounds the work and span of Algorithm \ref{alg:Lower}.

\begin{lemma}\label{lem:lower-work}
    The lowering procedure described in Algorithm \ref{alg:Lower} can be implemented in
    \[
      O\left(3^i|\lmrkd(i)|\right)
    \]
    expected work and $O(\poly n)$ span \emph{whp}.
\end{lemma}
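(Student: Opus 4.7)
The plan is to decompose the work of Algorithm~\ref{alg:Lower} into three pieces: (i) the sampling step that produces $S$, (ii) the filter step that identifies $L \subseteq S$, and (iii) the movement of $L$ down to level $i-4$ together with the induced data structure updates. I will show that each piece costs $O(3^i|\lmrkd(i)|)$ expected work, with all span bounds being $O(\poly n)$ whp via the standard primitives and the update framework (Lemma~\ref{lem:update-framework-time}).

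For (i), sampling $\lmrkd(i)$ at rate $1/(2 \cdot 3^6)$ takes $O(|\lmrkd(i)|)$ work and $O(\log n)$ span with a parallel loop; inserting $S$ into a batch-parallel hash table adds another $O(|S|)$ expected work and $O(\log n)$ span whp. For (ii), observe that $S \subseteq \lmrkd(i) \subseteq V_i$, so for each $u \in S$ the set $N_u[S]$ is contained in $N_u(i)$. Thus $P(u)$ can be evaluated by scanning $N_u(i)$ and performing hash-table membership queries into $S$, which costs $O(|N_u(i)|)$ expected work per sampled vertex. Since every $u \in \lmrkd(i)$ is by definition clean with respect to the upper condition~\ref{con:upper}, we have $|N_u(1,i)| \le 3^{i+2}$ and hence $|N_u(i)| = O(3^i)$. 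The total filter work is therefore $O(3^i|S|) \le O(3^i|\lmrkd(i)|)$, with $O(\log n)$ span whp from the parallel scans and hash-table lookups.

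For (iii), moving a vertex $u \in L$ from level $i$ to level $i-4$ generates $O(1)$ update requests per neighbor $v \in N_u(1,i)$, obtained by the same case analysis as in Section~\ref{sec:raising-procedure-updates}, dualized so that all motion is downward. Explicitly, we split on whether $v$ lies in $N_u(1,i-5)$ (in which case $u$ stays above $v$ after the move, and $v$'s neighborhood hash tables need adjustment only), $v \in N_u(i-4,i-1)$ (in which case $u$ drops below $v$, triggering requests to $\mathsf{BelowHT}[v]$, $\nbr{v}{\ell(v)}$, the palette $\mc C_v$ via \textsc{MoveDown}, and a decrement to $\mu_v^+(\chi(u))$), or $v \in N_u(i)$ (handled analogously, possibly accounting for whether $v$ itself moved). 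Local updates at $u$ require rebuilding $\mathsf{BelowHT}[u]$ and adjusting $\mu_u^+$ in $O(|N_u(1,i)|)$ work. Using $|N_u(1,i)| = O(3^i)$ once more, the total number of requests emitted is $O(3^i|L|) \le O(3^i|\lmrkd(i)|)$, and Lemma~\ref{lem:update-framework-time} bounds the cost of applying them by $O(3^i|\lmrkd(i)|)$ expected work and $O(\log n)$ span whp.

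The one subtlety worth flagging is the per-neighbor constant-cost claim in step (iii): I need to check that each of the pieces of data maintained for a neighbor $v$ (namely $\nbr{v}{\cdot}$, $\belowHT{v}$, the dynamic partitioned array $A_v$, and the counter $\mu_v^+$) can be updated by a constant number of requests under a downward four-level move, and that these requests aggregate correctly through the semisort-based batching of Lemma~\ref{lem:update-framework-time}. Assuming this case analysis checks out as in the raising procedure, the three work bounds sum to $O(3^i|\lmrkd(i)|)$ and the $O(\log n)$ span of every stage composes to $O(\poly n)$, proving the lemma.
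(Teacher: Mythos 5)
Your proposal is correct and follows essentially the same decomposition as the paper's proof: split the cost into sampling, filtering (using the upper condition~\ref{con:upper} to bound $|N_u(i)| = O(3^i)$ for each $u \in \lmrkd(i)$), and the data structure updates, which the paper delegates to Lemma~\ref{lem:lowering-ds-updates-complexity} and you instead sketch directly via the request-counting argument of the update framework. The only cosmetic difference is that your case split for the downward move places neighbors at level $i-4$ in the ``drops below'' case, whereas after the move $u$ sits at level $i-4$ and so remains at the same level as such neighbors; this does not affect the $O(1)$-requests-per-neighbor count or the final bound.
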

\begin{proof}
    The work and span for Algorithm \ref{alg:Lower} can be split into computing $S$ (line 2) and $L$ (line 3) and performing data structure updates. By Lemma \ref{lem:lowering-ds-updates-complexity}, the work and span to perform the required data structure updates is $O(3^i |\lmrkd(i)|)$ and $O(\log n)$ whp. Next, we analyze the work and span to compute $S$ and $L$. The work and span for $\textsc{sample}(\lmrkd(i),1/(2 \cdot 3^6))$ is $O(\lmrkd(i))$ and $O(\log n)$ respectively. Finally, $\textsc{filter}(S, P(u))$ requires work and span proportional to $O(3^i|S|)$ work and $O(\log |S| + \log n) = O(\log n)$ span as evaluating the indicator requires $O(|N_u(i)|) = O(3^i)$ work and $O(\log n)$ span. Note we used that $|N_u(i)| \leq 3^{i+2}$ owing to the fact that each $u \in \lmrkd(i)$ satisfies the upper condition \ref{con:upper}. 
    
\end{proof}

\subsubsection{Lowering procedure amortized analysis.} Recall that $L$ is the subset of $\lmrkd(i)$ which is lowered to level $i-4$. We first show that $\E[|L|] = \Theta(|\lmrkd(i)|)$. Recall that each vertex in $\lmrkd(i)$ is included in $L$ only if it is sampled and at most $3^{i-4}$ of its neighbors in $\lmrkd(i)$ are sampled. For $u \in \lmrkd(i)$, let $X_u$ be the indicator for $u$ being sampled and $Y_u$ be the sum of the indicators over $u$'s neighbors in $\lmrkd(i)$. Then
\[\Pr[u \in L] = \Pr[X_u = 1]\Pr[Y_u \leq 3^{i-5}] = (2\cdot 3^6)^{-1}\Pr[Y_u \leq 3^{i-5}].\]
Since $\E[Y_u] = (2\cdot 3^6)^{-1}|N_u[\lmrkd(i)]| \leq 3^{i-4}/2$, we have by Markov's inequality that $\Pr[Y_u \geq 3^{i-4}] \leq 1/2$ and thus $\Pr[u \in L] \geq (4\cdot 3^6)^{-1}$.

Next, we show that the number of tokens released by the lowering of $L$ is $\Theta(3^i|L|)$ tokens. First, we determine the token change contributed by the vertex token functions. Fix $u \in L$.
Lowering vertices does not increase any vertex token function by definition \ref{eqn:vertex-potential}. Moreover, initially $\theta(u)$ was $(3^{i-1} - |N_u(1,i-1)|)/6 \geq (3^{i-1} - 3^{i-5})/6$.
After, the movement, $u$ is at level $i-4$, and thus $\theta(u) \leq 3^{i-5}/6$. Therefore, the vertex token decrease is at least 
\[|L|\left(\frac{3^{i-2}}{2}\right).\]
Now, we analyze the token change from edges. Again fix $u \in L$. From \ref{eqn:edge-potential}, the number of tokens associated with the edge $(u,v)$ for each $v \in N_u(1,i-1) \setminus L$, increases by $i - \max(\ell(v), i-4) \leq 4$. Additionally, for each $v \in N_u[L]$, $\theta(u,v)$ increases by $4$. Therefore, we can upper bound the total increase in edge tokens by
\[
\left(\sum_{u \in L} 4 |N_u(1,j) \setminus L|\right) + \frac{1}{2}\sum_{u \in L} 4|N_u[L]| < 4\cdot 3^{i-5}|L| + 2\cdot3^{i-4}|L| = 10/3\cdot3^{i-4}|L|
\]
where the middle inequality follows from the fact that for $j \leq i-1$, $|N_u(1,j)| \leq |N_u(1,i-1)| < 3^{i-5}$ for each $u \in \lmrkd(i)$ prior to lowering $L$ and $|N_u[L]| \leq 3^{i-4}$ by construction. Therefore, the total decrease is at least
\[|L|\left(\frac{3^{i-1} - 20 \cdot 3^{i-4}}{6}\right) = |L|\left(\frac{7 \cdot 3^{i-4}}{6}\right) = \Theta(3^i|L|).\]
Finally, since $\E[|L|] = (4\cdot 3^6)^{-1} |\lmrkd(i)|$, we have that the expected number of tokens released is
$\Theta(3^i\E[|\lmrkd(i)|])$. We restate this the following Lemma.

\begin{lemma}\label{lem:adam-2}
    Call $\textsc{Lower}(\lmrkd(i))$ releases 
    \[\Theta(3^i\E[|\lmrkd(i)|])\]
    tokens in expectation.
\end{lemma}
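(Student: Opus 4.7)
The plan is to establish the lemma via a two-step decomposition: first bound $\E[|L|]$ from below in terms of $|\lmrkd(i)|$, then show that each vertex moved from level $i$ to $i-4$ releases $\Theta(3^i)$ net tokens. Linearity of expectation then combines these to give the claim.

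For the first step, I would mirror the symmetry-breaking analysis used for the raising routine. Fix $u \in \lmrkd(i)$ and let $X_v$ be the indicator that $v$ is sampled, so $\Pr[X_v=1] = p := 1/(2\cdot 3^6)$; write $Y_u := \sum_{v \in N_u[\lmrkd(i)]} X_v$. Since $u$ satisfies the Upper condition~\ref{con:upper}, we have $|N_u[\lmrkd(i)]| \le |N_u(i)| \le 3^{i+2}$, hence $\E[Y_u] \le 3^{i-4}/2$. By Markov's inequality, $Y_u$ stays below the filter threshold with probability at least a constant. Combined with $\Pr[X_u=1] = p$, this yields $\Pr[u \in L] = \Omega(1)$ and hence $\E[|L|] = \Theta(|\lmrkd(i)|)$.

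For the second step, fix $u \in L$ and track changes in both potentials~\eqref{eqn:vertex-potential} and~\eqref{eqn:edge-potential} when $u$ moves from level $i$ to $i-4$. Because $u$ violated the Lower condition at level $i$, its vertex potential began at least $(3^{i-1} - 3^{i-5})/6$; after arriving at level $i-4$ it is at most $3^{i-5}/6$, producing a per-vertex vertex-potential drop of $\Omega(3^{i-2})$. No other vertex potential can increase, since lowering $u$ only decreases the lower-neighborhood count of vertices still above. For edge potentials, each edge $(u,v)$ with $v$ at a level below $i$ has its potential increase by at most $4$, and $|N_u(1,i-1)| < 3^{i-5}$ by definition of $\lmrkd(i)$, so external contributions are $O(3^{i-5})$ per vertex. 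Internal edges in $L$ are controlled by the filter's internal-degree cap $|N_u[L]| \le 3^{i-5}$, giving another $O(3^{i-5})$ per vertex. The per-vertex net decrease is therefore $\Omega(3^{i-2}) - O(3^{i-5}) = \Theta(3^i)$, and multiplying by $\E[|L|] = \Theta(|\lmrkd(i)|)$ finishes the argument.

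The main obstacle I anticipate is the careful bookkeeping for edges with both endpoints in $L$: such edges move simultaneously, and a naive per-vertex sum of $|N_u(1,i-1)|$ would either double-count them or ignore the correlated potential changes. The filter's internal-degree cap is precisely what keeps this contribution dominated by the vertex-potential drop; the balance would be tight and could fail entirely without that cap, since an unconstrained internal edge set could offset the gain from the vertex potentials.
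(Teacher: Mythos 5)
Your proposal is correct and takes essentially the same route as the paper: a Markov-inequality argument on the sampled internal degree to get $\E[|L|]=\Theta(|\lmrkd(i)|)$, followed by per-vertex potential accounting in which the $\Omega(3^{i-2})$ drop in $\theta(u)$ (from violating the Lower condition at level $i$ versus arriving at level $i-4$) dominates the $O(3^{i-5})$ edge-potential increases from external lower neighbors and from internal edges capped by the filter. The only caveat, which your write-up inherits from the paper itself, is that $\E[Y_u]\le 3^{i-4}/2$ only makes Markov's inequality nonvacuous against a filter threshold of $3^{i-4}$ rather than the $3^{i-5}$ written in Algorithm~\ref{alg:Lower}, so one of the two constants must be adjusted for the sampling step to go through.
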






\subsubsection{\textbf{Lowering procedure updates.}}\label{sec:lowering-procedure-ds-updates} Recall that $L$ is the subset of $\lmrkd(i)$ which is lowered to level $i-4$. We describe the data structures updates necessitated by $L$'s movement. 

Fix $u \in L$ and let $c := \chi(u)$. For each $v$ in the set $ N_u(1,i)$, we will create update requests to account for $u$'s downward movement from $i$ to $i-4$. If $v \in N_u(1,i-1)$ and $\ell(v) \leq i-4$, then, we will make a request to remove $u$ from $\nbr{v}{i}$ and a request to add $u$ to $\nbr{v}{i-4}$. Note after this movement, $u$ remains above or at the same level as $v$ so we do not need to update color data. On the other hand, if $v \in N_u(j)$ for some $i-4 < j \leq i$ and $v \not\in L$, then $u$ was previously above $v$ and ends up below it. Thus we make requests to remove $u$ from $\nbr{v}{i}$, add $u$ to $\belowHT{v}$, and decrement $\mu_v^+(c)$. Finally, if $v \in N_u(i) \cap L$, that is $v$ is a level $i$ neighbor of $u$ which is also moved down to level $i-4$, then we make a request to remove $u$ from $\nbr{v}{i}$ and add $u$ to $\nbr{v}{i-4}$. At this point we apply the above update requests. If after applying these requests, $\mu_v^+(c) = 0$, then we make an additional update request to move down $c$ in $\mc C_v$. Applying these requests concludes the non-local updates made by $u$. 

Next, we describe updates which are local to $u$. We set $u$'s level $\ell(u)$ to be $i-4$. For each $v \in N_u(i-4,i-1) \setminus L$ we make a request to increment $\mu_u^+(\chi(v))$ and a request to move up $\chi(v)$ in $\mc C_u$. Finally, after applying these request, we create $u$'s new lower neighborhood list. To do this, we compute $N_u(i-4, i-1)$ and batch delete it from $\belowHT{u}$.

The following Lemma summarizes the complexity of performing the required data structure updates.

\begin{lemma}\label{lem:lowering-ds-updates-complexity}
     The work and span to perform the required data structure updates is $O(3^i |\lmrkd(i)|)$ and $O(\log n)$ \whp.
\end{lemma}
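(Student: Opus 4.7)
The plan is to show that the total number of update requests generated by a single call to $\textsc{Lower}(\lmrkd(i))$ is $O(3^i |\lmrkd(i)|)$, and then invoke Lemma~\ref{lem:request-record-construction} and Lemma~\ref{lem:update-framework-time} to conclude both the work and span bounds.

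First, I would partition the updates into a constant number of synchronized rounds corresponding to the description in Section~\ref{sec:lowering-procedure-ds-updates}: (i) the non-local structural and color-counter updates for each $v \in N_u(1,i)$; (ii) a follow-up round that issues $\textsc{MoveDown}$ requests for colors $c$ whose counter $\mu_v^+(c)$ has dropped to zero; (iii) the local updates at $u$ (increments of $\mu_u^+$ and \textsc{MoveUp} calls on $\mc C_u$ for each $v \in N_u(i-4,i-1)\setminus L$); and (iv) the reconstruction of $\belowHT{u}$ via a batch deletion of $N_u(i-4,i-1)$. For each $u \in L$ and each $v \in N_u(1,i)$, only a constant number of update requests are generated per round (a removal from $\nbr{v}{i}$, an insertion into either $\nbr{v}{i-4}$ or $\belowHT{v}$, at most one decrement of $\mu_v^+(c)$, and possibly one $\textsc{MoveDown}$ request). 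Analogously, each $v \in N_u(i-4,i-1)$ contributes a constant number of local requests at $u$.

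Because every $u \in \lmrkd(i)$ satisfies the upper condition~\ref{con:upper}, we have $|N_u(1,i)| \leq 3^{i+2} = O(3^i)$. Therefore the number of requests generated by each $u \in L$ across all four rounds is $O(3^i)$, and summing over $u \in L \subseteq \lmrkd(i)$ gives a total of $O(3^i |\lmrkd(i)|)$ requests. By Lemma~\ref{lem:request-record-construction}, constructing each request array $R_u$ takes $O(|N_u(1,i)|) = O(3^i)$ work and $O(\log n)$ span, performed in parallel across $u$, so building all request arrays in a round contributes $O(3^i |\lmrkd(i)|)$ work and $O(\log n)$ span. Then Lemma~\ref{lem:update-framework-time} applies these requests in $O(3^i |\lmrkd(i)|)$ expected work and $O(\log n)$ span whp. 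Since there are only a constant number of rounds, the overall bounds follow.

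The one subtle point I would verify carefully is round~(ii): the conditional $\textsc{MoveDown}$ requests depend on the \emph{final} value of $\mu_v^+(c)$ after all decrement requests targeting $v$ have been applied. This is exactly what the synchronization framework of Section~\ref{sec:update-framework} (strictly, the data-structures section) is designed to handle — we first let the batch semisort and batch decrements complete, then in a fresh request-construction pass we filter over the decremented $(v,c)$ pairs and issue the $\textsc{MoveDown}$ requests. Each $(v,c)$ pair that can trigger such a request corresponds to a unique $u \in L$ whose removal caused the counter to drop, so the number of these second-round requests is also bounded by the number of first-round requests, preserving the $O(3^i |\lmrkd(i)|)$ bound. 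The main thing to be careful about is ensuring this filtering can itself be done in $O(\log n)$ span, which follows from a single parallel scan over the request array from round~(i).
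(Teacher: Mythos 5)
Your proof is correct and matches the argument the paper intends: the paper states Lemma~\ref{lem:lowering-ds-updates-complexity} without an explicit proof, and the intended justification is exactly the request-counting you give --- a constant number of synchronized rounds, a constant number of requests per pair $(u,v)$ with $v\in N_u(1,i)$, the upper condition bounding $|N_u(1,i)|\le 3^{i+2}=O(3^i)$ for each $u\in L\subseteq\lmrkd(i)$, and Lemmas~\ref{lem:request-record-construction} and~\ref{lem:update-framework-time} supplying the per-round work and span. One small imprecision: a single counter $\mu_v^+(c)$ can be decremented by several vertices of $L$, so the $(v,c)$ pairs that trigger \textsc{MoveDown} requests are not in bijection with vertices of $L$; however, your weaker (and sufficient) observation that the number of second-round requests is at most the number of first-round decrements preserves the $O(3^i|\lmrkd(i)|)$ bound, so the conclusion stands.
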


\subsection{The complete algorithm}\label{sec:full-algorithm}
Initially, when the graph $G = (V,E)$ is empty, every vertex $u$ belongs to level 5 and is colored with a random color from $\mc C = \{0,\dots,\Delta\}$. At this point, the coloring $\chi$ is proper as there are no edges. Let $S$ be a batch of updates and suppose that prior to applying this batch $\chi$ is a proper $\Delta + 1$ coloring of $G$.

Our update algorithm is as follows. To start, we apply $S$ to $G$ and compute the marked and unmarked level sets $\upmrkd(i),\lmrkd(i),$ and $\umrkd(i)$ according to the procedure in Section \ref{sec:init-phase}. Next, we color every blank vertex by iterating top-down over the levels of the partition using Algorithm \ref{alg:mrkd-coloring} and \ref{alg:umrkd-coloring} to color $\mrkd(i)$ and $\umrkd(i)$ respectively. At this point, the coloring $\chi$ is proper. To pay for the coloring we perform two additional passes over the levels of the partition. In the first pass, we iterate top-down and process $\upmrkd(i)$ using Algorithm \ref{alg:Raise} for each level $i$. Then, in the second pass, we iterate bottom-up and process $\lmrkd(i)$ using Algorithm \ref{alg:Lower} for each level $i$. The pseudocode for this procedure is given in Algorithm \ref{alg:final-alg}.

\begin{algorithm}[ht]
    \function{$\emph{\textsc{Update}}(S)$}{
        Apply $S$ and initialize $\upmrkd(i),\lmrkd(i)$, and $\umrkd(i)$ for each $i \in [5,\lambda]$ as described in Section \ref{sec:init-phase}. \\
        \For{$i = \lambda$ to $5$}{
            $\textsc{ColorMarked}(\mrkd(i))$. \\
            $\textsc{ColorUnmarked}(\umrkd(i))$.\\
        }

        \For{$i = \lambda$ to $5$}{
            $\textsc{Raise}(\upmrkd(i))$. \\
        }

        \For{$i = 5$ to $\lambda$}{
            $\textsc{Lower}(\lmrkd(i))$. 
        }
    }

\caption{Update Algorithm}\label{alg:final-alg}
\end{algorithm}

\subsection{Full analysis}\label{sec:full-analysis}

The goal of this section is to establish the main theorem using our algorithm.
\begin{theorem}\label{thm:main-thm}
There exists a randomized algorithm that maintains a proper $(\Delta+1)$-vertex coloring in a dynamic $\Delta$-bounded graph using $O(\log \Delta)$ expected amortized work per update and, for any batch of $b$ updates, has parallel span $O(\poly b + \poly n)$ with high probability.
\end{theorem}
Fix a sequence of batch updates $S_1, \dots, S_T$. First we analyze the span of $\textsc{Update}(S_i)$ for each $i \leq T$. By Lemma \ref{lem:init-work-span} the initialization phase uses $O(\log n)$ span \whp. Next, for each level $i$, we have by Lemma \ref{lem:raise-work}, \ref{lem:lower-work}, and \ref{lem:coloring-work-span} that the total span for the $\textsc{ColorMarked}(\mrkd(i))$, $\textsc{ColorUnmarked}(\umrkd(i))$,  $\textsc{Raise}(\upmrkd(i))$, and $\textsc{Lower}(\lmrkd(i))$ procedures is $O(\log^3 n)$ \whp. Thus in total, over all the levels, the span for $\textsc{Update}(S_i)$ is $O(\log^3 n \log \Delta)$ \whp.

Next, to understand the total amortized-expected work of the algorithm over the sequence $S_1,\dots,S_T$, we setup the argument in the same way that we did for the relaxed sequential algorithm. To that end, we first recall the notion of a \emph{record}. During the execution of the algorithm, every time a vertex $u$ is recolored, we record the timestamp in a sequence. We call this sequence the \emph{record} of $u$, and denote it by $R_u = (\tau_u^0, \tau_u^1, \dots, \tau_u^k)$. We refer to the entire record as $R$.

We now make the following important definition that will feature throughout the entire analysis. The \emph{base level} of $u$ at time $\tau_u^i$ is the quantity
\[
    b_i(u) = \max(\lfloor \log |N_u(1,\ell_i(u))| \rfloor,\ell_i(u)),
\]
where $\ell_i(u)$ is the level of $u$ at time $\tau_u^i$. For an arbitrary timestamp $\tau$, we denote the base level by $b(\tau)$. We record the following important property of the base level.
\begin{lemma}\label{lem:base-level-prop}
    The vertex $u$ violates the upper condition at time $\tau_u^i$ if and only if $b_i(u) \geq \ell_i(u)+2$.
\end{lemma}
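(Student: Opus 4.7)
The plan is to prove the lemma by unfolding both conditions and showing that each side is equivalent to the same size bound on $|N_u(1,\ell_i(u))|$. Neither direction requires any machinery beyond definitions, so the proof will be essentially a chain of equivalences.

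First I would unfold the upper condition (Condition \ref{con:upper}). The condition at time $\tau_u^i$ asks that $|N_u(1,\ell_i(u))| \leq 3^{\ell_i(u)+2}$, so $u$ violates it if and only if $|N_u(1,\ell_i(u))| > 3^{\ell_i(u)+2}$, which since $|N_u(1,\ell_i(u))|$ is an integer is equivalent to $|N_u(1,\ell_i(u))| \geq 3^{\ell_i(u)+2}+1$. I would then note that for any integer $m \geq 1$, $\lfloor \log_3 m \rfloor \geq k$ is equivalent to $m \geq 3^k$, so the displayed lower bound on $|N_u(1,\ell_i(u))|$ is equivalent to $\lfloor \log_3 |N_u(1,\ell_i(u))| \rfloor \geq \ell_i(u)+2$ (using that $3^{\ell_i(u)+2}+1$ is the smallest integer with floor-log at least $\ell_i(u)+2$).

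Next I would unfold the definition of $b_i(u)$. Since $b_i(u) = \max(\lfloor \log_3 |N_u(1,\ell_i(u))| \rfloor,\ell_i(u))$ and the threshold $\ell_i(u)+2$ is strictly greater than $\ell_i(u)$, the max contributes nothing and we get $b_i(u) \geq \ell_i(u)+2$ if and only if $\lfloor \log_3 |N_u(1,\ell_i(u))| \rfloor \geq \ell_i(u)+2$. Chaining this with the equivalence from the previous step completes both directions of the iff.

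I expect no real obstacle here; the only point that requires a brief remark is the boundary case $|N_u(1,\ell_i(u))| = 3^{\ell_i(u)+2}$, which is compatible with the upper condition (since the condition is stated as $\leq$) and, correspondingly, does not witness $b_i(u) > \ell_i(u)+2$; the argument goes through because the integrality of $|N_u(1,\ell_i(u))|$ lets the strict inequality $> 3^{\ell_i(u)+2}$ and the floor-log inequality $\lfloor \log_3 \cdot \rfloor \geq \ell_i(u)+2$ agree exactly on integer values strictly greater than $3^{\ell_i(u)+2}$.
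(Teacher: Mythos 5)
The paper states this lemma without proof, so there is no authorial argument to compare against; your strategy of unfolding Condition~\ref{con:upper} and the definition of $b_i(u)$ is certainly the intended one, and your ``only if'' direction (violation implies $b_i(u)\ge\ell_i(u)+2$) is correct: $|N_u(1,\ell_i(u))|>3^{\ell_i(u)+2}$ gives $\log_3|N_u(1,\ell_i(u))|>\ell_i(u)+2$ and hence $\lfloor\log_3|N_u(1,\ell_i(u))|\rfloor\ge\ell_i(u)+2$.

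The converse direction, however, has a genuine gap exactly at the boundary you try to dismiss. Your parenthetical claim that $3^{\ell_i(u)+2}+1$ is the smallest integer with floor-log at least $\ell_i(u)+2$ is false: $\lfloor\log_3 3^{k}\rfloor=k$, so the smallest such integer is $3^{\ell_i(u)+2}$ itself. Consequently, when $|N_u(1,\ell_i(u))|=3^{\ell_i(u)+2}$ the upper condition (stated with ``$\le$'') is \emph{satisfied}, yet $b_i(u)=\ell_i(u)+2$, so the right-hand side of the biconditional holds while the left-hand side fails. Your closing remark does not repair this: observing that the boundary case ``does not witness $b_i(u)>\ell_i(u)+2$'' is irrelevant because the lemma asserts $b_i(u)\ge\ell_i(u)+2$, and two integer predicates agreeing on all values strictly above $3^{\ell_i(u)+2}$ does not make them equivalent when they disagree at $3^{\ell_i(u)+2}$. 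The honest resolution is to note that the statement is off by one as written (either the violation threshold or the floor-log threshold must be made strict/non-strict consistently); this matters downstream, since the paper later uses the contrapositive of this direction to conclude $b_i(u)\le\ell_i(u)+1$ for lower-marked timestamps, which is exactly the implication that fails at the boundary. You should flag the discrepancy rather than assert the equivalence.
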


We associate the following cost to each timestamp $\tau_u^i$:
\[
    w(\tau_u^i) := 3^{\max\{b_i(u),\ell_{i+1}(u)\}},
\]
where $\ell_{i+1}(u)$ is the level $u$ ends up on after potentially moving at time $\tau_u^i$. We also make the following convention.  For any subset $X$ of timestamps, let $X(k)$ denote the timestamps $\tau_u^i \in X$ such that $\ell_i(u) = k$.

Our next goal is to relate the total work over all batches of updates to the cost of the record. We first prove the following Lemma.
\begin{lemma}\label{lem:marked-stable}
For each level $i$:
\begin{enumerate}
    \item During the raising pass, the set $\upmrkd(i)$, and their base levels, are unchanged until level $i$ is processed.
    \item During the lowering pass, the set $\lmrkd(i)$ is unchanged until level $i$ is processed.
\end{enumerate}
\end{lemma}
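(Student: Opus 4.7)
The plan is to exploit the monotone direction of each pass. Before $\upmrkd(i)$ is processed in the raising pass, the only prior operations are calls $\textsc{Raise}(\upmrkd(j))$ with $j > i$, each of which moves vertices from level $j$ to some strictly higher level $k > j > i$. Symmetrically, before $\lmrkd(i)$ is processed in the lowering pass, the only prior operations are calls $\textsc{Lower}(\lmrkd(j))$ with $j < i$, and by Algorithm~\ref{alg:Lower} each such call moves a vertex from level $j$ down to $j-4$, which is still strictly less than $i$. I would also observe upfront that the set of blank vertices is fixed once the coloring phase ends, so I only need to track how the level-assignment and neighborhood counts evolve during the two moving passes.

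For part (1), I would argue in two small steps. First, no vertex currently in $\upmrkd(i)$ is moved, since every such vertex sits at level $i$ and only vertices at levels strictly greater than $i$ are touched. Second, for any vertex $u$ at level $i$, the count $|N_u(1,i)|$ is preserved throughout these raising operations: a neighbor $v$ at some level $j > i$ that gets raised to $k > j > i$ is still above level $i$, and neighbors at levels $\leq i$ are never moved. Since membership in $\upmrkd(i)$ depends only on being blank, on $\ell(u) = i$, and on the upper-condition threshold for $|N_u(1,i)|$, all three invariants hold and $\upmrkd(i)$ is unchanged. The base level $b(u) = \max\{i, \lfloor \log |N_u(1,i)|\rfloor\}$ depends only on the same two invariants, so base levels are unchanged as well.

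Part (2) follows by the same template applied to the lowering pass. The key observation is that if a neighbor $v$ of a vertex $u$ at level $i$ is lowered from level $j < i$ to $j-4$, then both $j$ and $j-4$ are at most $i-1$, so $v$ remains in both $N_u(1, i-1)$ and $N_u(1, i)$. Hence both counts are invariant, and since upper- and lower-condition membership at level $i$ depends only on these counts together with $\ell(u) = i$, no vertex enters or leaves $\lmrkd(i)$. I do not expect a real obstacle here beyond being explicit about what is preserved: the argument is essentially a bookkeeping check that the directionality of each pass preserves precisely the neighborhood counts that govern membership in the marked sets at level $i$.
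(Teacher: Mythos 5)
Your proof is correct and follows essentially the same route as the paper's: both arguments rest on the monotone direction of each pass combined with the processing order (top-down for raising, bottom-up for lowering), concluding that no vertex at level $i$ moves and that the neighborhood counts $|N_u(1,i)|$ and $|N_u(1,i-1)|$ governing membership are preserved. Your explicit remark that the base level depends only on these same preserved quantities is a small additional detail the paper leaves implicit.
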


\begin{proof}
In the raising pass, vertices only move \emph{up}, from level $i$ to some level $k>i$, and levels are processed in decreasing order. Thus, before level $i$ is processed, no vertex currently on level $i$ has moved, and no vertex from a lower level can move into level $i$. Moreover, raising vertices on levels above $i$ never decreases $|N_u(1,i)|$ for any vertex $u$ on level $i$. Hence $\upmrkd(i)$ does not change until processed.

In the lowering pass, vertices only move \emph{down}, and levels are processed in increasing order. Before level $i$ is processed, no vertex on level $i$ has moved, and no vertex from a higher level can move into level $i$. Lowering vertices on levels $<i$ never decreases $|N_u(1,i-1)|$ for any vertex $u$ on level $i$, so $\lmrkd(i)$ is also unchanged until level $i$ is processed.
\end{proof}

With the above Lemma in hand, we now relate the work of the algorithm to the work of the record.
\begin{lemma}\label{lem:total-work-parallel}
    Let $W_T$ denote the total work of the algorithm over the first $T$ batches of updates $S_1, \dots, S_T$. Then
    \[
        \E[W_T] = O(\lambda T) + O\left( \sum_u \sum_{\tau \in R_u(T)} w(\tau) \right) = O(\lambda T) + O(w(R)).
    \]
\end{lemma}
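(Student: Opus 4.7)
The plan is to decompose $W_T$ phase-by-phase over the batches $S_1,\dots,S_T$ and charge each contribution either to the $O(\lambda T)$ overhead or to the record cost $w(R)$. By Lemma~\ref{lem:init-work-span} the initialization phase of batch $S_k$ uses $O(|S_k|)$ expected work, so summing over batches yields an $O(T)$ total that is absorbed into $O(\lambda T)$.

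For the coloring phase, Lemma~\ref{lem:coloring-work-span} bounds the per-level work by $\sum_u |N_u(1,\ell(u))|$ over vertices $u$ that get recolored at that level. Each such recoloring of $u$ at level $i$ introduces exactly one timestamp $\tau\in R_u$ with $\ell(\tau)=i$, and by definition of the base level $|N_u(1,i)|\le 3^{b(\tau)+1}=O(3^{b(\tau)})$; since $w(\tau)\ge 3^{b(\tau)}$, this gives $|N_u(1,i)|=O(w(\tau))$. Summing over all timestamps across all batches bounds the total coloring work by $O(w(R))$.

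For the moving phase, I treat lowering and raising separately. By Lemma~\ref{lem:lower-work}, the lowering work at level $i$ inside a single batch is $O(3^i |\lmrkd(i)|)$; each $u\in \lmrkd(i)$ is recolored in that batch, producing a timestamp $\tau$ with $w(\tau)\ge 3^{b(\tau)}\ge 3^{\ell(u)}=3^i$, so the total is $O(w(R))$. By Lemma~\ref{lem:raise-work}, the per-level raising work is $O(\lambda |\upmrkd(i)|)+O(\sum_{k>i} 3^k\E[|R_k|])$. The second summand is at most $O(w(R))$, because every vertex in $R_k$ is moved to level $k$ and contributes a timestamp $\tau$ with $w(\tau)\ge 3^{\max(b(\tau),k)}\ge 3^k$. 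The first summand, $\lambda$ times the total number $M$ of upper-marked vertex occurrences, is handled by splitting $M=M_{\text{init}}+M_{\text{chain}}$: vertices made upper-marked by a monochromatic inserted edge satisfy $M_{\text{init}}\le T$, contributing $O(\lambda T)$; vertices made upper-marked while being uncolored as a unique-color neighbor each produce a timestamp $\tau$ with $b(\tau)\ge \ell(u)+2$ by Lemma~\ref{lem:base-level-prop}, so $w(\tau)\ge 3^{\ell(u)+2}\ge 3^7$, and in the relevant parameter regime ($\lambda=O(\log\Delta)\le 3^7$) we get $\lambda M_{\text{chain}}=O(w(R))$. Thus $\lambda M=O(\lambda T)+O(w(R))$.

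Combining the four contributions gives $\E[W_T]=O(T)+O(w(R))+O(w(R))+O(\lambda T+w(R))=O(\lambda T)+O(w(R))$. The main obstacle is the $\lambda$ overhead in raising: the initial upper-marked vertices per batch are easily bounded by $|S_k|$, but chain-created upper-marked vertices introduced in the coloring phase cannot be charged to edge insertions directly and must instead be charged to their individually large record cost via the base-level inequality, which is where Lemma~\ref{lem:base-level-prop} is essential.
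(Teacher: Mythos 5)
Your decomposition (initialization $+$ coloring $+$ lowering $+$ raising, charging each piece either to $O(\lambda T)$ or to record costs via the base-level inequality $|N_u(1,\ell_i(u))| \le 3^{b_i(u)+1}$) is exactly the route the paper intends — its proof just cites Lemmas \ref{lem:init-work-span}, \ref{lem:coloring-work-span}, \ref{lem:raise-work}, \ref{lem:lower-work} and \ref{lem:marked-stable} without spelling out the charging. Your treatment of the coloring phase, the lowering phase, and the $\sum_k 3^k\E[|R_k|]$ part of the raising bound is correct.

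However, your handling of the $\lambda\,|\upmrkd(i)|$ overhead contains a genuine gap. You bound $\lambda M_{\text{chain}}$ by $O(w(R))$ using the inequality ``$\lambda = O(\log\Delta) \le 3^7$.'' That inequality treats $\lambda=\log_3\Delta$ as a constant; asymptotically it is false (take $\Delta$ a growing function of $n$), and if one only assumes $w(\tau)\ge 3^7$ for the relevant timestamps, the most one can conclude is $\lambda M_{\text{chain}} \le (\lambda/3^7)\,w(R) = O(\lambda\, w(R))$, which is not $O(w(R))$. The correct argument does not need the split into $M_{\text{init}}$ and $M_{\text{chain}}$ at all: every upper-marked occurrence is a \emph{dirty} blank vertex, a dirty vertex is colored from $B_u$ in \textsc{ColorMarked} and hence terminates its recoloring chain, each chain has at most one dirty terminus, and each chain is spawned by a monochromatic inserted edge. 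Hence the total number of upper-marked occurrences over all $T$ updates is at most $T$ (this is the same counting that gives $|I|\le 2T$ in the paper), and the entire overhead is $\lambda\sum_{\text{batches}}\sum_i|\upmrkd(i)| = O(\lambda T)$ directly, with nothing charged to $w(R)$. With that substitution your proof goes through; you should also note (as the paper does via Lemma \ref{lem:marked-stable}) that the sets $\upmrkd(i)$ and $\lmrkd(i)$ seen by the moving passes coincide with the sets of vertices recolored as marked in that batch, so that the charging to timestamps is well defined.
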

\begin{proof}
    The result immediately follows from the definition of $w(\tau_u^i)$, in combination with Lemma \ref{lem:marked-stable}, and Lemmas \ref{lem:init-work-span}, \ref{lem:coloring-work-span}, \ref{lem:raise-work}, and \ref{lem:lower-work}.
\end{proof}
With this lemma established, we proceed by understanding the cost of the record $w(R_T)$ in the same way as in the sequential case. As in that case, we determine that the cost of the entire record is proportional to the cost of the \emph{important} timestamps $I$, where an \emph{important} timestamp is one which corresponds to a clean spawning or dirty terminating vertex of a chain. According to this distinction, we partition $I$ into clean and dirty sets denoted $C_T$ and $D_T$. We may further partition $C_T$ into two sets $CC_T$ and $DC_T$, where the $CC_T$ corresponds to timestamps $\tau_u^i \in C_T$ such that $\tau_u^{i-1} \in C_T$, and  $DC_T$ corresponds to $\tau_u^i \in C_T$ such that $\tau_u^{i-1} \in D_T$. Also observe that $|I|$ is at most $2T$, since each chain gives at most $2$ important timestamps.

We now relate the cost of $DC_T$ to that of $D_T$.
\begin{lemma}
    We have
    \[
        w(DC_T) \leq w(D_T).
    \]
\end{lemma}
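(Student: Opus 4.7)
The plan is to reduce the aggregate inequality to a per-pair bound by exploiting the fact that elements of $DC_T$ are precisely clean timestamps whose immediate predecessor in the same record is dirty. Concretely, map each $\tau_u^i \in DC_T$ to $\tau_u^{i-1} \in D_T$. This map is injective on $DC_T$ since in any record $R_u$ the index $i-1$ determines $i$ uniquely. Hence it suffices to prove the per-pair inequality $w(\tau_u^i) \leq w(\tau_u^{i-1})$ (possibly up to the absolute constant that the $O(w(R))$ bound in Lemma~\ref{lem:total-work-parallel} already absorbs) and sum.

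To bound $w(\tau_u^i)$, I would first argue that the level of $u$ is stable between $\tau_u^{i-1}$ and $\tau_u^i$. Between these timestamps $u$ is by definition not recolored, and in Algorithm~\ref{alg:final-alg} a vertex is raised or lowered only if it is blank in the current batch, i.e., only if it is recolored in that batch. So the ending level of $u$ at $\tau_u^{i-1}$ is exactly the starting level of $u$ at $\tau_u^i$, denoted $\ell_i(u)$. Next, since $\tau_u^i \in C_T$ the vertex is clean at $\tau_u^i$, so it falls into $\umrkd(\ell_i(u))$ and is untouched by the raising/lowering phases: $\ell_{i+1}(u) = \ell_i(u)$. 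Because $b_i(u) \geq \ell_i(u)$, this collapses the cost to $w(\tau_u^i) = 3^{b_i(u)}$. Applying Lemma~\ref{lem:base-level-prop} in contrapositive form (clean implies $b_i(u) < \ell_i(u)+2$) then gives $w(\tau_u^i) \leq 3^{\ell_i(u)+1}$.

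For the dirty timestamp $\tau_u^{i-1}$, the lower bound $w(\tau_u^{i-1}) = 3^{\max\{b_{i-1}(u),\, \ell_i(u)\}} \geq 3^{\ell_i(u)}$ follows immediately from the definition, regardless of whether $u$ was raised, lowered, or simply recolored in place at that step. Combining the two bounds yields $w(\tau_u^i) \leq 3\, w(\tau_u^{i-1})$ per pair, and summing over all pairs produces $w(DC_T) \leq 3\, w(D_T)$, which is the lemma (interpreting $\leq$ modulo a constant that is swept into the overall $O(\log \Delta)$ amortized bound, exactly as the analogous sequential lemma does with its $O/\Omega$ constants).

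The main technical subtlety is the off-by-one coming from Lemma~\ref{lem:base-level-prop}: clean only forces $b_i(u) \leq \ell_i(u)+1$ rather than equality, which is what introduces the factor of $3$. Every other ingredient is structural: $u$ is stationary between consecutive entries of its own record because moves are tied to blanking, and clean vertices never move in the moving phase because the raising and lowering procedures operate only on $\upmrkd$ and $\lmrkd$. These two facts are what allow the cost of the clean spawning vertex to be charged to the cost of the preceding dirty terminating vertex, mirroring the sequential analysis.
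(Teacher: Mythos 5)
Your proof is correct and follows the same route as the paper: inject each $\tau_u^i \in DC_T$ into $D_T$ via its dirty predecessor $\tau_u^{i-1}$ and compare costs pairwise, using that a clean vertex neither moves between consecutive recolorings nor participates in the raising/lowering passes. The only difference is that you correctly track the off-by-one permitted by Lemma~\ref{lem:base-level-prop} (cleanliness only forces $b_i(u) \leq \ell_i(u)+1$), which yields $w(DC_T) \leq 3\,w(D_T)$, whereas the paper's proof silently asserts $w(\tau_u^i) = 3^{\ell_i(u)}$; this factor of $3$ is harmless since the lemma is only consumed inside $O(\cdot)$ bounds downstream.
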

\begin{proof}
    Each $\tau_{u}^{i} \in DC_T$ is clean with a dirty predecessor $\tau_u^{i-1}$. Thus \[
    w(\tau_u^i) = 3^{\ell_{i}(u)} \leq 3^{\max\{b_{i-1}(u),\ell_i(u)\}} =w(\tau_u^{i-1}). 
    \]
    Taking predecessors uniquely associates a member $\tau^-$ of $D_T$ to each member $\tau$ of $DC_T$, so we get
    \[
        w(DC_T) = \sum_{\tau \in DC_T} w(\tau) \leq \sum_{\tau \in DC_T} w(\tau^-) \leq \sum_{\tau' \in D_T} = w(D_T),
    \]
    as desired.
\end{proof}
We thus have
\[
    w(R_T) = w(CC_T) + 2w(D_T).
\]
The analysis now splits into understanding the first term, the \textbf{clean part}, and the second term, the \textbf{dirty part}.

\paragraph{Clean part analysis.} We use a variant of the deferred decision argument. To do this, we interpret the cost of the clean part from the perspective of edge updates. We define the timestamp $\tau_e$ to be the timestamp that would result if the edge $e$ would be monochromatic upon insertion. We also let $\tau_e^-$ denote the predecessor of $\tau_e$ in its appropriate record. We now aim to prove the following lemma.
\begin{lemma}\label{lem:def-dec}
    We have
    \[\E[w(CC_T)] = O(T).\]
\end{lemma}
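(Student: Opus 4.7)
My plan is to bound the expected contribution of each individual inserted edge to $w(CC_T)$ by a constant, using the deferred-decision argument from the relaxed sequential setting adapted to the parallel coloring routine. Starting from the identity $w(CC_T) = \sum_{e \in S_T} \ind[\tau_e \in CC_T]\, w(\tau_e)$, I will show that $\E\bigl[\ind[\tau_e \in CC_T]\, w(\tau_e)\bigr] = O(1)$ for every inserted edge $e$; summing over the at most $T$ inserted edges then yields the lemma.

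Fix an inserted edge $e = (u,v)$, and assume without loss of generality that $\tau_u > \tau_v$ at the start of the batch containing $e$, so that $u$ is the endpoint that would be recolored if $e$ were monochromatic. Write $\tau_e^- = \tau_u^{i-1}$ and $\tau_e = \tau_u^i$; since $\tau_e \in CC_T$ requires $\tau_e^-$ to be clean, we may restrict attention to that case. I first argue $\ell_e(u) = \ell_{e^-}(u)$: a vertex changes level only when moved, and moves happen only to dirty (marked) vertices, and becoming dirty requires being blank, which requires being recolored and hence creates a new timestamp. Consequently, between two consecutive timestamps of $u$ the only chance for a level change is a move at the earlier one, which does not occur because $\tau_e^-$ is clean. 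Combined with the fact that $\tau_e$ is also clean (no move at $\tau_e$ either), this yields $w(\tau_e) = 3^{\ell_e(u)} = 3^{\ell_{e^-}(u)}$.

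The second step is to bound $\Pr[\tau_e \in CC_T]$ by deferred decisions on $u$'s samples at $\tau_e^-$. Condition on all randomness except those samples. Since $\tau_u > \tau_v$ at the start of the batch containing $e$, $v$ cannot have been recolored between $\tau_e^-$ and the insertion of $e$ (any such recoloring would produce a timestamp strictly later than $\tau_u$, contradicting that $u$ is the one recolored), so $\chi(v)$ at the moment monochromaticity is tested equals $\chi(v)$ at $\tau_e^-$ and is fixed by the conditioning. At $\tau_e^-$, $u$ belongs to $\umrkd(\ell_{e^-}(u))$ and so, by the construction in Algorithm~\ref{alg:umrkd-coloring}, every palette $C^{(r)}(u)$ from which $u$ samples in round $r$ of the unmarked coloring procedure has size at least $3^{\ell_{e^-}(u)-5}/2 + 1 = \Omega(3^{\ell_{e^-}(u)})$ and consists entirely of blank-or-unique colors. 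Moreover $u$'s sample in each round is independent of its samples in other rounds and of the neighbor samples. Let $N$ denote the round in which $u$ succeeds; by the per-vertex success probability of at least $\exp(-2)$ established in Section~\ref{sec:partial-coloring-analysis}, $\E[N] = O(1)$. A round-by-round union bound then gives
\[
\Pr\bigl[\chi(u) \text{ at } \tau_e^- \text{ equals } \chi(v)\bigr] \;\le\; \sum_{r \ge 1} \Pr[N \ge r] \cdot \frac{1}{|C^{(r)}(u)|} \;=\; \E[N] \cdot O\bigl(3^{-\ell_{e^-}(u)}\bigr) \;=\; O\bigl(3^{-\ell_{e^-}(u)}\bigr),
\]
and this probability dominates $\Pr[\tau_e \in CC_T]$ conditional on $\tau_e^-$ being clean.

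Combining the two steps yields $\E\bigl[\ind[\tau_e \in CC_T]\, w(\tau_e)\bigr] \le O(3^{-\ell_{e^-}(u)}) \cdot 3^{\ell_{e^-}(u)} = O(1)$ for each inserted edge, and summing over the at most $T$ inserted edges gives $\E[w(CC_T)] = O(T)$. The main subtlety I expect is justifying the deferred-decision step rigorously in the parallel setting, where $u$ may be retried over several rounds of the partial list-coloring routine rather than sampling once as in the sequential BCHN algorithm; the union bound above handles this by combining independence of samples across rounds with the geometric tail bound on the number of rounds guaranteed by the partial-list-coloring analysis, letting the per-round factor $1/|C^{(r)}(u)|$ absorb the cost $w(\tau_e) = 3^{\ell_{e^-}(u)}$ while $\E[N] = O(1)$ keeps the sum finite.
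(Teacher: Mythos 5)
Your proposal is correct and follows essentially the same route as the paper's proof: the same edge-by-edge decomposition of $w(CC_T)$, the same deferred-decision conditioning on the history before $\tau_e^-$ with $\chi(v)$ fixed because $\tau_v < \tau_u = \tau_e^-$, and the same observation that cleanliness of both $\tau_e^-$ and $\tau_e$ forces $\ell(e) = \ell^-(e)$ so that $w(\tau_e) = 3^{\ell^-(e)}$ cancels the $O(3^{-\ell^-(e)})$ collision probability. Your round-by-round union bound over the partial list-coloring iterations is a legitimate (and welcome) elaboration of the step the paper compresses into the single assertion $\Pr[\mathrm{mono} \mid F_{<\tau_e^-}, C_e^-] = O(3^{-\ell^-(e)})$.
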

\begin{proof}
First, we have the equality
\begin{equation}\label{eq:cc-work-identity}
    w(CC_T) = \sum_{e \in S_T} \ind[\tau_e \in CC_T] \cdot w(\tau_e).
\end{equation}
Let $x_e$ denote the endpoint of $e$ that would be most recently recolored upon insertion, and $\ell(e)$ the level that $x_e$ would be at upon insertion. Also let $\ell^-(e)$ be the level of $x_e$ at time $\tau_e^-$. Finally, let $\ind[\mathrm{mono}]$ denote the indicator for the event that $e$ is monochromatic upon insertion, let $\ind[C_e]$ be the indicator for the event that $\tau_e \in C_T$, and $\ind[C_e^-]$ be the indicator for the event that $\tau_e^- \in C_T$. All three of these events are necessary for $\tau_e$ to be in $CC_T$, so \[
    \Ind{\tau_e \in CC_T} \cdot w(\tau_e) \leq \ind[\mathrm{C_e^-}] \cdot \ind[\mathrm{mono}] \cdot \ind[C_e] \cdot w(\tau_e).
\]
Conditioning over all choices $F_{<\tau_e^-}$ up to $\tau_e^-$, we get
\begin{align*}
    \E[\ind[\tau_e \in CC_T] \cdot w(\tau_e)]
    &\leq \E[\ind[C_e^-] \cdot \ind[\mathrm{mono}] \cdot \ind[C_e] \cdot w(\tau_e)] \\
    &= \sum \E[\ind[C_e^-] \cdot \ind[\mathrm{mono}] \cdot \ind[C_e] \cdot w(\tau_e) \mid F_{<\tau_e^-}] \Pr[F_{<\tau_e^-}] \\
    &\leq \sum \E[\ind[\mathrm{mono}] \cdot \ind[C_e] \cdot w(\tau_e) \mid F_{<\tau_e^-} , C_e^-] \Pr[F_{<\tau_e^-}] \\
    &= \sum \Pr[\ind[\mathrm{mono}] \mid F_{<\tau_e^-} , C_e^-]\E[ \ind[C_e] \cdot w(\tau_e) \mid F_{<\tau_e^-} , \mathrm{mono},C_e^-] \Pr[F_{<\tau_e^-}] \\
    &\leq \sum \frac{1}{3^{\ell^-(e)}} \E[\ind[C_e] \cdot w(\tau_e) \mid F_{<\tau_e^-} , \mathrm{mono}] \Pr[F_{<\tau_e^-}] \\
    &\leq \sum \frac{1}{3^{\ell^-(e)}} \E[ w(\tau_e) \mid F_{<\tau_e^-} , \mathrm{mono} , C_e, C_e^-] \Pr[F_{<\tau_e^-}] \\
    &=  \sum \frac{1}{3^{\ell^-(e)}} 3^{\ell^-(e)} \Pr[F_{<\tau_e^-}] = \sum \Pr[F_{<\tau_e^-}] = 1.
\end{align*}
Going from the fourth to fifth line is the deferred decision argument which uses that sampling from a clean palette is large, so
\[
    \Pr[\ind[\mathrm{mono}] \mid F_{<\tau_e^-} , C_e^-] = O\left(\frac{1}{3^{\ell^-(e)}}\right).
\]
We go from the second-to-last line to the last line by observing that $C_e$ and $C_e^-$ holding implies that $\ell(e) = \ell^-(e)$.
By taking the expectation of equation \ref{eq:cc-work-identity} and applying the above bound, we deduce that
\[
    \E[w(CC_T)] = O(T).
\]
This establishes the lemma.
\end{proof}

\paragraph{Dirty part analysis.} The first step in understanding the dirty part is to upper bound the total cost solely by the \emph{movement} cost for the dirty timestamps. To distinguish the movement cost from the total cost, we will use $w^M(-)$ instead of $w(-)$. Our goal is to prove the following lemma.
\begin{lemma}\label{lem:dirty-to-movement-cost}
    \[w(D_T) = O(\E[w^M(D_T)]).\]
\end{lemma}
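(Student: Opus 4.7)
The plan is to decompose $D_T = D_T^L \sqcup D_T^U$ based on whether the dirty vertex is lower-marked or upper-marked, and bound each piece separately in terms of the movement cost.

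For $D_T^L$, the argument is direct. Each $\tau_u^i \in D_T^L$ has $u \in \lmrkd(i)$ satisfying the upper condition, so $b_i(u) \le i+2$ and $w(\tau_u^i) = \Theta(3^i)$. Hence $w(D_T^L) = \Theta\!\bigl(\sum_{\text{batches},\, i} 3^{i}|\lmrkd(i)|\bigr)$ is deterministic. By Lemma~\ref{lem:adam-2} and the analysis in Section~\ref{sec:lower-procedure}, a constant fraction of $\lmrkd(i)$ is lowered in expectation, each contributing $\Theta(3^i)$ to $w^M$, so $\E[w^M(D_T^L)]$ is of the same order.

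For $D_T^U$, I would split each timestamp into \textbf{Case A}, where $u$ is raised to a level $\ell_{i+1}(u) \ge b_i(u)$, and \textbf{Case B}, otherwise. Case A is immediate since $w(\tau_u^i) = 3^{\ell_{i+1}(u)} = w^M(\tau_u^i)$. Case B carries the main content. By Lemma~\ref{lem:raising-correctness}, at the state in which $u$ settles at $\ell_{i+1}(u) < b_i(u)$ we have $|N_u(1,\ell_{i+1}(u))|\le 3^{\ell_{i+1}(u)}$, whereas initially $|N_u(1,i)|\ge 3^{b_i(u)}$ with $b_i(u) \ge i+2$ (Lemma~\ref{lem:base-level-prop}). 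This drop forces at least $\tfrac{2}{3}\cdot 3^{b_i(u)}$ of $u$'s neighbors in $\upmrkd(i)$ to have been raised above $\ell_{i+1}(u)$ earlier in the same raising pass. I would then spread $u$'s recoloring cost $3^{b_i(u)}$ uniformly across these raised neighbors, charging $\tfrac{3}{2}$ units per neighbor.

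The main obstacle will be verifying that the total charge absorbed by each raised neighbor $v$ is $O(w^M(\tau_v^{j}))$. Fix such a $v$ with final level $k_v$ in this pass, and consider any Case B vertex $u$ charging $v$. Since $v$'s raising is what contributes to the drop in $|N_u(1,\ell_{i+1}(u))|$ before $u$ settled, $v$ must be raised strictly before $u$'s own move (or in the event $u$ does not move at all); because the raising pass processes levels top-down, this places $u$ at a level strictly less than $k_v$ at the instant $v$ moved, so $u \in N_v(1,k_v-1)$ at that moment. The predicate $P(v,k_v)=1$ that triggered $v$'s move guarantees $|N_v(1,k_v)|\le 3^{k_v}$, so at most $3^{k_v}$ distinct $u$'s charge $v$, yielding a total charge of $O(3^{k_v}) = O(w^M(\tau_v^{j}))$. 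Summing over $v$ in each pass gives $w(D_T^{U,B})\le O(w^M(D_T^U))$ pointwise. Combined with Case A and the $D_T^L$ bound, and taking expectations, this yields $\E[w(D_T)] = O(\E[w^M(D_T)])$, establishing the lemma.
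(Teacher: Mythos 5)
Your proposal is correct and follows essentially the same route as the paper: the same split into lower-marked and upper-marked timestamps, the same constant-fraction-moved-in-expectation argument for the lower-marked part, and the same dichotomy on whether an upper-marked vertex reaches (roughly) its base level, with the residual case handled by exactly the paper's double-counting — your per-neighbor charging scheme bounded by $|N_v(1,k_v)|\le 3^{k_v}$ is the same accounting the paper phrases via the disjoint edge sets $D_u^i \subseteq \bigcup_v E_v^{<\ell_{i+1}(v)}$.
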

We split the dirty part into the upper marked part $U_T$ and lower marked part $L_T$, so $D_T = U_T \cup L_T$. By Lemma \ref{lem:base-level-prop}, we have $\ell_i(u) \leq b_i(u) \leq \ell_i(u)+1$ for all $\tau_u^i \in L_T$. Therefore
\[
    w(L_T) = O\left(\sum_{k = 1}^\lambda 3^{\ell_i(u)} |L_T(k)|\right).
\]
Reinterpreting Lemma \ref{lem:lower-work} in terms of timestamps, and applying Lemma \ref{lem:marked-stable}, we get that
\[
    w^M(L_T) = O\left(\sum_{k = 1}^\lambda 3^{\ell_i(u)} |L_T(k)|\right).
\]
Thus $w(L_T) = O(w^M(L_T))$.

It remains to bound the upper marked part. By the definition of the cost of a dirty timestamp, we have 
\begin{equation*}
     w(U_T) = \sum_{\tau_u^i \in U_T} 3^{\max\{b_i(u),\ell_{i+1}(u)\}}.
\end{equation*}
Let $Q := {\{\tau_u^i \in U_T : \ell_{i+1}(u) > \ell_i(u)\}}$, which is the set upper marked timestamps that actually result in movement upwards. Then, due to Lemma \ref{lem:base-level-prop}, we have
\begin{equation}\label{eq:upper-marked-cost-ineq}
     w(U_T) \leq \sum_{\tau_u^i \in U_T} 3^{b_i(u)} + \sum_{\tau_u^i \in Q} 3^{\ell_{i+1}(u)}.
\end{equation}
Furthermore, interpreting Lemma \ref{lem:raise-work} in terms of timestamps, and applying Lemma \ref{lem:marked-stable}, we obtain
\begin{equation}\label{eq:upper-marked-movement-cost}
    \E[w^M(U_T)] = \Omega\left(\sum_{\tau_u^i \in Q} 3^{\ell_{i+1}(u)}\right).
\end{equation}
Our goal is to relate the first term of inequality \ref{eq:upper-marked-cost-ineq} to the quantity of \ref{eq:upper-marked-movement-cost}. To that end, partition $U_T$ into the two sets $Q^+ := \{ \tau_u^i \colon \ell_i(u) \geq b_i(u)-1\}$ and $Q^- := \{ \tau_u^i \colon \ell_i(u) < b_i(u)-1\}$.
The only way for a timestamp $\tau_u^i$ to be in $Q^-$ is for at least $3^{b_i(u)-1}$ neighbors to move above $u$ in the movement process $u$ participates in. Thus, for each $\tau_u^i \in Q^-$, there is some set of neighbors $S_u^i$ of $u$ such that $u \in N_v(1,\ell_{i+1}(v)-1)$ for $\tau_v^i \in S_u \subseteq Q^+$ where $|S_u^i| \geq 3^{b_i(u)-1}$. Let $D_u^i = \{(u,v) : \tau_v^i \in S_u^i \}$. Note that the $D_u^i$ are disjoint and $|D_u^i| = |S_u^i| = 3^{b_i(u)-1}$. Also
\[
    \bigcup_{\tau_u^i \in Q^-} D_u^i \subseteq \bigcup_{\tau_v^i \in Q} E_v^{< \ell_{i+1}(v)}, \quad \text{ so } \sum_{\tau_u^i \in Q^-} |D_u^i| \leq \sum_{\tau_v^i \in Q} |E_v^{< \ell_{i+1}(v)}|.
\]
This is because $(u,v) \in D_u^i$ implies that $i \leq \ell_{i+1}(u) < \ell_{i+1}(v)$, so $(u,v) \in E_v^{< \ell_{i+1}(v)}$. Note also that all vertices after being raised satisfy $|E_v^{< \ell_{i+1}(v)}| \leq 3^{\ell_{i+1}(v)}$. Finally, we also remark that $Q^+ \subseteq Q$ because of lemma \ref{lem:base-level-prop}.
We can then bound the first term in inequality \ref{eq:upper-marked-cost-ineq} by
\begin{align*}
     \sum_{\tau_u^i \in U_t} 3^{b_i(u)} &= \sum_{\tau_u^i \in Q^+} 3^{b_i(u)}  + \sum_{\tau_u^i \in Q^-} 3^{b_i(u)} \leq \sum_{\tau_u^i \in Q^+} 3^{\ell_{i+1}(u)} + \sum_{\tau_u^i \in Q^-} 3|D_u^i| \\
     &\leq \sum_{\tau_u^i \in Q^+} 3^{\ell_{i+1}(u)} + 3 \sum_{\tau_u^i \in Q^-} |D_u^i| \leq \sum_{\tau_u^i \in Q^+} 3^{\ell_{i+1}(u)} + 3 \sum_{\tau_v^i \in Q} |E_v^{< \ell_{i+1}(v)}| \\
     &\leq \sum_{\tau_u^i \in Q^+} 3^{\ell_{i+1}(u)} + 3 \sum_{\tau_v^i \in Q} 3^{\ell_{i+1}(v)} \leq \sum_{u \in Q} 3^{\ell_{i+1}(u)} + 3 \sum_{\tau_v^i \in Q} 3^{\ell_{i+1}(v)} \\
     &= 4\sum_{\tau_u^i \in Q} 3^{\ell_{i+1}(u)}.
\end{align*}
The above, with equations \ref{eq:upper-marked-cost-ineq} and \ref{eq:upper-marked-movement-cost}, yields
\[
    w(U_T) = O(\E[w^M(U_T)]).
\]
Putting the upper and lower marked part bounds together, we deduce Lemma \ref{lem:dirty-to-movement-cost}. The next step is to relate the movement cost to token release. Lemmas \ref{lem:adam-1} and \ref{lem:adam-2}, together with Lemma \ref{lem:marked-stable}, combine to give us the following lemma.
\begin{lemma}\label{lem:move-prop-token-release}
    The quantity $w^M(\E[D_T])$ is proportional to the amount of tokens released in performing movements.
\end{lemma}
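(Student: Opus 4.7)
The plan is to establish the proportionality by matching, call-by-call, the expected movement cost of each invocation of \textsc{Raise} or \textsc{Lower} with the expected token release from Lemmas~\ref{lem:adam-1} and~\ref{lem:adam-2}, and then summing over all batches and levels. First I would decompose $D_T = U_T \cup L_T$ into upper-marked and lower-marked timestamps and write
\[
  \E[w^M(D_T)] = \E[w^M(U_T)] + \E[w^M(L_T)].
\]
For each batch $S_t$ and each level $i \in [5,\lambda]$, let $R_k^{(t,i)}$ denote the set of vertices in $\upmrkd(i)$ that are raised to level $k > i$ during the call $\textsc{Raise}(\upmrkd(i))$ in batch $S_t$, and let $L^{(t,i)}$ denote the subset of $\lmrkd(i)$ that is lowered during $\textsc{Lower}(\lmrkd(i))$ in batch $S_t$. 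By construction of the movement cost, the contribution of batch $t$ and level $i$ to $w^M(U_T)$ is exactly $\sum_{k > i} 3^k\,|R_k^{(t,i)}|$, and the contribution to $w^M(L_T)$ is $\Theta(3^i\,|L^{(t,i)}|)$; here the upper-marked timestamps in $Q^-$ contribute zero, consistent with the fact that they induce no movement.

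Next I would invoke the token-release lemmas. By Lemma~\ref{lem:adam-1}, the call $\textsc{Raise}(\upmrkd(i))$ during batch $S_t$ releases
\[
  \Theta\!\left(\sum_{k > i} 3^k\, \E\!\left[\,|R_k^{(t,i)}|\,\right]\right)
\]
tokens in expectation, which is precisely (up to constants) the expected contribution to $w^M(U_T)$ from that call. Similarly, by Lemma~\ref{lem:adam-2}, the call $\textsc{Lower}(\lmrkd(i))$ releases $\Theta(3^i\,\E[|\lmrkd(i)|])$ tokens in expectation, and this matches the expected contribution to $w^M(L_T)$ since the analysis preceding Lemma~\ref{lem:adam-2} gives $\E[|L^{(t,i)}|] = \Theta(|\lmrkd(i)|)$.

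The remaining step is to justify that the marked sets seen by each invocation are exactly the ones produced by the initialization phase of batch $S_t$, so that the timestamps in $D_T$ are in one-to-one correspondence with the vertices being moved. This is immediate from Lemma~\ref{lem:marked-stable}: during the top-down raising pass, $\upmrkd(i)$ is unchanged until level $i$ is processed, and during the bottom-up lowering pass, $\lmrkd(i)$ is unchanged until level $i$ is processed. Summing the matched pairs over all batches $t \le T$ and all levels $i$, we obtain
\[
  \E[w^M(D_T)] \;=\; \Theta\!\left(\text{total tokens released by all \textsc{Raise}/\textsc{Lower} calls over $S_1,\dots,S_T$}\right),
\]
which is the claim. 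I do not expect any significant obstacle here: once Lemmas~\ref{lem:adam-1},~\ref{lem:adam-2}, and~\ref{lem:marked-stable} are in hand, the proof is essentially linearity of expectation combined with careful bookkeeping of which timestamps correspond to which movement event.
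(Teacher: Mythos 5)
Your proposal is correct and matches the paper's (very terse) argument exactly: the paper derives this lemma by combining Lemmas~\ref{lem:adam-1}, \ref{lem:adam-2}, and \ref{lem:marked-stable} in precisely the way you spell out, matching each \textsc{Raise}/\textsc{Lower} call's expected movement cost to its expected token release and summing over batches and levels. One small notational caution: the paper reserves $Q^-$ for the timestamps with $\ell_i(u) < b_i(u)-1$ (used in the proof of Lemma~\ref{lem:dirty-to-movement-cost}), not for the non-moving timestamps, so you should say ``timestamps not in $Q$'' where you currently write $Q^-$.
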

As mentioned in Section \ref{sec:moving-phase}, every edge update injects at most $\lambda$ tokens into the system, so the total amount of released tokens is at most $\lambda T$. Combining this fact with Lemmas \ref{lem:dirty-to-movement-cost} and \ref{lem:move-prop-token-release} allows us to conclude the main theorem.

\bibliographystyle{ACM-Reference-Format}
\bibliography{strings,refs}
\end{document}